\def \bezier {B{\'e}zier~}
\begin{document}

\title{Fast and Robust Point Containment Queries on Trimmed Surface}


\author{Anchang Bao}
\affiliation{%
  \institution{Tsinghua University}
  \city{Beijing}
  \country{China}
}
\email{baoanchang02@gmail.com}

\author{Enya Shen}
\affiliation{%
  \institution{Tsinghua University}
  \city{Beijing}
  \country{China}
  }
\affiliation{%
  \institution{Haihe Lab of ITAI}
  \city{Tianjin}
  \country{China}
  }

\email{shenenya@tsinghua.edu.cn}

\author{Jianmin Wang}
\affiliation{%
  \institution{Tsinghua University}
  \city{Beijing}
  \country{China}
  }
\email{jimwang@tsinghua.edu.cn}

\begin{abstract}
  Point containment queries on trimmed surfaces are fundamental to CAD modeling, solid geometry processing, and surface tessellation. Existing approaches—such as ray casting and generalized winding numbers—often face limitations in robustness and computational efficiency.

  We propose a fast and numerically stable method for performing containment queries on trimmed surfaces, including those with periodic parameterizations. Our approach introduces a recursive winding number computation scheme that replaces costly curve subdivision with an ellipse-based bound for Bézier segments, enabling linear-time evaluation. For periodic surfaces, we lift trimming curves to the universal covering space, allowing accurate and consistent winding number computation even for non-contractible or discontinuous loops in parameter domain.

  Experiments show that our method achieves substantial speedups over existing winding-number algorithms while maintaining high robustness in the presence of geometric noise, open boundaries, and periodic topologies. We further demonstrate its effectiveness in processing real B-Rep models and in robust tessellation of trimmed surfaces.
\end{abstract}

\begin{CCSXML}
<ccs2012>
   <concept>
       <concept_id>10010147.10010371.10010396.10010399</concept_id>
       <concept_desc>Computing methodologies~Parametric curve and surface models</concept_desc>
       <concept_significance>500</concept_significance>
       </concept>
 </ccs2012>
\end{CCSXML}

\ccsdesc[500]{Computing methodologies~Parametric curve and surface models}

\keywords{Trimmed surface, containment query, winding number}

\received{20 February 2007}
\received[revised]{12 March 2009}
\received[accepted]{5 June 2009}

\newcommand{\R}{\mathbb{R}}

\begin{teaserfigure}
  \centering
  \includegraphics[width=\textwidth]{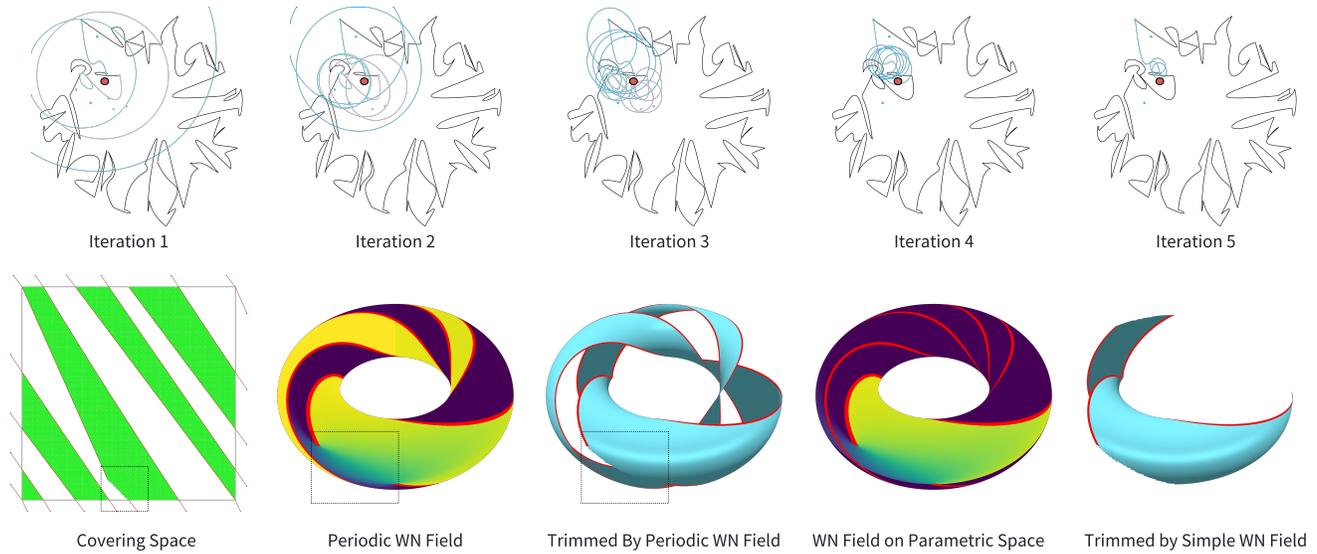}  
  \caption{We proposed an efficient winding number computation method for parametric curves and generalize to the periodic domain by lifting curves to the universal covering space. With this proposed method, we developed fast and robust point containment queries on trimmed surfaces. \textbf{First row}: Illustration of the winding number computation method described in section \ref{sec:method}. The red point is the testing point and the colored curves are ellipses used in our algorithms. Different from the \citet{spainhour_robust_2024}, we utilize a more flexible ellipse bound, enabling linear time \bezier curves evaluation and $O(1)$ inclusion test. \textbf{Second row}: By lifting curves to the universal covering space, we correctly compute the winding number on the periodic domain. With the proposed method, we can handle containment queries on the periodic trimmed surface robustly. Given open boundary curves, we present the winding number field calculated in covering space and parametric space respectively.}
  \label{fig:graphical_abstract}
\end{teaserfigure}

\maketitle

\section{Introduction}

Trimmed surfaces play a central role in free-form modeling and are commonly represented using Boundary Representation (B-Rep), the industry standard for free-form solids~\cite{pratt2001introduction}.
Unlike ordinary spline patches, a trimmed surface is defined by an underlying geometric surface and a set of boundary curves that restrict its visible region.
Processing boundary-represented solids lies at the core of free-form modeling, yet one key limitation of B-Rep is the absence of explicit containment relations between solids.
As a result, containment tests become essential whenever inclusion relations are required.
A fundamental instance of this task is determining whether a point on the underlying geometric surface lies within its trimmed region—known as a containment query on a trimmed surface.
Figure~\ref{fig:torus_point_in_face} illustrates containment queries on a trimmed torus, a problem with broad applications in computer graphics, including CAD model processing, B-Rep ray tracing, and tessellation of trimmed surfaces.

\begin{figure}[t]
\centering
\includegraphics[width=0.6\linewidth]{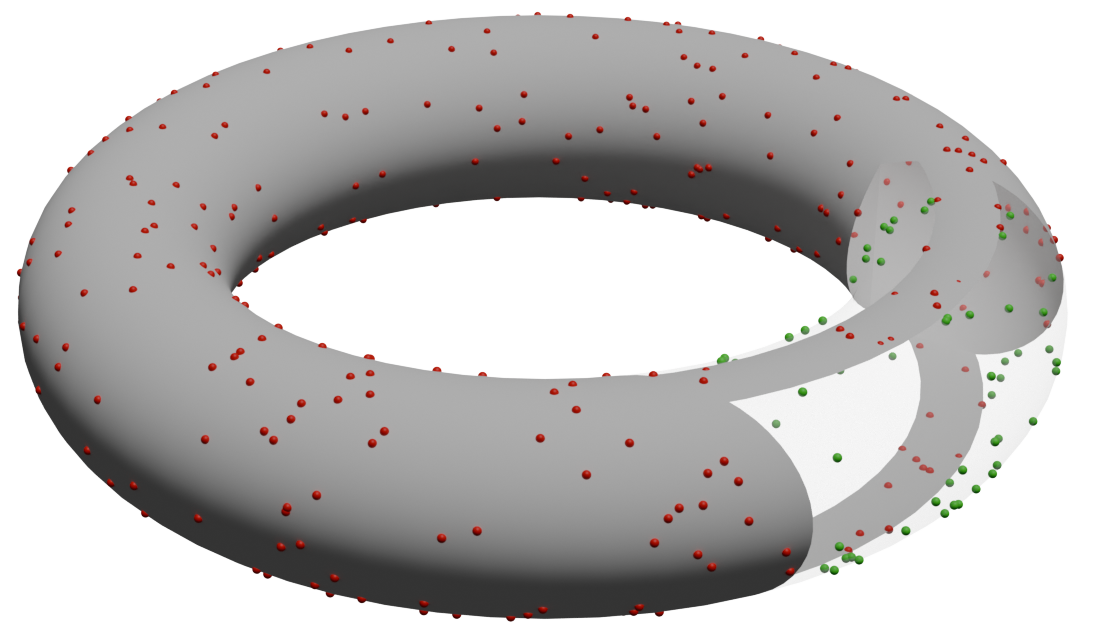}
\caption{Containment queries on a trimmed torus. The opaque region represents the retained surface, while the translucent part is trimmed away. Points on the torus are randomly sampled; those inside the trimmed region are marked in red, and those outside are shown in green.}
\label{fig:torus_point_in_face}
\end{figure}

Containment queries arise in a variety of CAD and modeling contexts.
During Boolean operations between parametrically represented solids, numerous intersection edges are generated (Figure~\ref{fig:topological_intersection}).
Curve segments that lie on the underlying geometric surface but outside the trimmed region must be removed, which requires point–surface containment tests.
Similarly, in ray-traced rendering of B-Rep models, rays may intersect the underlying surface at positions that must be verified as belonging to the trimmed region.
Containment queries are also integral to trimmed-surface tessellation, where sample points in the parameter domain are classified as inside or outside the trimmed region before rendering.

In B-Rep, curves on surfaces are represented by a pair of mappings: a curve in physical space and its corresponding projection in the parameter domain, known as a \textbf{p-curve}.
With p-curves, containment testing on non-periodic surfaces reduces to a 2D inside–outside test in the parameter domain.
Ray-casting methods are commonly used for this purpose, but they are prone to corner cases and numerical instability.
Winding-number–based methods, in contrast, are more robust but computationally expensive.

Containment queries on periodic domains are particularly challenging because the underlying topology alters how boundaries are represented.
For example, when the geometric surface is cylindrical (Figure~\ref{fig:periodic_failed}), the trimmed region is defined by separation loops.
In such cases, the p-curves may not form closed loops in parameter domain, making standard trimming techniques such as ray casting or direct winding-number evaluation inadequate.

In this paper, we present a fast and robust framework for containment queries on trimmed surfaces.
To ensure robustness against geometric and numerical errors, we adopt the winding number as the criterion for inside–outside classification.
We introduce an efficient acceleration structure for winding-number evaluation based on ellipse bounds of Bézier curves, and extend the formulation to periodic domains by lifting the computation to the universal covering space.

Our main contributions are as follows:

\begin{itemize}
\item We introduce a novel recursive scheme that accelerates winding-number evaluation by bounding Bézier segments with ellipse regions, avoiding $O(n^2)$ curve subdivision and achieving linear-time complexity.
\item We formulate containment queries on periodic surfaces in the universal covering space, enabling robust computation for loops that are non-contractible or discontinuous in parameter domain.
\item We demonstrate the use of our algorithm for robust tessellation of trimmed surfaces and interactive design of complex periodic geometries.
\end{itemize}

The remainder of this paper is organized as follows.
Section~\ref{sec:related_work} reviews previous work on containment queries and curve evaluation.
Section~\ref{sec:background} presents the preliminaries.
Section~\ref{sec:method} details the proposed winding-number computation and its extension to periodic domains.
Section~\ref{sec:experimentation} reports experimental comparisons and applications.
Section~\ref{sec:future} discusses limitations and directions for future work, and Section~\ref{sec:conclusion} concludes the paper.

\begin{figure}[t]
\centering
\includegraphics[width=0.8\linewidth]{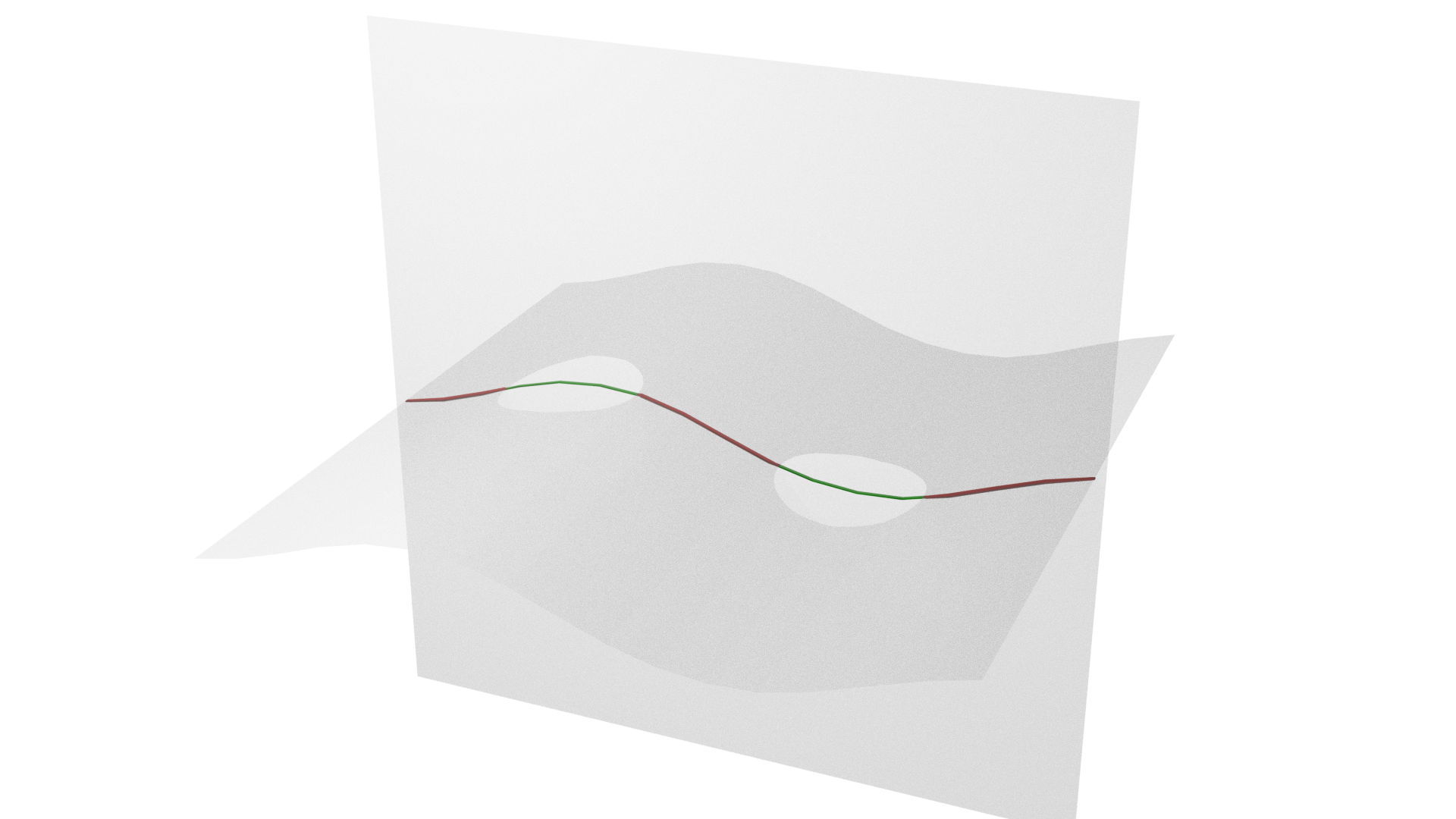}
\caption{Topological intersection between two faces. Extraneous portions of the intersection curve must be removed using point–surface containment queries.}
\label{fig:topological_intersection}
\end{figure}
\section{Related Work}
\label{sec:related_work}

\subsection{Containment Queries for Non-Parametric Boundaries}

We first review containment query methods developed for \emph{non-parametric} geometries such as line segments, polygons, and triangular meshes. These problems have been extensively studied in the computer graphics community, primarily through two families of techniques: ray casting and winding number computation.

\paragraph{Ray casting.}
Ray casting is the most widely used approach for point-in-solid classification, both in two-dimensional polygons and three-dimensional meshes. The method shoots a ray from the test point in a fixed direction and counts intersections with the object boundary. By the Jordan curve theorem, an odd number of crossings indicates an interior point, while an even number indicates an exterior one. 

Ray casting can be accelerated through spatial data structures such as KD-trees and bounding volume hierarchies (BVH)~\cite{rubin19803,goldsmith1987automatic}. For rendering purposes, modern frameworks also support GPU acceleration and parallel ray traversal~\cite{parker2010optix}.

Despite its success in photo-realistic rendering, ray casting suffers from robustness issues in geometry processing. It fails for open or non-manifold boundaries, since the Jordan theorem applies only to closed curves. In practice, ray-casting failures often occur at degenerate intersections, coincident edges, or geometric inconsistencies (Figure~\ref{fig:ray_casting}). A common workaround is to shoot multiple rays in random directions, which increases computational cost.

\paragraph{Winding number methods.}
Winding numbers provide a robust alternative to ray casting by offering a continuous measure of containment. 
\citet{jacobson_robust_2013} introduced the \emph{generalized winding number} to handle open boundaries robustly. 
\citet{barill2018fast} extended this idea to point clouds, and \citet{sun2024gauwn} defined a differentiable version based on Gaussian smoothing. 
Most relevant to our work, \citet{feng2023winding} formulated the \emph{surface winding number} (SWN) as a solution to a boundary value problem with a jump condition, enabling manifold-aware containment queries. 
However, this approach requires solving a global linear system, which is inefficient for local or single-point queries.

Compared to ray casting, winding-number-based methods are more robust in the presence of geometric errors, open boundaries, and non-manifold configurations. Spatial acceleration structures for non-parametric geometries have also been explored~\cite{jacobson_robust_2013,barill2018fast}, but analogous data structures for parametric geometries remain underdeveloped.

\begin{figure}[t]
\centering
\includegraphics[width=0.4\textwidth]{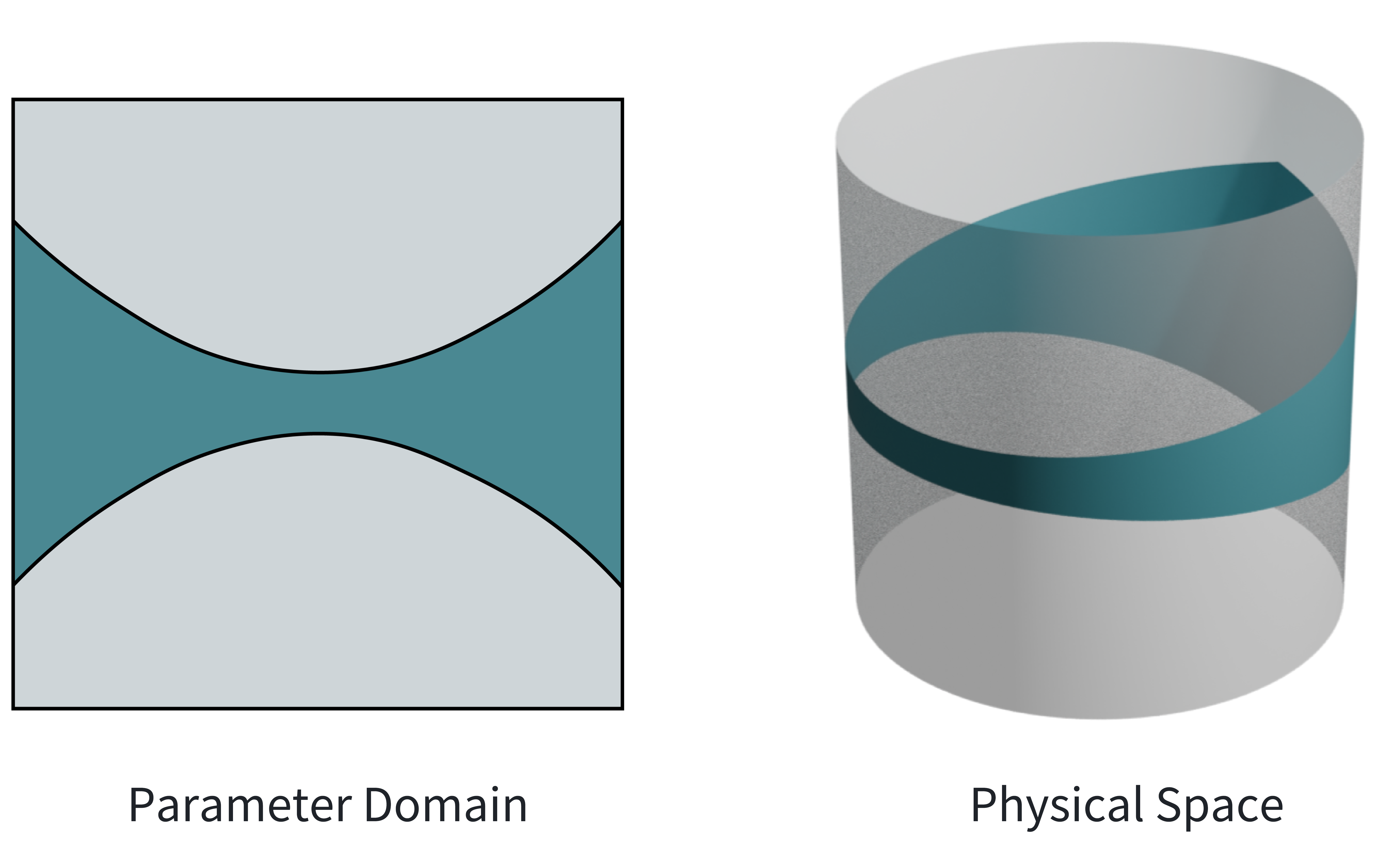}
\caption{A periodic trimmed surface defined on a cylindrical geometry. The boundary consists of two separation loops, which do not form closed loops in the parameter domain.}
\label{fig:periodic_failed}
\end{figure}

\subsection{Containment Queries for Parametric Boundaries}

We now discuss methods for containment queries involving \emph{curved parametric boundaries}, such as Bézier or NURBS curves.

\paragraph{Ray casting.}
Containment queries on trimmed surfaces are typically solved by ray-casting-based methods. 
The main computational challenge lies in computing the intersection between a ray and a curve, which reduces to finding the roots of a polynomial expressed in the Bernstein basis. 
\citet{nishita1990ray} introduced the \emph{Bézier clipping} algorithm, which exploits the control polygon to isolate root intervals. 
Subsequent works improved its convergence rate, such as quadratic clipping~\cite{bartovn2007computing} and cubic clipping~\cite{liu2009fast}. 
The current state-of-the-art is the \emph{Bézier zero factoring} method~\cite{machchhar2016revisiting}, which reduces polynomial degree by exploiting factorization properties of Bernstein polynomials, achieving superior efficiency.

\begin{figure}[t]
    \centering
    \includegraphics[width=0.45\textwidth]{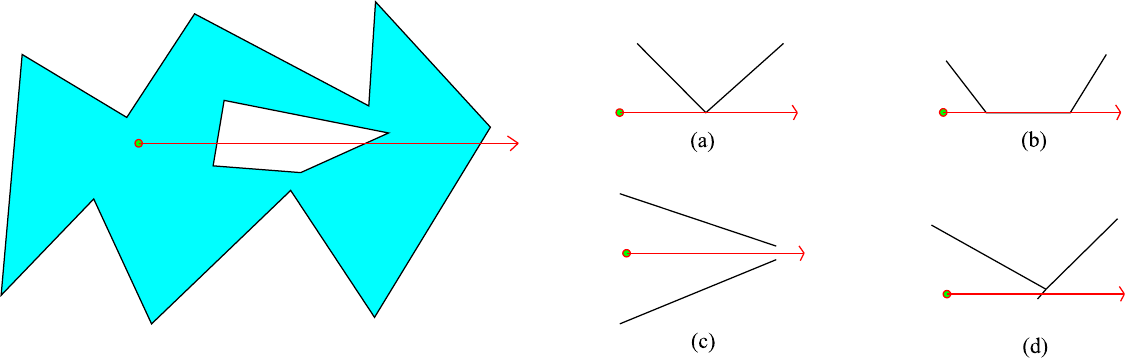}
    \caption{
    Ray casting and its failure cases. 
    Left: The classic even–odd rule determines inside/outside classification by counting ray–boundary intersections. 
    Right: Common failure cases include (a) invalid intersection points, (b) rays coincident with boundary segments, (c) open boundaries due to geometric errors, and (d) non-manifold configurations.
    }
    \label{fig:ray_casting}
\end{figure}

\paragraph{Curve hierarchical schemes.}
To accelerate repeated ray–curve queries, hierarchical structures have been developed for Bézier curves. 
Most are based on identifying monotone segments or local extrema to form spatial subdivisions. 
 
Earlier data structures include the \emph{arc tree}~\cite{gunther1990arc}, which uses arc-length parameterization, and the \emph{strip tree}~\cite{ballard1981strip}, which bounds curve segments with rectangles. 
\citet{schollmeyer_direct_2009} proposed splitting trimming curves into monotone pieces to avoid root finding. 
\citet{schollmeyer_efficient_2019} introduced KD-tree-based spatial subdivision, while \citet{sloup_optimizing_2021} refined KD-tree construction for faster point location.
For Bézier curves, recursive control-point hierarchies remain the most widely used strategy.

\paragraph{Winding number methods.}
All methods relying on crossing counts are inherently sensitive to geometric errors such as open or non-manifold boundaries. 
While repair techniques exist, such issues are still pervasive in CAD and B-Rep models. 
\citet{spainhour_robust_2024} proposed a robust containment method for parametric curves using the generalized winding number. 
Our method builds upon this line of work but introduces an \emph{ellipse bound} for faster Bézier evaluation and bound inclusion tests, as well as a complete treatment of periodic domains. 
\citet{liu2025closed} formulated winding-number computation through complex polynomial roots, and \citet{martens2025one} developed a hybrid approach combining ray casting with winding evaluation. 
\citet{spainhour2025robust} further extended winding numbers to trimmed surfaces; their focus is on determining whether a point lies inside a solid defined by trimmed surfaces, while our work focuses on classifying points \emph{on} the trimmed surface itself.

\paragraph{Periodic surfaces.}
Periodic surfaces are pervasive in CAD, including cylinders, tori, and spheres, all of which have periodic parameter domains (see Figure~\ref{fig:torus_point_in_face}). 
Such periodicity causes difficulties for both ray casting and winding-number-based methods, as trimming curves may cross domain boundaries or appear artificially open. 
To our knowledge, no prior work provides a systematic solution for containment queries on trimmed surfaces with periodic parameterizations. 
\citet{feng2023winding} partially addresses periodicity, but only for non-parametric geometries and via global linear solves.

\section{Preliminaries}
\label{sec:background}

\subsection{Boundary Representation and Trimmed Surfaces}

Boundary representation (B-Rep) is the de facto international standard~\cite{pratt2001introduction} for representing parametric CAD models.  
In B-Rep, a solid entity is defined by its boundaries.  
To represent the boundary of complex solids, flexible free-form surfaces are required, and the most common choice is trimmed non-uniform rational B-spline (NURBS) surfaces.  

A NURBS surface is described by a control net and defines a smooth mapping from a (possibly periodic) parameter domain, usually $[0,1]\times[0,1]$, to physical space $\mathbb{R}^3$.  
To represent surfaces with non-zero genus, a set of curves $\{\gamma_i\}$ is defined on the surface to trim it, yielding the concept of a \textbf{trimmed surface}.  
A trimmed surface consists of a parametric surface patch (typically a NURBS surface) and a set of boundary curves $\gamma_i$.  
The projection of these boundary curves onto the parameter domain is referred to as the \textbf{p-curves}.  

Mathematically, a trimmed surface can be completely specified by its geometric and topological data even without the p-curves.  
However, for computational efficiency in tasks such as tessellation or containment queries, p-curves are an essential component of B-Rep models.  
They can be obtained by sampling, inverting, and subsequently fitting the results with Bézier or NURBS curves.  
For further discussion on p-curve computation, we refer readers to~\citet{renner2004exact, flory2008constrained, hofer2004energy}.  
P-curves also naturally arise in surface–surface intersections computed via marching methods~\cite{barnhill1990marching}; see~\citet{marussig2018review} for a comprehensive review of intersection and trimming techniques.  

In this paper, the p-curves are represented by (rational) Bézier or NURBS curves.  
For simplicity, we focus on the Bézier case; the rational case is discussed in Appendix~\ref{app:ellipse}.  
B-spline or NURBS curves can always be subdivided into (rational) Bézier segments~\cite{farin2001curves}.

Curves on general surfaces can exhibit significant geometric complexity, but in computer graphics and CAD, the discussion is simplified by the topological constraints of spline parameterizations.  
Modern CAD systems parameterize surfaces via mappings from rectangular domains $[a,b]\times[c,d]$ to $\mathbb{R}^3$.  
Cylinder-like or torus-like surfaces are represented using periodic parameter domains.  
\citet{herron1985smooth} proved that smooth spline surfaces cannot represent shapes of genus greater than one.  
Consequently, a single spline patch is topologically equivalent to a disk, sphere, cylinder, or torus—all of which can be covered by periodic domains.  
To represent surfaces of higher genus, multiple patches must be employed.

\subsection{Generalized Winding Number}
\label{sec:wn}

The winding number has long been used in computer graphics.  
\citet{jacobson_robust_2013} introduced the \textit{generalized winding number} for robust inside/outside classification.  
For a closed curve $\gamma$ parameterized by $t\in[0,1]$ and a fixed point $p$, the winding number $\omega(p,\gamma)$ is defined as
\begin{equation*}
\omega(p,\gamma) = \frac{1}{2\pi} \int_{\gamma - p} \mathrm{d}\theta,
\end{equation*}
where $\theta$ is the polar angle with respect to the positive $x$-axis.

By definition, the winding number has an important property: it is \textit{additive}.  
If $\gamma = \gamma_1 + \gamma_2$ (defined by concatenation), then
\begin{equation}
\omega(p,\gamma) = \omega(p,\gamma_1) + \omega(p,\gamma_2).
\end{equation}

From the viewpoint of algebraic topology, the winding number is also \textit{homotopy invariant}.  
If $\gamma$ and $\bar{\gamma}$ are homotopy equivalent in $\mathbb{R}^2\setminus\{p\}$ (with the test point removed), then
\begin{equation}
\omega(p,\gamma) = \omega(p,\bar{\gamma}).
\end{equation}
Intuitively, the winding number remains unchanged under any continuous deformation of $\gamma$ that does not cross the test point $p$.  

Using this property, if $\gamma$ lies within a simply connected region $\Omega$ and $p\notin\Omega$, then
\begin{equation}
\omega(p,\gamma) = \omega(p,\gamma(0),\gamma(1)),
\end{equation}
where $\omega(p,a,b)$ denotes the winding number of the oriented segment $\overrightarrow{ab}$ with respect to $p$, computed from its angular difference.  
These two properties form the foundation of our method as well as of previous winding-number–based approaches~\cite{spainhour_robust_2024}.

\subsection{Homology of Loops on a Torus}

To precisely describe our winding-number algorithm on periodic domains and explain its correctness, we employ the language of homology.  
For readers unfamiliar with homology theory, a loop on a torus with homology coefficients $(p,q)$ can be understood geometrically as one that wraps $p$ times around one periodic direction of the surface (for example, the longitudinal direction) and $q$ times around the other (the meridional direction).  
In the corresponding \textit{universal covering space}—which unfolds the periodic surface into an infinite planar tiling—this loop no longer closes onto itself; instead, its start and end points differ by a translation of $(p,q)$ units.  
Loops with coefficients $(0,0)$ are contractible on the original surface, whereas nonzero coefficients indicate loops that cross periodic boundaries.  
For a formal and rigorous treatment of homology from the perspective of algebraic topology, we refer readers to~\citet{munkres2018elements}.

\section{Method}
\label{sec:method}

In this section, we first outline our containment query algorithm and introduce the \textit{ellipse bound}, which avoids expensive curve subdivision (Section~\ref{sec:ellipse}). 
We then discuss periodic parameterizations in CAD systems and interpret them in terms of universal covering spaces (Section~\ref{sec:covering}). 
Building on this, Section~\ref{sec:periodic_curves} establishes winding number primitives for periodically extended curves, and Section~\ref{sec:periodicity} describes the complete winding number algorithm for uni- and bi-periodic domains.

For convenience, we assume the parameter domain of the surface is $D = [0,1] \times [0,1]$, and all curve parameters vary within $[0,1]$.

\subsection{Algorithm Overview}

Given a test point $p$ and a set of trimming curves $\{\gamma_i\}$, our goal is to determine whether $p$ lies within the trimmed region. 
We do this by computing the winding number of the trimming curves with respect to $p$, and then classifying $p$ as inside or outside based on this value.  

As discussed in Section~\ref{sec:wn}, the winding number is both \textit{additive} and \textit{homotopy invariant}. 
Thus, for each trimming curve, if we can enclose it within a convex region in the parametric domain, then for any test point outside this region, the curve can be replaced by a straight line segment without affecting the winding number. 
For test points inside the region, we subdivide the curve at its midpoint, compute new bounds for the two subsegments, and apply the same process recursively.
In the recursion, if we detected that the distance between $p$ and current endpoints is smaller than a predefined tolerance $\epsilon$, the point is reported as lying on the boundary. 
Note that not all points within the $\epsilon$-shell of boundary curves are labeled \texttt{on-boundary}, since points that can be confidently classified as inside or outside are excluded. 

The outline of the recursive winding number computation is shown in Algorithm~\ref{alg:winding}. 
The two main primitives are (1) how to bound a Bézier segment, and (2) how to classify whether a point lies inside the bound.

\begin{algorithm}
\caption{ComputeWindingNumber}
\label{alg:winding}
\begin{algorithmic}[l]
\Require Test point $p$, 2D parametric Bézier curve $\gamma$, interval $[a,b]$, tolerance $\epsilon$
\Ensure Winding number $\omega(p, \gamma_{[a,b]})$
\If{$d(p, \gamma(a)) < \epsilon$ or $d(p, \gamma(b)) < \epsilon$}
    \State Return \texttt{on-boundary}
\ElsIf{$p$ not in bound of $\gamma_{[a,b]}$}
    \State $\omega \gets WindingNumberLineSegment(p, \gamma(a), \gamma(b))$
\Else
    \State $m \gets (a+b)/2$
    \State $\omega_1 \gets ComputeWindingNumber(p, \gamma, a, m, \epsilon)$ 
    \State $\omega_2 \gets ComputeWindingNumber(p, \gamma, m, b, \epsilon)$
    \State $\omega \gets \omega_1 + \omega_2$
\EndIf
\end{algorithmic}
\end{algorithm}

\subsection{Ellipse Bound}
\label{sec:ellipse}

Let $\gamma$ be a Bézier curve defined by control points $\{P_i\}_{i=0}^n$. 
It is well known that the convex hull of the control points bounds the curve. 
However, using this bound in each recursion requires invoking De Casteljau’s algorithm to subdivide the curve and update the control points of each split. 
For containment tests, a point-in-polygon query is also needed. 
This results in $O(n^2)$ cost for curve subdivision and $O(n)$ for point-in-bound tests.

To reduce this cost and enable efficient geometric primitives such as~\citet{wozny2020linear}, we propose an \textit{ellipse bound} that replaces the convex-hull test with a simple analytic condition.

\begin{lemma}[Ellipse Bound]
\label{thm:ellipse}
Every (rational) Bézier curve $\gamma$ lies within the ellipse region $\Omega_\gamma$ defined by
\begin{equation}
\label{equ:ellipse}
\Omega_\gamma = \{\, x \in \mathbb{R}^2 \mid d(x, P_0) + d(x, P_n) \le B \,\},
\end{equation}
where $B$ is an upper bound on the curve’s derivative magnitude. 
For a non-rational Bézier curve, $B = \max_i \| n (P_i - P_{i-1}) \|$. 
The rational case is given in Appendix~\ref{app:ellipse}.
\end{lemma}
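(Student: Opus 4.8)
The plan is to reduce the statement to the elementary observation that, for a point on a curve, the sum of its distances to the two endpoints is bounded by the total arc length of the curve. First I would parametrize $\gamma$ over $t\in[0,1]$ and use the endpoint-interpolation property of (rational) Bézier curves, which gives $\gamma(0)=P_0$ and $\gamma(1)=P_n$. Fixing an arbitrary point $x=\gamma(t)$ on the curve, I would write $x-P_0=\int_0^t\gamma'(s)\,ds$ and $P_n-x=\int_t^1\gamma'(s)\,ds$, and apply the triangle inequality for vector-valued integrals ($\|\int f\|\le\int\|f\|$) to each piece. Adding the two estimates yields
\begin{equation*}
d(x,P_0)+d(x,P_n)\;\le\;\int_0^t\|\gamma'(s)\|\,ds+\int_t^1\|\gamma'(s)\|\,ds\;=\;\int_0^1\|\gamma'(s)\|\,ds .
\end{equation*}
If $B$ is any uniform upper bound on $\|\gamma'(s)\|$ over $[0,1]$, the right-hand side is at most $B$, which is precisely the inequality defining $\Omega_\gamma$ in \eqref{equ:ellipse}. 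Since $t$ was arbitrary, this shows $\gamma\subseteq\Omega_\gamma$.

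It then remains only to exhibit a concrete admissible $B$. For a non-rational Bézier curve with control points $\{P_i\}_{i=0}^n$, I would invoke the hodograph identity $\gamma'(t)=n\sum_{i=0}^{n-1}B_i^{n-1}(t)\,(P_{i+1}-P_i)$, which expresses $\gamma'(t)$ as a convex combination of the scaled control-polygon edges $n(P_{i+1}-P_i)$, because the Bernstein polynomials $B_i^{n-1}(t)$ are nonnegative and sum to one on $[0,1]$. Convexity of the norm then gives $\|\gamma'(t)\|\le\max_i\|n(P_{i+1}-P_i)\|=\max_i\|n(P_i-P_{i-1})\|$, which is exactly the claimed value of $B$.

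For the rational case the arc-length argument above applies verbatim, so the only additional work is bounding $\|\gamma'\|$ when $\gamma$ is a quotient of polynomial Bézier curves; here differentiation via the quotient rule introduces the weight function and its derivative, and one must check that a finite uniform bound still exists (it does, since the weights are strictly positive on the compact interval $[0,1]$). I expect this to be the only genuinely technical point, and—consistent with the statement—I would carry out its details in Appendix~\ref{app:ellipse}, leaving the main text with the clean arc-length derivation and the explicit polynomial bound.
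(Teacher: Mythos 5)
Your proposal is correct and follows essentially the same route as the paper's proof: both bound $d(x,P_0)+d(x,P_n)$ by the arc length $\int_0^1\|\gamma'\|$, which is at most $B$ since the parameter interval has unit length, and both obtain the explicit non-rational $B$ from the hodograph's convex-hull property. The paper merely phrases the first step as a proof by contradiction and defers the rational derivative bounds to cited results of Floater, exactly as you anticipated.
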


The proof follows directly from the Lipschitz bound of Bézier derivatives and is provided in Appendix~\ref{app:ellipse}. 
Conceptually, this bound defines an ellipse whose foci are the curve’s endpoints, and whose major axis is determined by the curve’s maximum derivative. 
This ellipse fully encloses the curve.

For a subsegment $\gamma' : [a,b] \to \mathbb{R}^2$, the corresponding bound becomes
\begin{equation}
\label{equ:sub_ellipse}
\Omega_{\gamma'} = \{\, x \in \mathbb{R}^2 \mid d(x, \gamma(a)) + d(x, \gamma(b)) \le B(b-a) \,\}.
\end{equation}
This formulation enables the recursive winding number algorithm to test containment using only the endpoints of the curve—\textit{avoiding both De Casteljau subdivision and point-in-polygon queries}—thereby improving performance while keeping numerical robustness.

With the ellipse bound, the termination of Algorithm~\ref{alg:winding} can be guaranteed which is formally described by following lemma and proved in Appendix~\ref{proof:termination}.

\begin{lemma}[Termination]
\label{thm:termination}
The maximum recursion depth in Algorithm~\ref{alg:winding} is 
\[
O\!\left(\log \frac{B}{\max(\epsilon, d)}\right),
\]
where $\epsilon$ is the geometric tolerance, $d$ is the distance between the test point $p$ and the curve $\gamma$, and $B$ is the bound constant same with the Equation~\ref{equ:ellipse}.
\end{lemma}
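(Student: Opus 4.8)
The plan is to track how the quantity $B(b-a)$—the length of the major axis of the ellipse bound $\Omega_{\gamma'}$ in Equation~\ref{equ:sub_ellipse}—shrinks as the recursion descends, and to show that the recursion cannot continue once this quantity falls below a threshold governed by $\max(\epsilon,d)$. Each recursive call halves the interval length $b-a$, so at recursion depth $k$ the active subinterval has length $(b-a)/2^k \le 1/2^k$, and the associated ellipse has major-axis length at most $B/2^k$. The key observation is that the recursion only proceeds past a node when the test point $p$ lies \emph{inside} the current ellipse bound $\Omega_{\gamma'}$; by the defining inequality of $\Omega_{\gamma'}$ this forces $d(p,\gamma(a)) + d(p,\gamma(b)) \le B(b-a) \le B/2^k$. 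Hence at any node that recurses, both endpoints lie within distance $B/2^k$ of $p$.

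The next step is to derive a contradiction once $k$ is large enough. On one hand, since $d$ is the distance from $p$ to the curve $\gamma$ and $\gamma(a)$ lies on $\gamma$, we have $d(p,\gamma(a)) \ge d$; combined with the bound above this gives $d \le B/2^k$, i.e. $2^k \le B/d$. On the other hand, the termination test at the top of Algorithm~\ref{alg:winding} halts (returning \texttt{on-boundary}) as soon as $d(p,\gamma(a)) < \epsilon$ or $d(p,\gamma(b)) < \epsilon$; so a node that recurses must have $d(p,\gamma(a)) \ge \epsilon$, and the same reasoning yields $2^k \le B/\epsilon$. Taking the two constraints together, any node at depth $k$ that spawns children satisfies $2^k \le B/\max(\epsilon,d)$, whence $k \le \log_2\!\big(B/\max(\epsilon,d)\big)$. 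Therefore the recursion depth is $O\!\big(\log \frac{B}{\max(\epsilon,d)}\big)$, as claimed.

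I would present this carefully by fixing an arbitrary leaf-to-root path and applying the two inequalities at the deepest internal node on that path. The routine calculation—verifying that $d(p,\gamma(a)) \ge d$ and that halving the interval exactly halves the major axis—follows immediately from the definitions. The main obstacle, and the point deserving the most care, is the interaction between the two stopping criteria: the \texttt{on-boundary} test compares $p$ against the \emph{endpoints} $\gamma(a),\gamma(b)$ rather than against the curve itself, so one must argue that whenever $d$ is large relative to $B/2^k$ the ellipse-exclusion branch fires, and whenever $d$ is small the $\epsilon$-test eventually fires because the endpoints of a short subarc through the closest point are themselves close to $p$. Making this dichotomy clean is why the bound naturally comes out in terms of $\max(\epsilon,d)$ rather than $\epsilon$ or $d$ alone; once that is isolated, the logarithmic count is immediate.
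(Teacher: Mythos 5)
Your argument is correct and complete, and it is organized differently from the paper's own proof. The paper fixes a candidate depth threshold and argues \emph{forward} that the algorithm must have stopped by then, splitting into the cases $d>\epsilon$ and $d\le\epsilon$; the second case introduces the closest point $q$ on the curve, a triangle inequality $\|p-v\|\le\|p-q\|+\|q-v\|$, and a further sub-split on whether $d\le\epsilon/2$. You instead argue in the contrapositive: any node at depth $k$ that actually recurses must pass both guards of Algorithm~\ref{alg:winding}, so it simultaneously satisfies the ellipse-membership condition $d(p,\gamma(a))+d(p,\gamma(b))\le B(b-a)\le B/2^{k}$ and the not-on-boundary condition $d(p,\gamma(a))\ge\epsilon$; combined with the trivial bound $d(p,\gamma(a))\ge d$, this gives $\max(\epsilon,d)\le B/2^{k}$ in one step and eliminates the case analysis entirely. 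Your route is also, strictly speaking, a tighter fit to the stated lemma: in the regime $d\gg\epsilon$ the paper's first case only certifies termination by depth $\log(B/\epsilon)$, which does not by itself yield the claimed $O(\log(B/d))$, whereas your derivation of $2^{k}\le B/d$ from ellipse membership supplies exactly that factor. The one cosmetic remark is that your closing paragraph, about needing to show the $\epsilon$-test "eventually fires because the endpoints of a short subarc through the closest point are close to $p$," re-introduces the machinery of the paper's second case even though your own two-inequality argument already renders it unnecessary; the proof is finished at the end of your second paragraph.
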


In practice, far fewer subdivisions are needed, since most subsegments are pruned during recursion. 
Empirically, the average runtime grows logarithmically with respect to $1/\epsilon$ for boundary-adjacent test points.

\paragraph{Path hierarchy.}
Beyond single curves, we define a \emph{path} as a sequence of end-to-end continuous Bézier segments $\{\gamma_i\}_{i=1}^n$. 
An ellipse bound for any subpath $\{\gamma_i\}_{i=j}^k$ can be written as
\begin{equation*}
\Omega_{j,k} = \bigl\{\, x \in \mathbb{R}^2 \ \big|\  d(x, \gamma_j(0)) + d(x, \gamma_k(1)) 
    \le \max_{j \le m \le k} B_m (k-j+1) \,\bigr\}.
\end{equation*}
This hierarchical structure enables efficient pruning in both CAD and vector graphics applications.  
For example, a NURBS curve can be decomposed into rational Bézier segments, and a hierarchy of ellipse bounds can be built for fast containment queries.

Table~\ref{tbl:bound_comparison} compares the computational cost of the ellipse bound versus the traditional control-points bound. 
The ellipse bound reduces evaluation and in-bound testing from $O(n^2)$ and $O(n)$ to $O(n)$ and $O(1)$, respectively. 
A similar hierarchy can also be evaluated in constant time. 
We note that concurrent work by~\citet{wozny2025fast} proposed a stable $O(n\log n)$ subdivision method based on the Fast Fourier Transform, but it applies only to non-rational Bézier curves; rational curves still require $O(n^2)$ subdivision.

\begin{table}[h]
\centering
\caption{Comparison of computational cost between ellipse bounds and control-points bounds, where $n$ is the number of control points.}
\begin{tabular}{ccc}
\toprule
& Ellipse Bound & Control-Points Bound \\
\midrule
Evaluation      & $O(n)$   & $O(n^2)$ \\
In-bound Test   & $O(1)$   & $O(n)$   \\
Path Hierarchy  & $O(1)$   & $O(n)$   \\
\bottomrule
\end{tabular}
\label{tbl:bound_comparison}
\end{table}

\paragraph{Relation to arc trees.}
Our ellipse bound differs from that used in arc trees~\citep{gunther1990arc}, which rely on arc length. 
Computing arc-length parameterizations for general Bézier curves is often intractable~\citep{sakkalis2009non}, making our derivative-based ellipse bound more practical for CAD applications.

\subsection{Periodic Surfaces and Their Covering Space}
\label{sec:covering}

Periodic parameterizations are ubiquitous in CAD models. 
Seamless cylinders, tori, and their deformations are represented by surfaces whose parameter domains are periodic rectangles. 
From a topological viewpoint, such surfaces fall into three main categories:
\begin{itemize}
  \item \textbf{Cylinder-like:} represented by a uni-periodic domain.
  \item \textbf{Torus-like:} represented by a bi-periodic domain.
  \item \textbf{Sphere or cone-like:} represented by a uni-periodic domain with one or more singular points.
\end{itemize}
For surfaces homeomorphic to the sphere or cone, we can puncture the singular points, effectively reducing them to the cylinder-like case.

Evaluating the winding number directly in the parameter domain is problematic because trimming curves may cross the domain boundary or appear artificially open due to periodic identification. 
To address this, we interpret trimming loops within the viewpoint of the \emph{universal covering space}.

The universal covering space of a uni- or bi-periodic rectangle is the infinite plane, tiled by translated copies of the base domain. 
Tiles repeat along one axis for uni-periodic domains and along both axes for bi-periodic domains. 
In this lifted space (Figure~\ref{fig:covering}):
\begin{itemize}
  \item Contractible loops remain contractible.
  \item Non-contractible loops lift to \emph{periodically extended curves}, which repeat infinitely across tiles.
\end{itemize}
This perspective transforms boundary-crossing trimming curves into well-defined entities whose winding numbers can be computed consistently.

\begin{figure}[h]
  \centering
  \includegraphics[width=\linewidth]{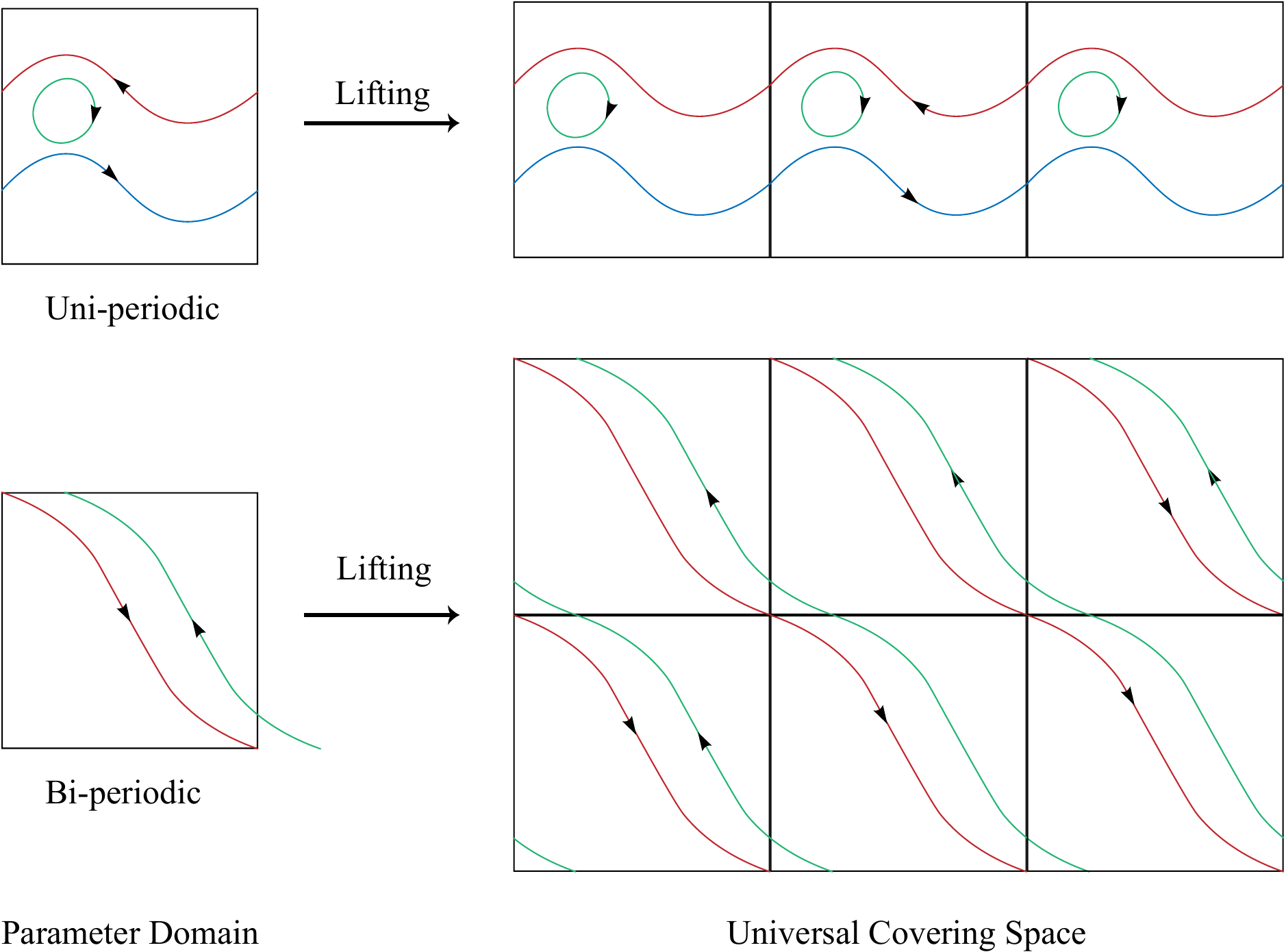}
  \caption{Universal covering spaces of periodic domains. 
  Contractible loops lift to closed loops, while non-contractible loops become periodically extended curves that repeat across tiles.}
  \label{fig:covering}
\end{figure}

\subsection{Periodically Extended Curves}
\label{sec:periodic_curves}

We now formalize the concept of periodically extended curves and show how to compute their winding numbers. 
The key idea is to truncate the infinite set of copies: segments sufficiently far from the query point can be replaced by straight lines without changing their contribution.

\paragraph{Definition.}
Let $\gamma : \mathbb{R} \to \mathbb{R}^2$ be a curve, and assume there exists a period $T \in \mathbb{R}$ such that
\begin{equation}
    \gamma(t + T) = \gamma(t) + \vec{v},
\end{equation}
for some nonzero translation vector $\vec{v} \in \mathbb{R}^2$. 
We call $\gamma$ a \emph{periodically extended curve} with \emph{extension vector} $\vec{v}$. 
We partition $\gamma$ into unit segments:
\[
  \gamma_n : [0, T] \to \mathbb{R}^2, \qquad \gamma_n(t) = \gamma(nT + t).
\]
Let $\beta_n = \gamma_n(0)$; all $\beta_n$ lie on a line $L$ parallel to $\vec{v}$. 
The winding number of $\gamma$ with respect to a test point $p$ can be written as an infinite series:
\begin{equation}
  \label{equ:periodic_wn_series}
  \omega(p, \gamma) = \sum_{k=-\infty}^{\infty} \omega(p, \gamma_k).
\end{equation}

Although Equation~\eqref{equ:periodic_wn_series} is not directly computable, the convergence is effectively finite. 
Since $\|\vec{v}\| > 0$, there exists $N$ such that for all $|k| > N$, the ellipse bounds of $\gamma_k$ do not contain $p$. 
By homotopy invariance, the contributions from these distant copies can be replaced by straight segments between $\beta_k$ and $\beta_{k+1}$. 
Hence,
\begin{equation}
    \label{equ:periodic_wn}
    \begin{aligned}
        \omega(p, \gamma) 
        &= \sum_{k=-N}^{N} \omega(p, \gamma_k)
         + \sum_{|k|>N} \omega(p, \gamma_k) \\[2pt]
        &= \sum_{k=-N}^{N} \omega(p, \gamma_k)
         + \sum_{|k|>N} \omega(p, \beta_k, \beta_{k+1}) \\[2pt]
        &= \sum_{k=-N}^{N} \omega(p, \gamma_k)
         + \omega(p, L) - \omega(p, \beta_{-N}, \beta_{N+1}).
    \end{aligned}
\end{equation}

Here $\omega(p,L)$ denotes the winding number of the infinite line $L$ with respect to $p$, computed by a simple left–right test: 
if $p$ lies on the left side of $L$, $\omega(p,L)=1/2$; otherwise, $\omega(p,L)=-1/2$. 
The final correction term $\omega(p,\beta_{-N},\beta_{N+1})$ subtracts the spurious contribution of the long chord spanning the truncated region. 
Figure~\ref{fig:periodic_wn} illustrates this construction.

\begin{figure}[h]
  \centering
  \includegraphics[width=\linewidth]{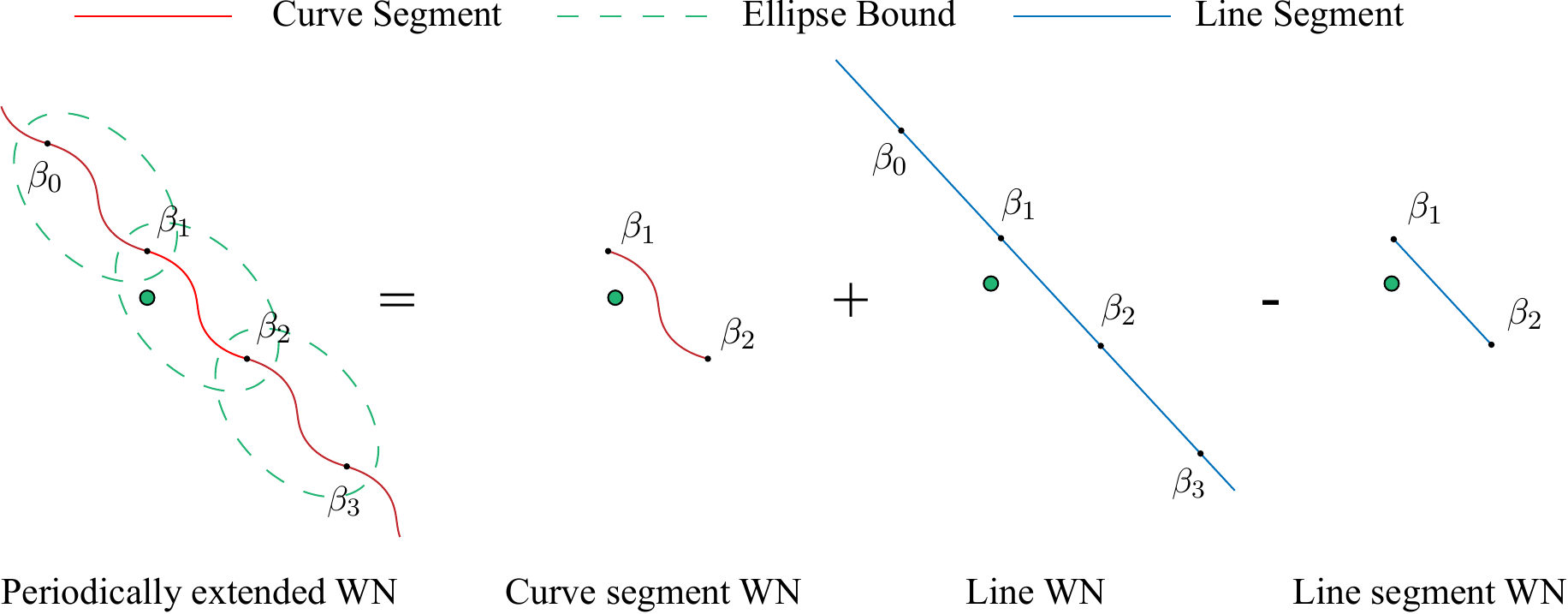}
  \caption{Winding number computation for a periodically extended curve. 
  Nearby copies of $\gamma$ are evaluated using the standard winding number primitive, while distant copies are replaced by straight-line contributions.}
  \label{fig:periodic_wn}
\end{figure}

In practice, only a small number of local segments of $\gamma$ need to be evaluated using the recursive ellipse-bound algorithm. 
All remaining copies reduce to a single straight-line contribution. 
This approach ensures that winding number queries on periodic domains remain robust, efficient, and mathematically consistent.

\begin{figure*}[ht]
  \centering
  \includegraphics[width=\linewidth]{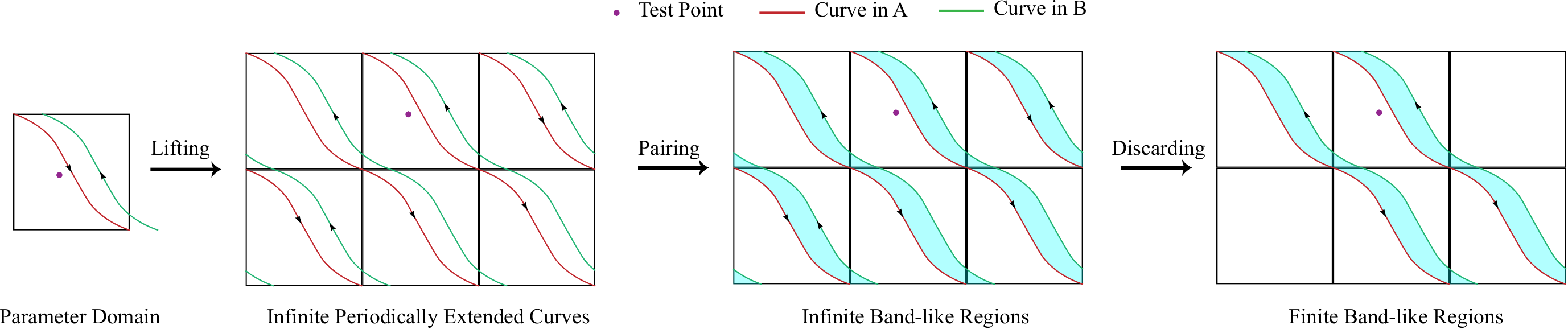}
  \caption{Pipeline of our algorithm on a bi-periodic domain. 
  Following Proposition~\ref{thm:main}, the p-curves are divided into two sets, $A$ and $B$. 
  Lifting them to the universal covering space yields infinite periodically extended curves, which are paired to form band-like regions. 
  Regions far from the test point are discarded, and winding numbers are computed using the primitives described in Section~\ref{sec:periodic_curves}.}
  \label{fig:bi_covering}
\end{figure*}

\subsection{Periodic Domain}
\label{sec:periodicity}

With the framework described in Section~\ref{sec:periodic_curves}, containment queries on a \emph{uni-periodic} domain can be solved directly. 
Let the tiles of the universal covering space be denoted by $D_n$, where $n \in \mathbb{Z}$. 
For a trimming loop $\gamma$, two cases arise:

\begin{itemize}
    \item \textbf{Contractible loop:} $\gamma$ lifts to closed loops $\gamma_n = \gamma + n\vec{e}_1$. 
    In this case, components that are far from the test point can be safely discarded.
    \item \textbf{Non-contractible loop:} the lifted copies of $\gamma$ form a periodically extended curve, and its winding number can be computed using the primitives introduced in Section~\ref{sec:periodic_curves}.
\end{itemize}

When the domain is \emph{bi-periodic}, the situation becomes more subtle. 
Each loop $\gamma$ is lifted to
\[
  \gamma + i\vec{e}_1 + j\vec{e}_2, \qquad i,j \in \mathbb{Z}.
\]
Unlike the uni-periodic case, even a contractible loop now has infinitely many lifted copies. 
A naive summation of winding numbers,
\begin{equation}
  \omega\!\left(p, \{\gamma_i\}_{1 \leq i \leq n}\right)
  = \sum_{i=-\infty}^{\infty}\sum_{j=-\infty}^{\infty}\sum_{k=1}^{n}
    \omega\!\left(p, \gamma_k + i\vec{e}_1 + j\vec{e}_2 \right),
  \label{eq:biperiodicWN}
\end{equation}
is not absolutely convergent. We therefore develop another way to properly formulate the winding number.


\paragraph{Key ideas.}
We observe that the trimming loops on a torus-like surface must adhere to topological constraints. In particular, non-contractible loops must appear in oppositely oriented pairs with the same homology coefficients. Each such pair defines a band-like region in the universal covering space—an infinitely repeating but finite-width strip. The winding number over the entire domain can then be defined as the sum of contributions from these bands, and only a finite number of bands near the test point need to be considered.
Figure~\ref{fig:bi_covering} shows the overall motivation of proposed algorithm on bi-periodic domain via a simple example.

The following proposition provides the topological foundation for our approach.

\begin{proposition}
  \label{thm:main}
  Let $S$ be a surface smoothly parametrized by a bi-periodic mapping to $\mathbb{R}^3$, and let $\gamma_{1}, \gamma_{2}, \dots, \gamma_{n}$ be disjoint, self-intersection-free loops on $S$ that together define a bounded region on $S$, i.e., $\gamma_{1} + \cdots + \gamma_{n}$ is null-homologous. 
  Then, the set $\{\gamma_i\}$ can be partitioned as
  \begin{equation}
    \{\gamma_i\} = A \cup B \cup C,
  \end{equation}
  where $A$ contains all loops with homology coefficients $(p, q)$, $B$ contains those with $-(p, q)$, and $C$ contains all contractible loops. 
  Moreover, $A$ and $B$ have the same cardinality.
\end{proposition}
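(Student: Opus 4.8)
The plan is to work in $H_1(S)$, the first homology group of the torus-like surface $S$, which is isomorphic to $\mathbb{Z}^2$ since $S$ is homeomorphic to a torus (by the classification of spline patches recalled in Section~\ref{sec:background}, together with the assumption that the parametrization is bi-periodic). First I would record the key constraint: the loops $\gamma_1,\dots,\gamma_n$ are disjoint, simple, and $\gamma_1+\cdots+\gamma_n$ is null-homologous, i.e. $\sum_i [\gamma_i] = 0$ in $\mathbb{Z}^2$, where $[\gamma_i]$ denotes the homology class of $\gamma_i$ with its given orientation. The goal is to show that the multiset of classes $\{[\gamma_i]\}$ consists of copies of a single primitive-direction class $(p,q)$, copies of $-(p,q)$, and zeros, with the $(p,q)$'s and $-(p,q)$'s equal in number.

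The main structural input is a classical fact about disjoint simple closed curves on a surface: two disjoint simple closed curves on the torus have zero algebraic intersection number, and on the torus the algebraic intersection form $\iota(\cdot,\cdot)$ on $H_1 = \mathbb{Z}^2$ is the standard symplectic form $\iota((a,b),(c,d)) = ad - bc$. Hence for every pair $i\neq j$ we get $[\gamma_i]\wedge[\gamma_j] = 0$, which over $\mathbb{Z}^2$ forces $[\gamma_i]$ and $[\gamma_j]$ to be linearly dependent (parallel) as soon as both are nonzero. Therefore all the nonzero classes among the $[\gamma_i]$ lie on a single line through the origin in $\mathbb{Z}^2$; let $(p,q)$ be the primitive vector generating that line (for the contractible-only case the statement is vacuous with $A=B=\emptyset$). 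Writing $[\gamma_i] = m_i\,(p,q)$ with $m_i\in\mathbb{Z}$, the null-homology hypothesis becomes $\big(\sum_i m_i\big)(p,q) = 0$, hence $\sum_i m_i = 0$. Put $A = \{\gamma_i : m_i > 0\}$, $B = \{\gamma_i : m_i < 0\}$, $C = \{\gamma_i : m_i = 0\}$ (the last is exactly the set of contractible loops, since a nonzero multiple of a primitive vector is nonzero).

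It remains to upgrade "parallel" to "each $m_i = \pm 1$ or $0$", so that $A$ consists of loops of class exactly $(p,q)$ and $B$ of class exactly $-(p,q)$, and then to deduce $|A| = |B|$. For the first point, a simple closed curve on the torus always represents a primitive homology class (a non-primitive class $k\cdot v$ with $|k|\ge 2$ cannot be realized by an embedded loop — any representative is homotopic to a $k$-fold cover of the core curve and necessarily self-intersects, or one argues via lifting to the universal cover as in Section~\ref{sec:covering}: an embedded loop lifts to an embedded arc whose endpoints differ by the primitive period vector). Hence $m_i \in \{-1,0,1\}$ for all $i$, so $A$ is precisely the set of loops with homology coefficients $(p,q)$ and $B$ the set with $-(p,q)$. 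Finally, $\sum_i m_i = 0$ with $m_i = +1$ on $A$, $m_i = -1$ on $B$, $m_i = 0$ on $C$ gives $|A| - |B| = 0$, i.e. $|A| = |B|$, completing the proof.

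The step I expect to be the main obstacle is the primitivity claim — justifying rigorously that a disjoint family of embedded loops on the torus cannot contain one whose class is a higher multiple of a primitive vector, and cleanly handling the orientation bookkeeping so that "same homology coefficients $(p,q)$" is the correct normalization. Everything else (the intersection-form computation, the linear-dependence deduction over $\mathbb{Z}^2$, and the counting argument) is routine once the right algebraic-topology facts are cited; I would lean on \citet{munkres2018elements} for the homology background and state the intersection-number and primitivity facts as standard lemmas about curves on surfaces, or alternatively phrase the whole argument through the lifting picture of Section~\ref{sec:covering} to keep it self-contained.
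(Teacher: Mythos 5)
Your proposal is correct and follows essentially the same route as the paper's proof in Appendix~B: both arguments combine the fact that disjoint simple closed curves have zero algebraic intersection number (so the symplectic form $n_1 m_2 - n_2 m_1$ vanishes and all nonzero classes are parallel) with the primitivity of classes represented by embedded loops on the torus, and then read off $|A|=|B|$ from null-homology. If anything, your linear-dependence phrasing handles the zero-coefficient cases more uniformly than the paper's separate cylinder-projection lemmas, and you make the final counting step $\sum_i m_i = 0$ explicit where the paper leaves it implicit.
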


Intuitively, Proposition \ref{thm:main} implies a strong structural restriction on trimming loops in a torus-like parameter domain: all non-contractible loops must appear in oppositely oriented pairs with identical homology coefficients. 
This allows us to group the loops into pairs $(\alpha, \beta)$ with $\alpha \in A$ and $\beta \in B$, and each pair $(\alpha, \beta)$ defines a band-like region in the universal covering space.
A formal proof is provided in Appendix~\ref{proof:main}.

To compute the winding number contributed by a pair $(\alpha, \beta)$, we must enumerate all their lifted copies. The following proposition shows how to partition the set of all lifted copies of a loop with homology coefficients $(p, q)$ into a family of periodically extended curves. Proof is provided in Appendix~\ref{proof:traverse}


\begin{proposition}
  \label{thm:traverse_version2}
  For a loop $\gamma$ with homology coefficients $(p, q)$, the set of all its lifted copies $\{ \gamma + i \vec{e}_1 + j \vec{e}_2 \}_{i, j \in \mathbb{Z}}$
  can be partitioned into disjoint periodically extended curves:
  \begin{equation}
    \bigcup_{i=-\infty}^{\infty}\bigcup_{j=-\infty}^{\infty}
      (\gamma + i\vec{e}_1 + j\vec{e}_2)
    = \bigcup_{r=0}^{p-1}\bigcup_{s=-\infty}^{\infty} \gamma_{r,s},
  \end{equation}
  where $\gamma_{r,s}$ is defined as
  \begin{equation}
    \gamma_{r,s} = \sum_{l=-\infty}^{\infty} 
      \big(\gamma + (r + pl)\vec{e}_1 + (s + ql)\vec{e}_2\big),
    \quad 0 \leq r \leq p - 1,\; s \in \mathbb{Z}.
    \label{equ:liftedgamma}
  \end{equation}
\end{proposition}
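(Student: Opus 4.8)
## Proof Proposal for Proposition~\ref{thm:traverse_version2}

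The plan is to exhibit the partition explicitly by identifying the orbit structure of the $\mathbb{Z}^2$-translation action restricted to a single loop $\gamma$, and then checking that the claimed pieces $\gamma_{r,s}$ are exactly the orbits — equivalently, the periodically extended curves generated by a single generator of the stabilizer. First I would observe that $\gamma$, being a loop on $S$ with homology class $(p,q)$, lifts to a path in the universal covering $\mathbb{R}^2$ whose endpoints differ by the deck transformation $p\vec{e}_1 + q\vec{e}_2$; concatenating all integer translates of this lifted path by multiples of $(p,q)$ yields a single bi-infinite curve that is periodically extended with period (in the sense of Section~\ref{sec:periodic_curves}) and extension vector $\vec{v} = p\vec{e}_1 + q\vec{e}_2$. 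This is precisely $\gamma_{0,0}$, and more generally $\gamma_{r,s}$ is the translate $\gamma_{0,0} + r\vec{e}_1 + s\vec{e}_2$, so each $\gamma_{r,s}$ is itself a periodically extended curve with the same extension vector.

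Next I would verify the set equality. For the inclusion $\supseteq$: every summand $\gamma + (r+pl)\vec{e}_1 + (s+ql)\vec{e}_2$ appearing in~\eqref{equ:liftedgamma} is of the form $\gamma + i\vec{e}_1 + j\vec{e}_2$ with $i = r+pl$, $j = s+ql$, so it lies in the left-hand union. For $\subseteq$: given any $(i,j) \in \mathbb{Z}^2$, I need $r \in \{0,\dots,p-1\}$, $s \in \mathbb{Z}$, and $l \in \mathbb{Z}$ with $i = r + pl$ and $j = s + ql$. Taking $l = \lfloor i/p \rfloor$ and $r = i - pl \in \{0,\dots,p-1\}$ (the Euclidean division of $i$ by $p$), then $s = j - ql$ is forced and lies in $\mathbb{Z}$; this gives a representation of $\gamma + i\vec{e}_1 + j\vec{e}_2$ as a summand of $\gamma_{r,s}$. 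For disjointness of the $\gamma_{r,s}$: suppose a lifted copy $\gamma + i\vec{e}_1 + j\vec{e}_2$ belongs to both $\gamma_{r,s}$ and $\gamma_{r',s'}$ with $0\le r,r'\le p-1$. Then $i \equiv r \equiv r' \pmod p$, forcing $r = r'$ since both lie in a single residue system; and once $l$ is determined by $i = r+pl$, the value $s = j - ql = s'$ is uniquely pinned down. Here I would be slightly careful about what "disjoint" means for the curves as point sets versus as indexed copies — the clean statement is that the index sets partition $\mathbb{Z}^2$, and the $\gamma_{r,s}$ are disjoint as families of lifted copies; if two distinct loops happen to pass through a common point that is an accident of geometry, not of the combinatorial partition, and I would phrase the proposition at the level of the indexing to avoid this subtlety.

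Finally I would note the degenerate cases: if $p = 0$ (so the loop wraps only in the $\vec{e}_2$-direction, homology $(0,q)$ with $q \neq 0$), the outer union $\bigcup_{r=0}^{p-1}$ is empty as written, so one must either swap the roles of the two periodic directions or, more uniformly, replace $p$ by $\gcd$-type data; I expect this is handled implicitly in the paper by a choice of labeling of the two periodic axes, and I would remark that without loss of generality $p \neq 0$. The main obstacle is not any single computation — the arithmetic is just Euclidean division on the index lattice — but rather setting up the correspondence between "a loop on $S$ with homology class $(p,q)$" and "a periodically extended curve in $\mathbb{R}^2$ with extension vector $p\vec{e}_1+q\vec{e}_2$" rigorously, i.e., pinning down that the lift of a single fundamental copy of $\gamma$, when translated by the full sublattice $\mathbb{Z}\cdot(p,q)$ and concatenated, closes up into one connected periodically extended curve rather than several. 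This follows from the fact that $\gamma$ is a single connected loop and its lift is a single connected path whose endpoints are related by exactly the deck transformation $(p,q)$, but I would state it carefully since the whole partition hinges on it.
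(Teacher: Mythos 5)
Your proposal is correct and follows essentially the same route as the paper's proof: both reduce the set equality to Euclidean division of the index $i$ by $p$ (the paper writes $i = ap + b$ with $0 \le b < p$ and observes $c_{i,j} \subset \gamma_{b,\,j-aq}$), with the reverse inclusion being immediate. Your additional remarks on disjointness, on the degenerate case $p = 0$, and on why each $\gamma_{r,s}$ is a single connected periodically extended curve go slightly beyond what the paper spells out, but they do not change the underlying argument.
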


Now, consider a pair of loops $\alpha$ and $\beta$ with homology coefficients $(p,q)$ and $-(p,q)$, respectively. 
Their lifted copies $\alpha_{i,j}$ and $\beta_{i,j}$ bound infinitely repeating band-like regions in the covering space. 
The winding number of this pair is defined as
\begin{equation}
  \omega(p, \{\alpha, \beta\})
  = \sum_{i=0}^{p-1}\sum_{j=-\infty}^{\infty}
      \big(\omega(p, \alpha_{i,j}) + \omega(p, \beta_{i,j})\big).
  \label{eq:bandWN}
\end{equation}
As illustrated in Figure~\ref{fig:bi_covering}, only finitely many of these bands intersect the base domain $[0,1]\times[0,1]$, so the summation reduces to a finite set of terms. 
Each term $\omega(p, \gamma_{i,j})$ can be evaluated efficiently using the primitives described in Section~\ref{sec:periodic_curves}.

\paragraph{Summary.}
By enumerating all loop pairs in $A$ and $B$ (as defined in Proposition~\ref{thm:main}) and summing their contributions according to Equation~\eqref{eq:bandWN}, we obtain the total winding number for all non-contractible loops. 
The complete pairing algorithm, including pseudo-code and special cases, is provided in Appendix~\ref{sec:code}. 
Although the derivation assumes continuous and disjoint loops, the method remains robust under typical geometric imperfections, such as disconnected or overlapping p-curves (see Figure~\ref{fig:graphical_abstract}).

\section{Experimentation}
\label{sec:experimentation}

\subsection{Implementation}

We implemented the proposed algorithm in C++, and all experiments were executed on a single thread. 
Runtimes were measured on a PC equipped with an Intel Core i9-13900HX CPU and 32\,GB of RAM. 

\subsection{2D Containment Query with Parametric Boundaries}

We first evaluate the proposed winding number algorithm on planar containment queries for parametrically represented curves in a simply connected domain.

\subsubsection{Performance}

We compare our method with both ray-casting and winding number algorithms, including Axom~\cite{spainhour_robust_2024}, ClosedForm~\cite{liu2025closed}, and OneShot~\cite{martens2025one}. 
For the ray-casting baseline, we adopt the pruning strategy of~\citet{nishita1990ray} and use a state-of-the-art Bézier root solver~\cite{machchhar2016revisiting}. 
The default ray direction is along the positive $u$-axis.

\begin{figure}[t]
    \centering
    \includegraphics[width=\linewidth]{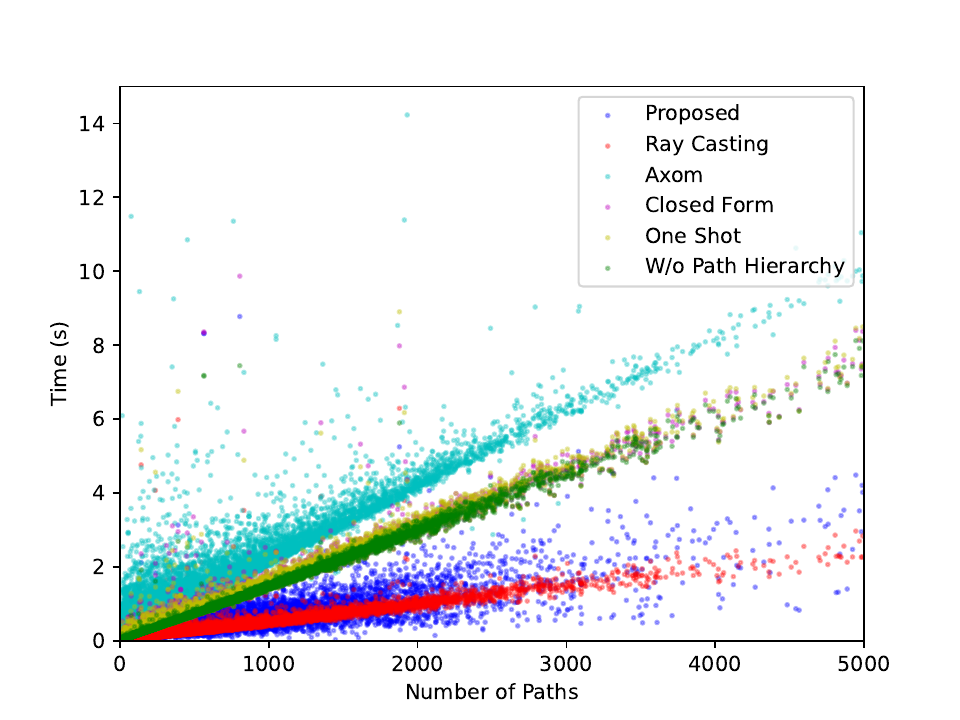}
    \caption{
    Runtime comparison across methods on the OpenClipArt20K dataset~\cite{hu2019triwild}.
    Extremely large or small values are clipped for readability. 
    The proposed method achieves performance comparable to non-robust ray casting and significantly outperforms recent robust winding number approaches such as Axom~\cite{spainhour_robust_2024}.
    }
    \label{fig:openclip}
\end{figure}

We evaluate performance on two datasets: the OpenClipArt20K dataset~\cite{hu2019triwild} and the publicly available Bézier Guarding (BG) dataset~\cite{mandad2020bezier}. 
OpenClipArt20K consists of 20,000 SVG files containing both simple and highly complex shapes. 
We extract the Bézier paths from each SVG, normalize them to the unit square $[0,1]\times[0,1]$, and treat them as shape boundaries. 
The BG dataset includes four categories of randomly generated Bézier curves, as illustrated in Figure~\ref{fig:bg_dataset}. 
For OpenClipArt20K, we preserve the original SVG path hierarchy, whereas for the BG dataset, we also evaluate our method without hierarchy acceleration. 
For each shape, $256\times256$ sample points are uniformly distributed across the parametric domain, and all algorithms are evaluated under identical settings.

Table~\ref{tbl:openclip} reports the average runtime, and Figure~\ref{fig:openclip} shows the runtime distribution on OpenClipArt20K. 
Compared to Axom~\cite{spainhour_robust_2024}, our method significantly accelerates winding number computation, reducing runtime to \textbf{27.3\%} of Axom's on real-world SVG shapes. 
The performance is also comparable to the optimized, yet non-robust, ray-casting method~\cite{nishita1990ray}. 
For BG subsets~A and~B, both methods perform similarly, while on subsets~C and~D, the pruning strategy in~\cite{nishita1990ray} gives ray casting a strong advantage due to the sparse distribution of curves.

\begin{table}[h]
\centering
\caption{
Average runtime (seconds) of different algorithms on OpenClipArt20K~\cite{hu2019triwild} and BG~\cite{mandad2020bezier} datasets.
Our method reduces winding number computation time to \textbf{27.3\%} of Axom's~\cite{spainhour_robust_2024} while achieving performance comparable to an optimized ray-casting approach.
}
\begin{tabular}{@{}lccccc@{}}
\toprule
Dataset & Ray & Axom & Closed & OneShot & Proposed \\
\midrule
OpenClipArt20K     & 0.318 & 1.488 & 0.890 & 0.903 & 0.407 \\
BG Dataset A        & 0.076 & 0.360 & 0.115 & 0.133 & 0.095 \\
BG Dataset B        & 0.091 & 0.426 & 0.125 & 0.149 & 0.104 \\
BG Dataset C        & 0.066 & 0.257 & 0.184 & 0.172 & 0.171 \\
BG Dataset D        & 0.045 & 0.190 & 0.128 & 0.119 & 0.114 \\
\bottomrule
\end{tabular}
\label{tbl:openclip}
\end{table}

\begin{figure}[t]
    \centering
    \includegraphics[width=\linewidth]{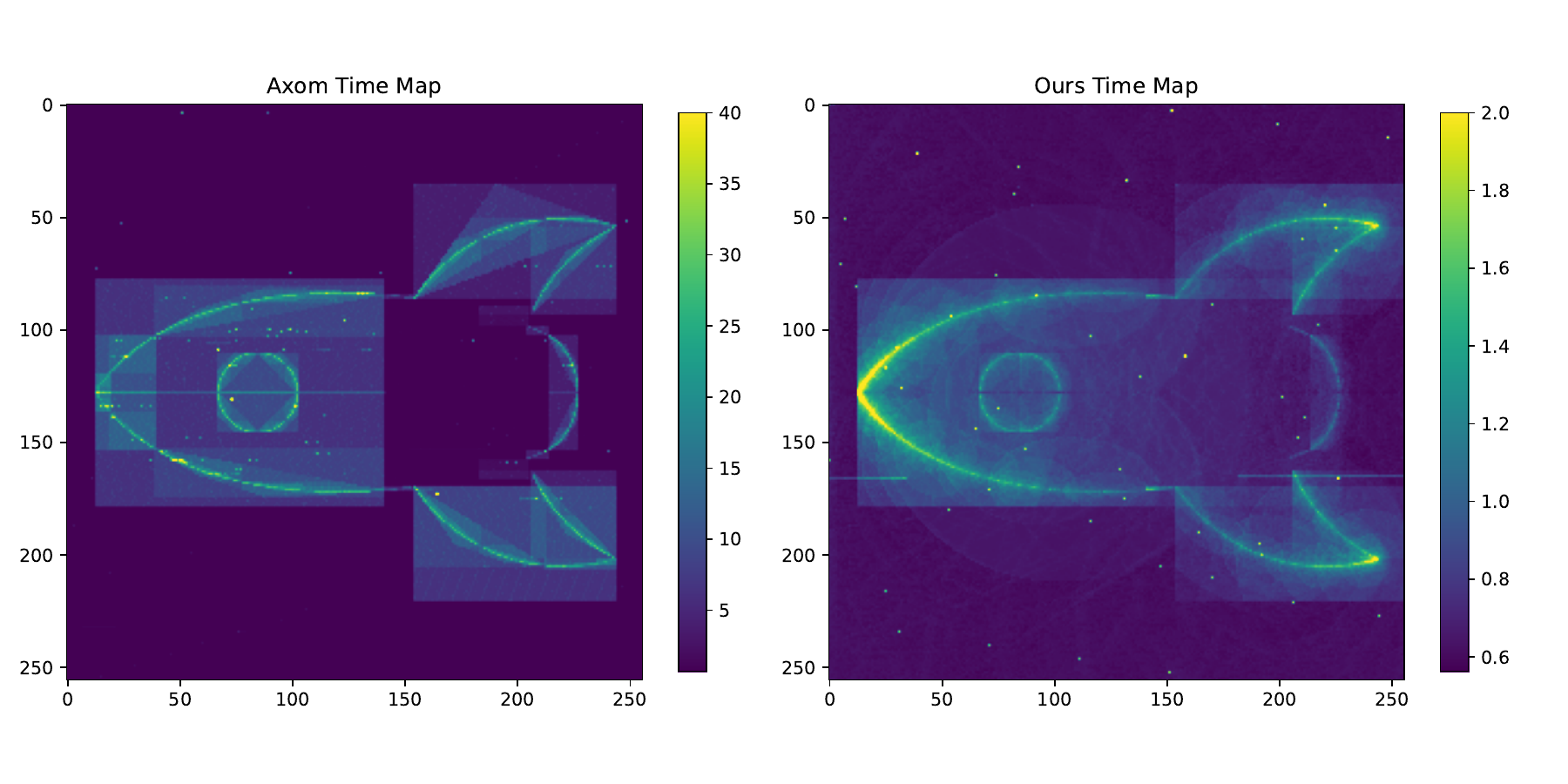}
    \caption{
    Spatial distribution of runtime cost for Axom and the proposed method on the parameter domain. 
    Different colormaps are used for clarity: the left ranges from 0–40\,ms and the right from 0–2\,ms. 
    The proposed method exhibits more uniform performance across the domain.
    }
    \label{fig:single_time_distribution}
\end{figure}

As an ablation study, Figure~\ref{fig:openclip} also reports performance without the path hierarchy. 
The hierarchy significantly accelerates winding number computation, particularly for complex shapes. 
In such cases, the benefit of linear evaluation becomes less pronounced, as all boundaries are cubic Bézier curves.

\begin{figure}[t]
\centering
\includegraphics[width=\linewidth]{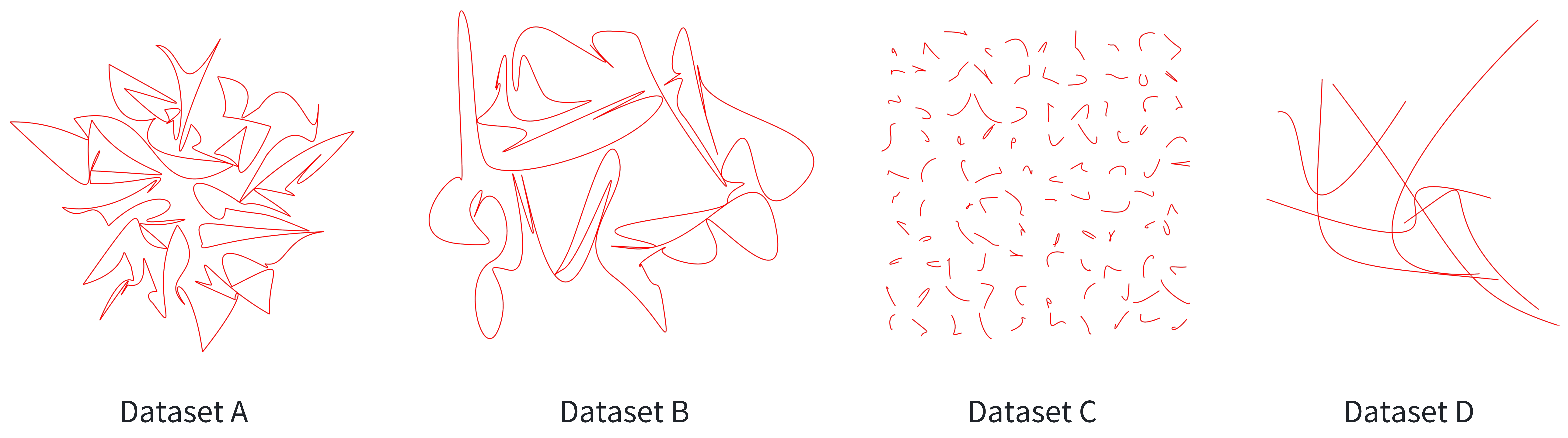}
\caption{
Examples from the Bézier Guarding dataset~\cite{mandad2020bezier}. 
Dataset~A: Closed $C^0$ boundaries. 
Dataset~B: Closed $C^1$ boundaries. 
Dataset~C: Isolated random curves uniformly distributed across the domain. 
Dataset~D: Randomly intersecting curves.
}
\label{fig:bg_dataset}
\end{figure}

We further analyze the asymptotic behavior of different algorithms. 
Using the BG Dataset~A~\cite{mandad2020bezier}, we elevate cubic curves to higher degrees without altering geometry. 
Test points are sampled directly from the boundary—the worst case for recursive methods such as ours and~\citet{spainhour_robust_2024}. 
Figure~\ref{fig:highorder} shows that our algorithm demonstrates empirically linear asymptotic behavior, while the control-point-based method exhibits faster growth in runtime.

\begin{figure}[t]
    \centering
    \caption{
    Runtime versus boundary degree for different algorithms. 
    The proposed method exhibits empirically linear scaling, while the control-point-based approach~\cite{spainhour_robust_2024} grows faster with increasing curve order.
    }
    \includegraphics[width=\linewidth]{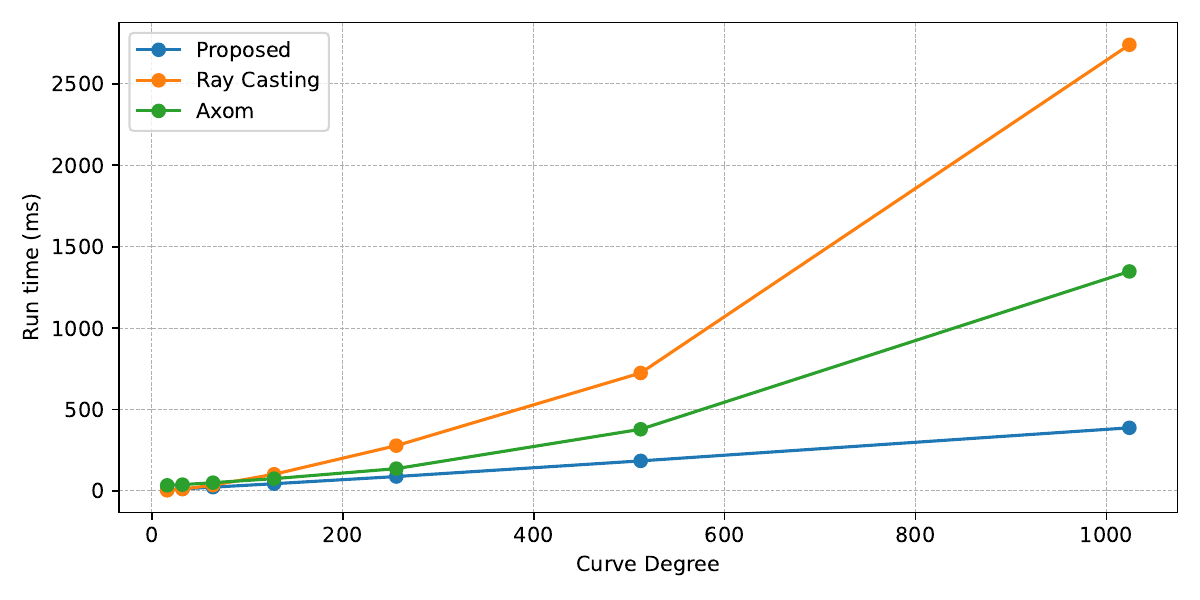}
    \label{fig:highorder}
\end{figure}

We also evaluate the effect of tolerance $\epsilon$ using BG Dataset~A. 
Our algorithm is tested under uniform sampling and on-boundary sampling. 
Table~\ref{tbl:tolerance} lists the runtime and the number of curve evaluations for each tolerance value. 
Uniform sampling shows negligible dependence on $\epsilon$, while boundary sampling exhibits an empirical logarithmic trend. 
Figure~\ref{fig:single_time_distribution} visualizes the global distribution of computational cost.

\begin{table}[h!]
    \centering
    \caption{Effect of tolerance $\epsilon$ on performance. Runtime (in seconds) and number of curve evaluations for classifying 1 million points on BG Dataset A, under uniform and on-boundary sampling. The results demonstrate the method's robustness and its logarithmic dependence on $\epsilon$ for boundary-adjacent points.}
    \begin{tabular}{cccccc}
    \toprule
    $\epsilon$            & $10^{-4}$ & $10^{-5}$ & $10^{-6}$ & $10^{-7}$ & $10^{-8}$ \\
    \midrule
    Boundary eval count   & 35.95 & 45.02 & 54.18 & 63.41 & 72.46 \\
    Boundary time (ms)    & 2.60  & 2.95  & 3.26  & 3.68  & 4.05  \\
    Uniform eval count    & 16.22 & 16.28 & 16.29 & 16.29 & 16.29 \\
    Uniform time (s)      & 0.086 & 0.090 & 0.091 & 0.090 & 0.086 \\
    \bottomrule
    \end{tabular}
    \label{tbl:tolerance}
\end{table}

\subsubsection{Robustness}

\begin{figure}[t]
    \centering
    \includegraphics[width=\linewidth]{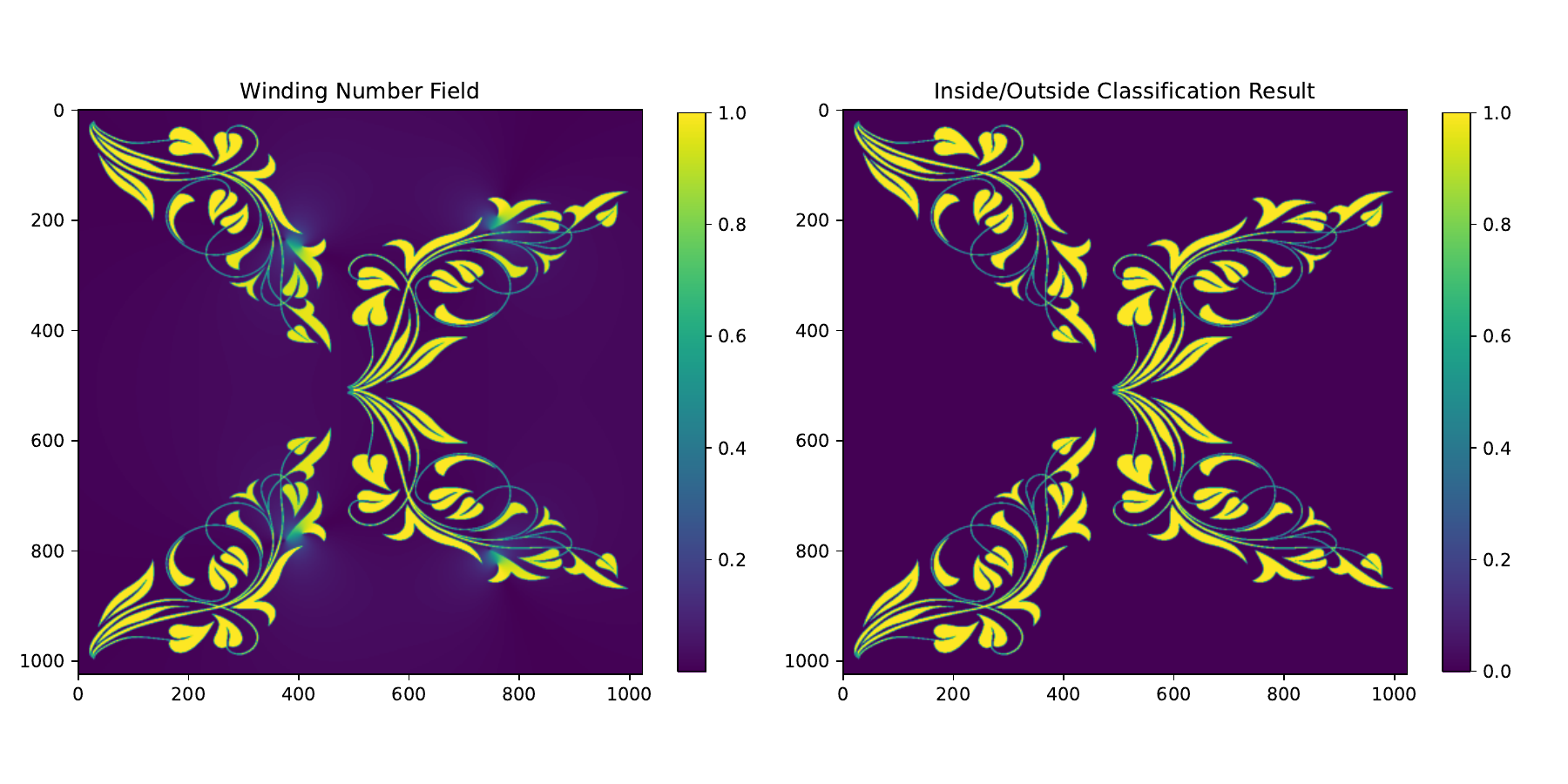}
    \caption{
    An extremely complex shape from OpenClipArt (ID\,=\,237213). 
    Some boundary curves are missing, yet the winding number classification automatically corrects these geometric errors.
    }
    \label{fig:open_boundary}
\end{figure}

From a robustness perspective, ray casting fails even on simple shapes with minor geometric inconsistencies. 
As shown in Figure~\ref{fig:robustness}, both methods perform correctly on clean, closed shapes. 
However, when random noise is added to boundary control points—introducing non-manifold or open boundaries—the ray-casting approach fails, while winding number methods remain stable. 
Figure~\ref{fig:open_boundary} further demonstrates that our method can handle extremely complex open-boundary shapes, automatically repairing gaps through the topological consistency of the winding number.

\begin{figure}[t]
    \centering
    \includegraphics[width=\linewidth]{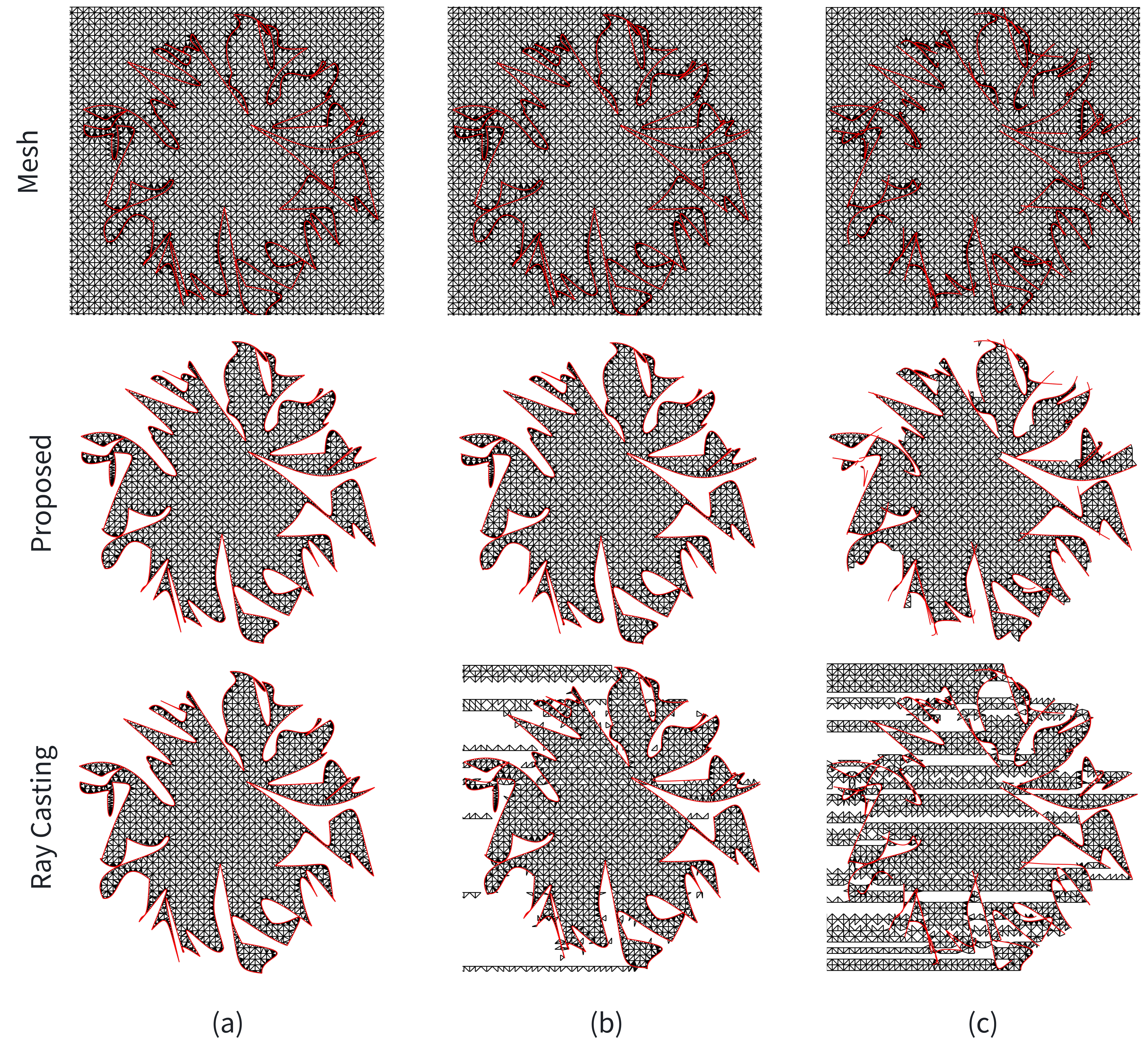}
    \caption{
    Robustness comparison using BG Dataset~A (ID\,=\,694). 
    The boundary is discretized and the domain triangulated via constrained Delaunay triangulation; containment queries are performed at triangle barycenters. 
    (a) Closed boundary: both methods succeed. 
    (b–c) Boundaries perturbed with Gaussian noise $\mathcal{N}(0,10^{-3})$ and $\mathcal{N}(0,10^{-2})$. 
    Ray casting fails under non-manifold or open boundaries, whereas the winding number remains stable.
    }
    \label{fig:robustness}
\end{figure}

\subsubsection{Correctness and Accuracy}

To assess accuracy and numerical stability, we applied the proposed containment-query algorithm to BG Dataset~A under varying tolerance values. 
Results are shown in Figure~\ref{fig:boundary_tolerance}. 
As $\epsilon$ decreases, the fraction of points classified as \texttt{on-boundary} quickly approaches zero. 
For all points confidently classified as inside or outside, the maximum deviation in winding number is $3.95 \times 10^{-9}$ relative to the ground truth computed by~\citet{liu2025closed}.

\begin{figure}[t]
    \centering
    \includegraphics[width=\linewidth]{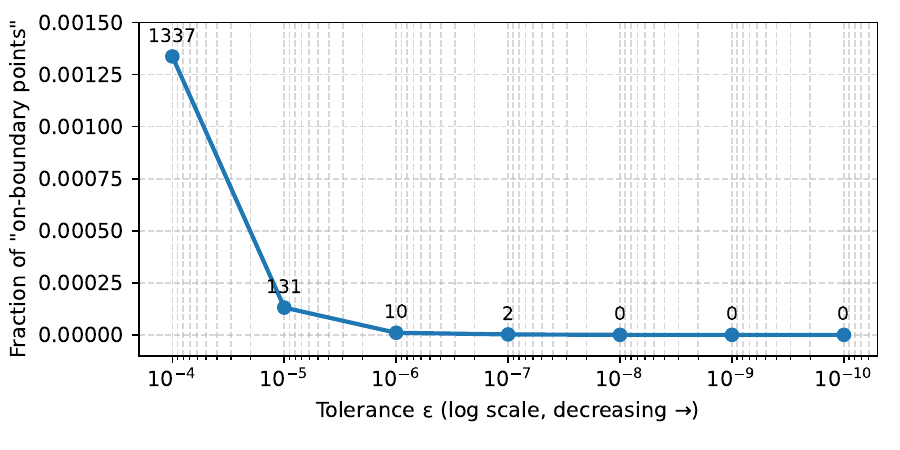}
    \caption{
    Fraction of points reported as \texttt{on-boundary} among $10^6$ uniformly sampled test points, for different tolerance values $\epsilon$. 
    Each marker is annotated with the number of boundary detections. 
    As $\epsilon$ decreases from $10^{-4}$ to $10^{-10}$, the ambiguous fraction rapidly vanishes—from 0.13\% to 0\%.
    }
    \label{fig:boundary_tolerance}
\end{figure}

\subsection{Periodic Domain}

In this section, we evaluate the proposed algorithm on periodic domains. 
Unlike the simply connected case, regions in a periodic domain may have non-closed boundaries. 
Although conventional winding-number methods can tolerate small geometric inconsistencies such as open boundaries, they still fail on many periodic configurations—examples of which were already shown in Figure~\ref{fig:open_boundary}.

Using the formulation presented in Section~\ref{sec:periodicity}, we can efficiently and correctly compute winding numbers in periodic domains. 
As illustrated in Figure~\ref{fig:bi_periodic_comparison}, directly applying conventional winding number evaluation for inside/outside classification fails for both non-closed $p$-curves and curves crossing domain boundaries. 
In such cases, directly computed winding number often misclassifies points on the trimmed surface as ``outside,'' and regions crossing periodic boundaries are lost. 
In contrast, our periodic formulation correctly handles both non-closed and boundary-crossing $p$-curves, preserving the complete trimmed surface.

\begin{figure*}[t]
\centering
\includegraphics[width=\linewidth]{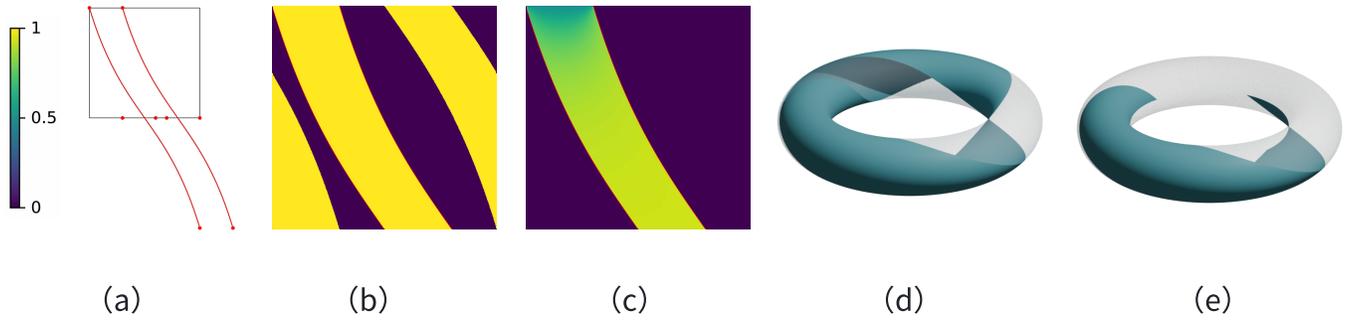}
\caption{
Comparison between our method and direct winding-number computation on a bi-periodic domain trimmed by double-separation curves. 
(a) Parameter domain and $p$-curve loop. 
(b) Winding number computed in the universal covering space. 
(c) Winding number computed in the original parameter domain. 
(d) Containment query result using the proposed method—showing complete trimmed region. 
(e) Result of direct computation in the physical domains, where part of the region is missing.
}
\label{fig:bi_periodic_comparison}
\end{figure*}

To further assess the correctness and performance of the periodic winding-number algorithm, we constructed a synthetic dataset consisting of two parts:
\begin{itemize}
    \item \textbf{Dataset A:} Non-contractible loops in a uni-periodic domain.
    \item \textbf{Dataset B:} Non-contractible loops in a bi-periodic domain with homology coefficients satisfying $2 \leq p, q \leq 5$.
\end{itemize}
Figures~\ref{fig:periodic_result_a} and~\ref{fig:periodic_result_b} visualize the computed winding numbers for both datasets. 
Compared with direct evaluation, our periodic formulation yields correct and topologically consistent results in all cases.



\subsection{Application}

\subsubsection{Processing Multiple Patches}

We demonstrate our method on boundary representation (B-Rep) models consisting of multiple surface patches. 
As illustrated in Figures~\ref{fig:containment_demo} and~\ref{fig:containment_demo2}, we construct B-Rep models by applying Boolean union operations between solids, resulting in multi-patch surfaces. 
Sample points are generated on the original geometric surfaces, and containment queries are then used to determine whether each point lies on the newly formed faces.

For a sample point $p$ on a geometric surface $S$, containment is determined by testing $p$ against all trimmed surfaces referencing $S$ as their base geometry. 
If $p$ lies within any trimming loop, it is labeled as \textit{inside}; otherwise, it is classified as \textit{outside}. 
Figure~\ref{fig:complex_models} further shows results on real-world CAD models. 
Even for complex configurations, our method robustly identifies points trimmed away by face boundaries.

\subsubsection{B-Rep Intersection}

We demonstrate the application of the proposed algorithm to \emph{boundary representation (B-Rep) intersection}, a fundamental modeling operation in CAD systems.  
The input consists of two B-Rep solids, and the goal is to compute the set of intersection curves between them.  
This process typically involves two stages:

\begin{itemize}
  \item \textbf{Geometric intersection:} computing intersection curves between the underlying untrimmed surfaces of the two solids, often using the marching method~\cite{barnhill1990marching}.
  \item \textbf{Topological intersection:} filtering the raw geometric intersections using the trimming boundaries of each surface to retain only the valid segments.
\end{itemize}

In the second stage, intersection curves are segmented by the trimming boundaries of the participating surfaces.  
To determine whether each resulting segment lies within the trimmed region, we evaluate a point containment query at its midpoint.

Figure~\ref{fig:intersection} illustrates an example where an ``eight''-shaped surface (genus two, represented by four torus patches) intersects with a wave surface.  
Stage~1 produces raw intersection loops on the untrimmed surfaces.  
Stage~2 splits these loops at the trimming boundaries (Fig.~\ref{fig:intersection}c,d).  
Containment queries then discard invalid segments (shown in red), yielding the final valid intersection curves.  
We also present a case with trimming curves wrapping in both parametric directions in Figure~\ref{fig:intersection2}.

\begin{figure*}[h]
  \centering
  \subfigbottomskip=2pt
  \subfigcapskip=-5pt
  \subfigure[Input models]{
      \includegraphics[width=0.23\linewidth]{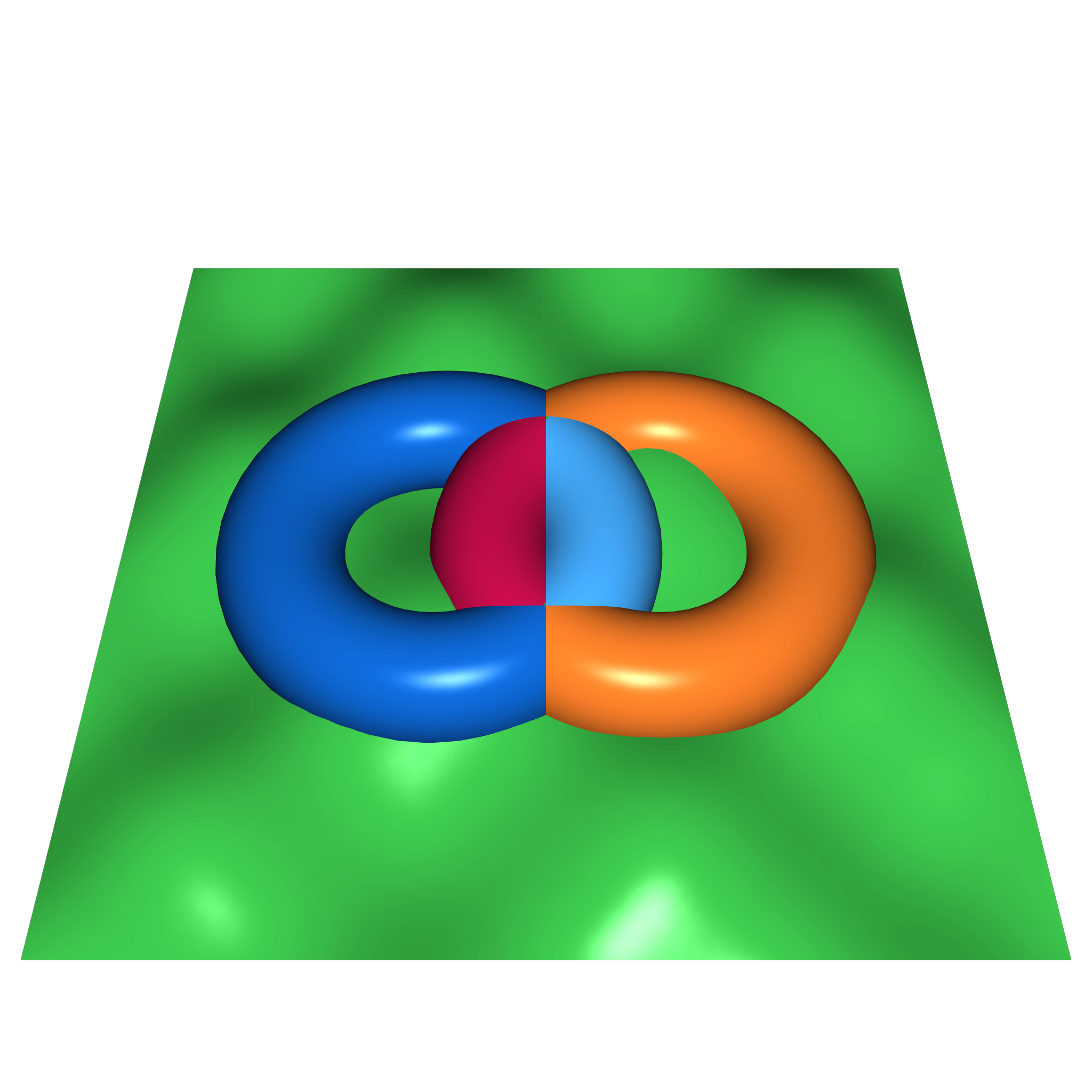}}
  \subfigure[Intersection curves]{
      \includegraphics[width=0.23\linewidth]{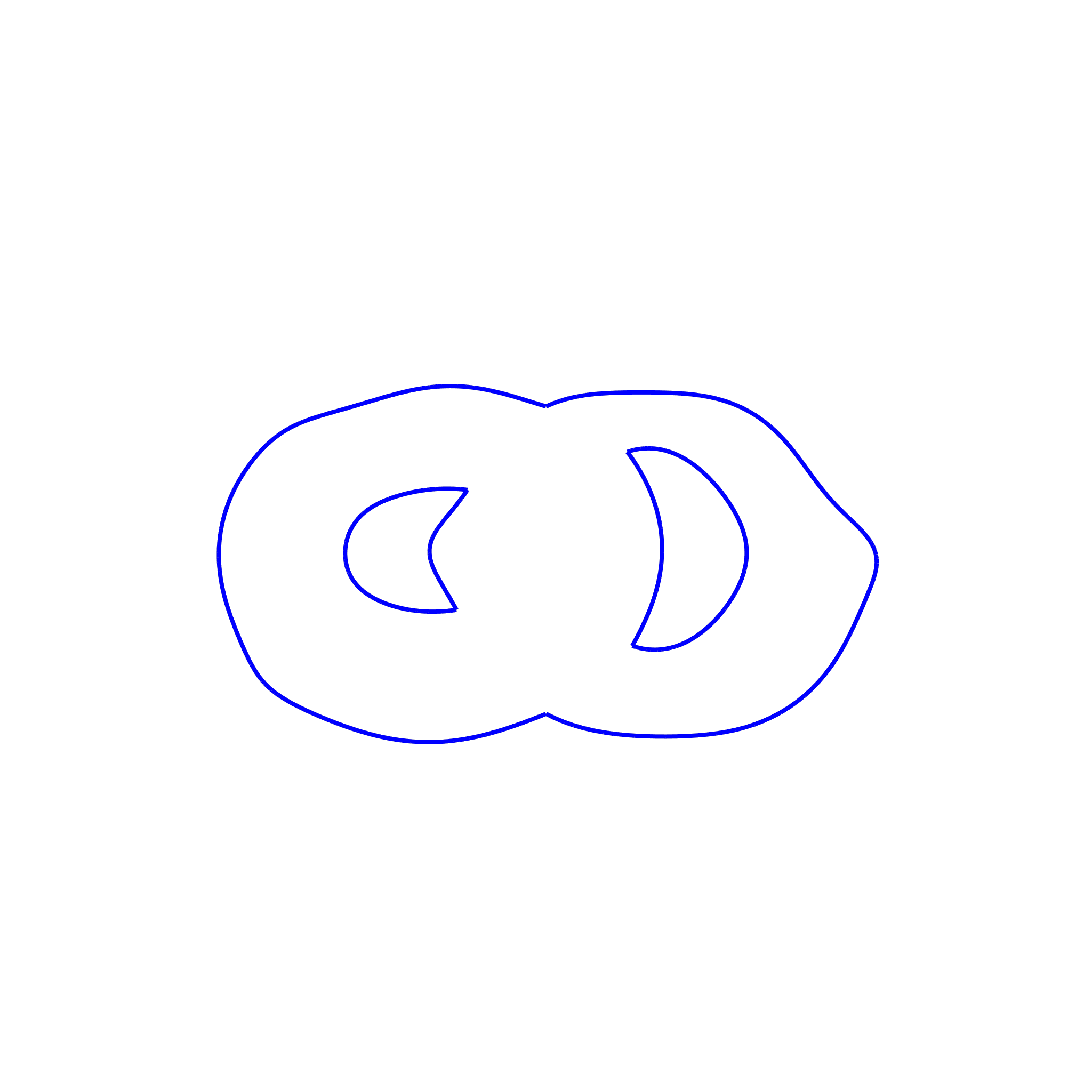}}
  \subfigure[Patch 1 with intersections]{
      \includegraphics[width=0.23\linewidth]{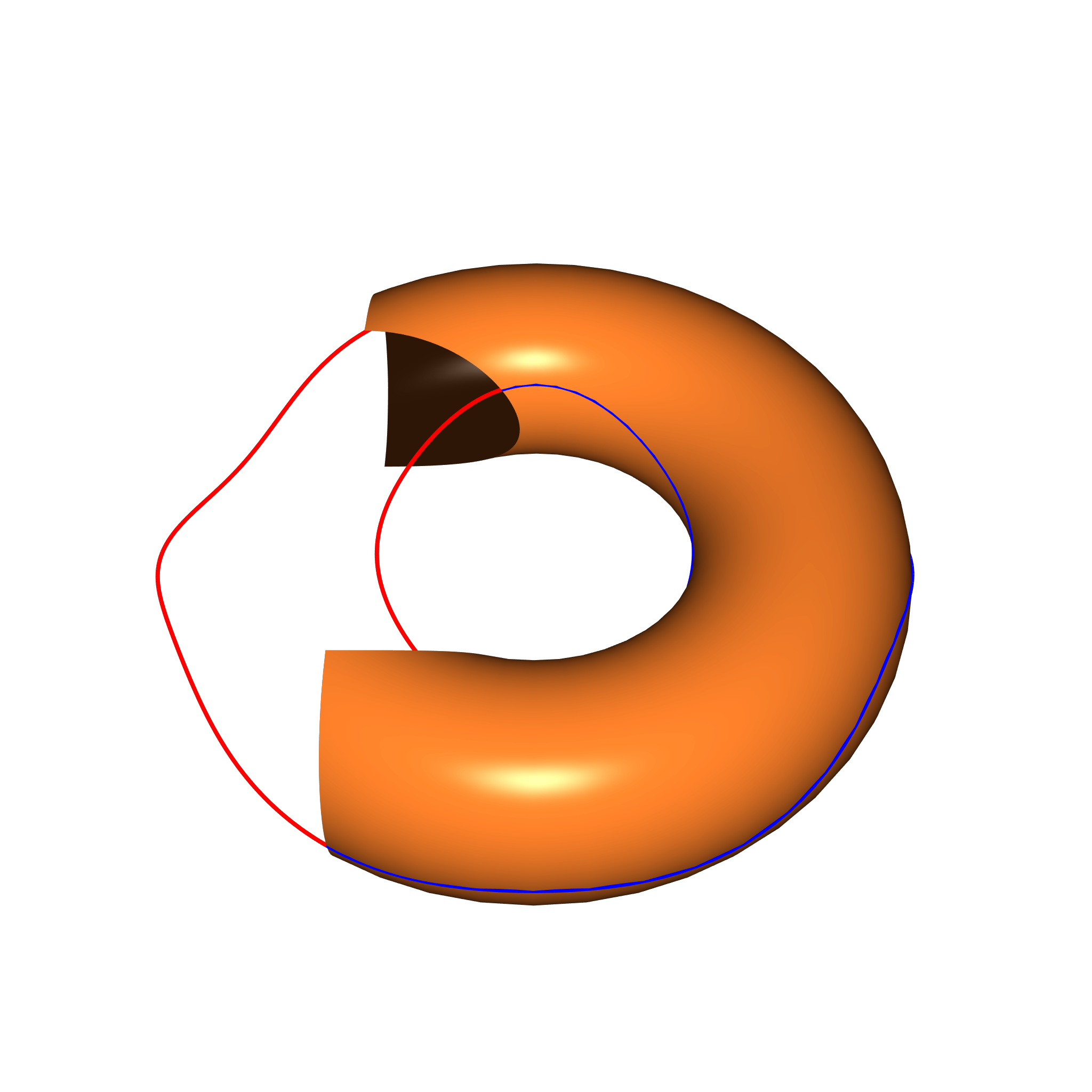}}
  \subfigure[Patch 1 after containment filtering]{
      \includegraphics[width=0.23\linewidth]{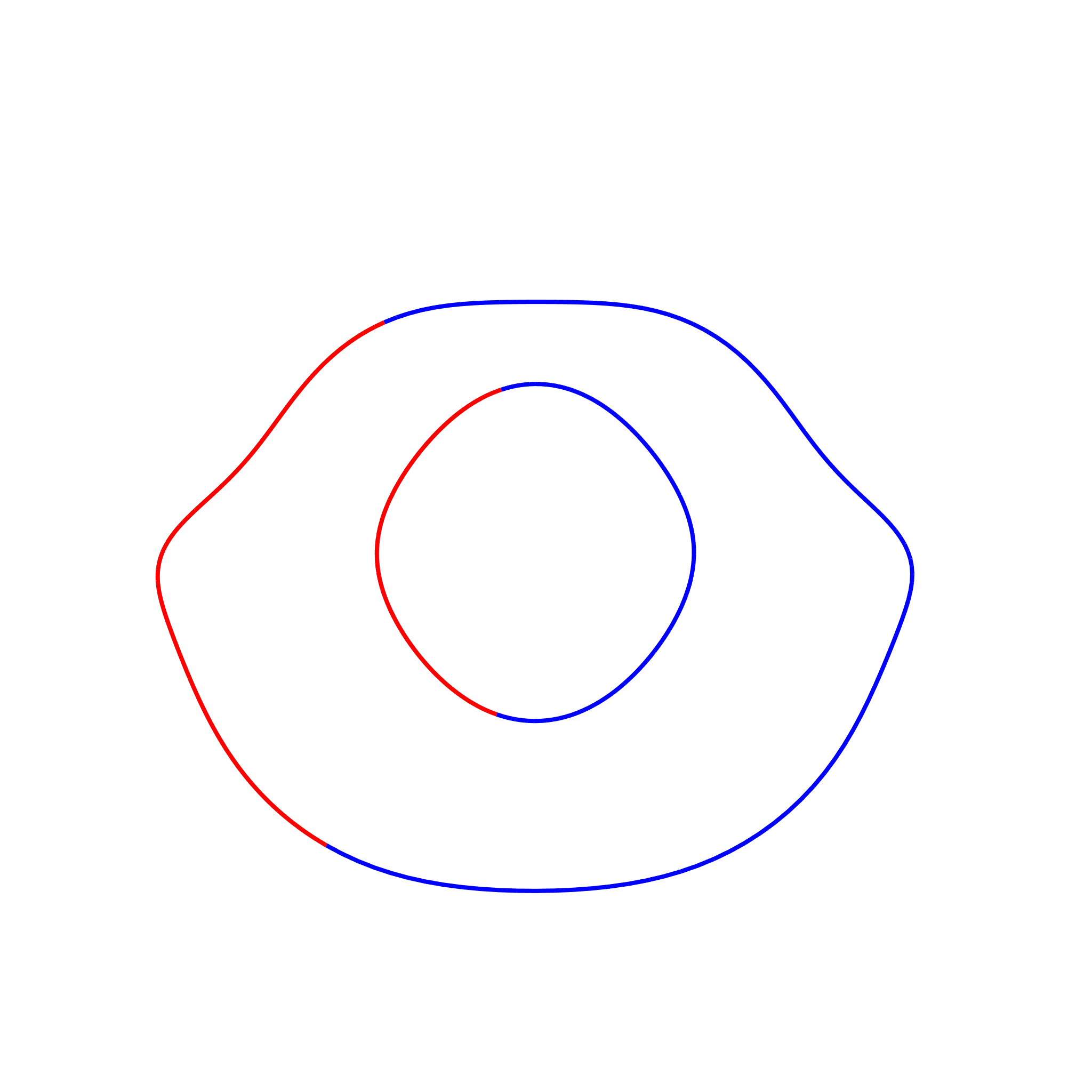}}
  \caption{Intersection between an eight-shaped genus-2 surface and a wave surface.  
  (a) Input models.  
  (b) Intersection curves.  
  (c,d) Raw intersection curves restricted to one surface patch; red segments are removed by containment queries.}
  \label{fig:intersection}
\end{figure*}

\begin{figure*}[h]
  \centering
  \subfigbottomskip=2pt
  \subfigcapskip=-5pt
  \subfigure[Periodic surface]{
      \includegraphics[width=0.23\linewidth]{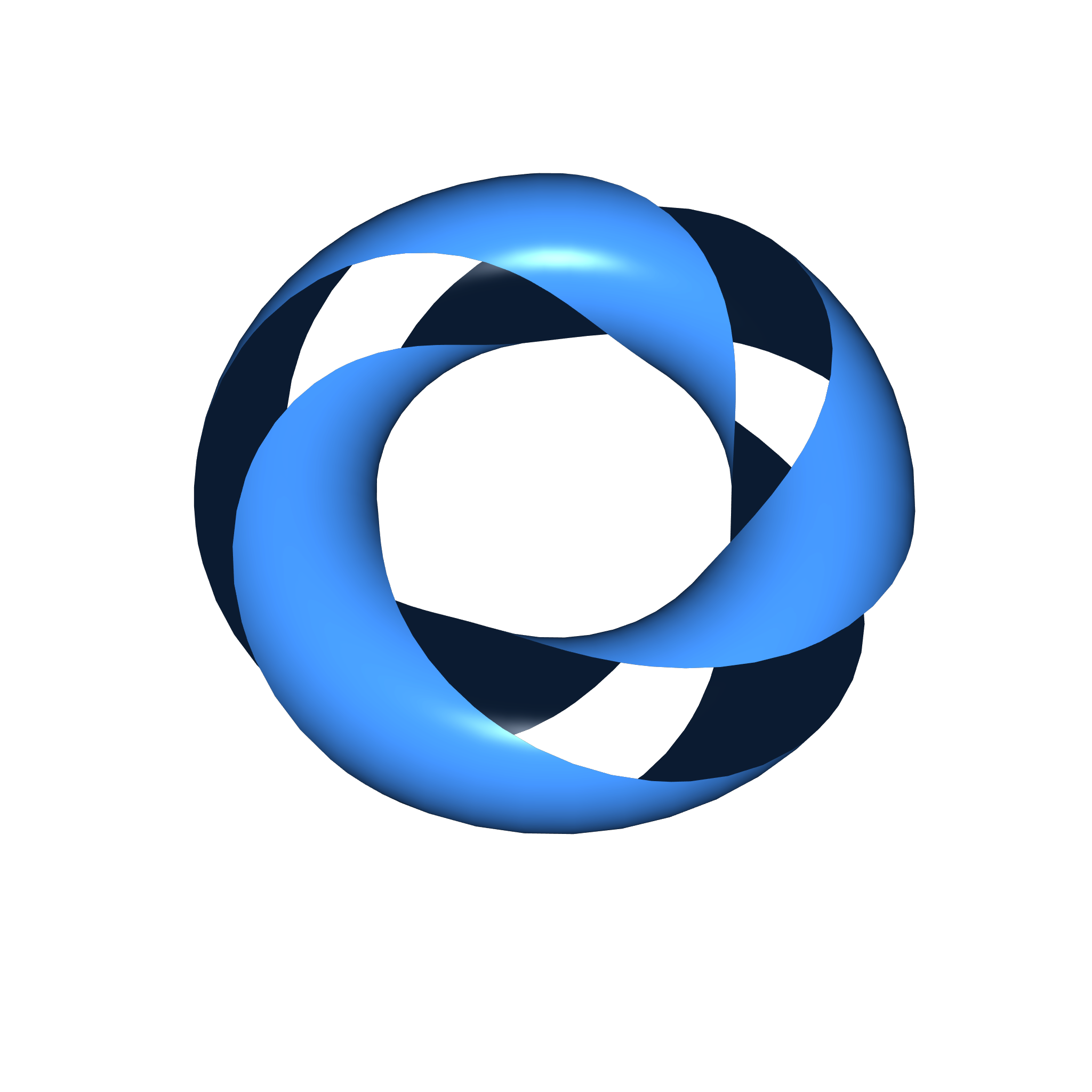}}
  \subfigure[Waving surface]{
      \includegraphics[width=0.23\linewidth]{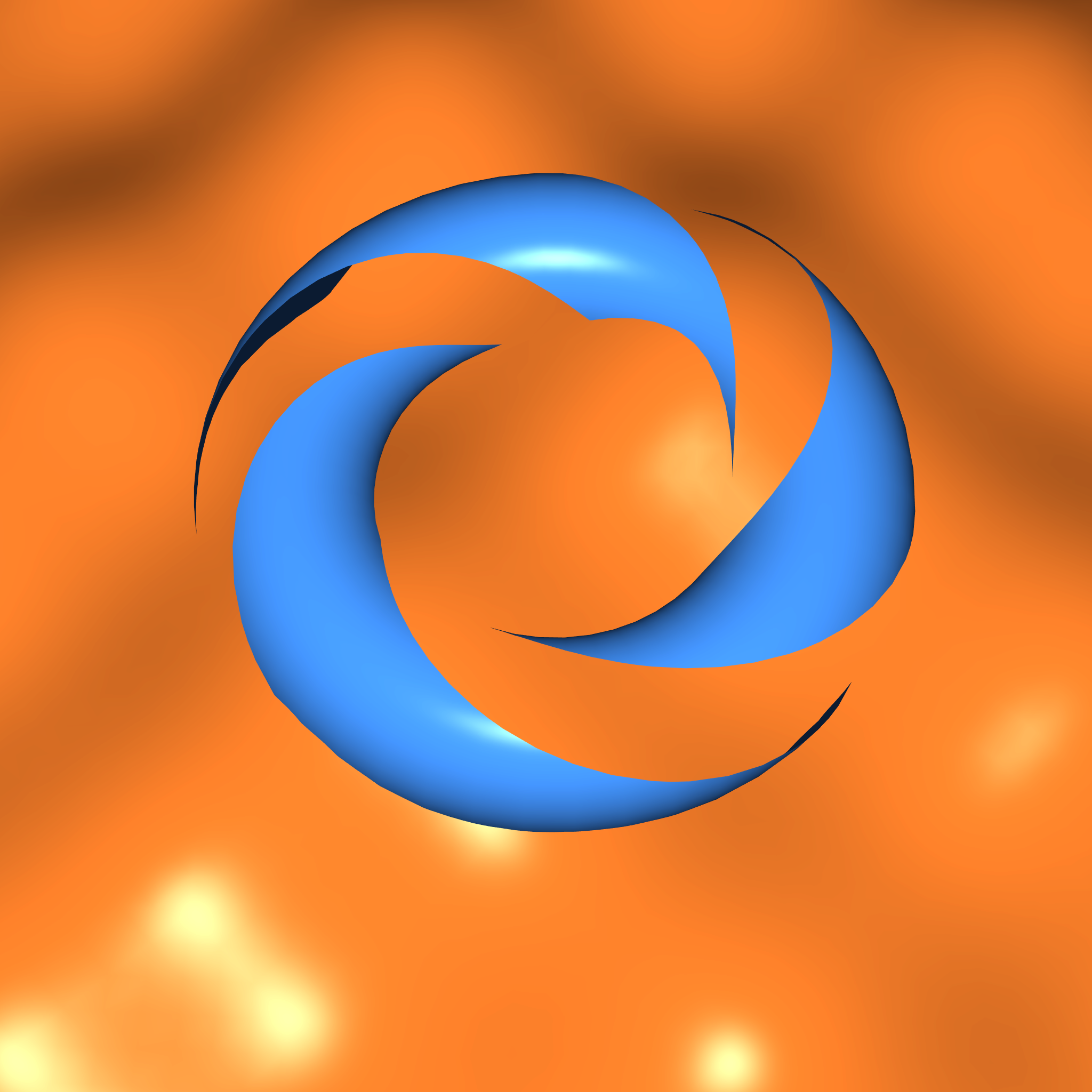}}
  \subfigure[Containment filtering]{
      \includegraphics[width=0.23\linewidth]{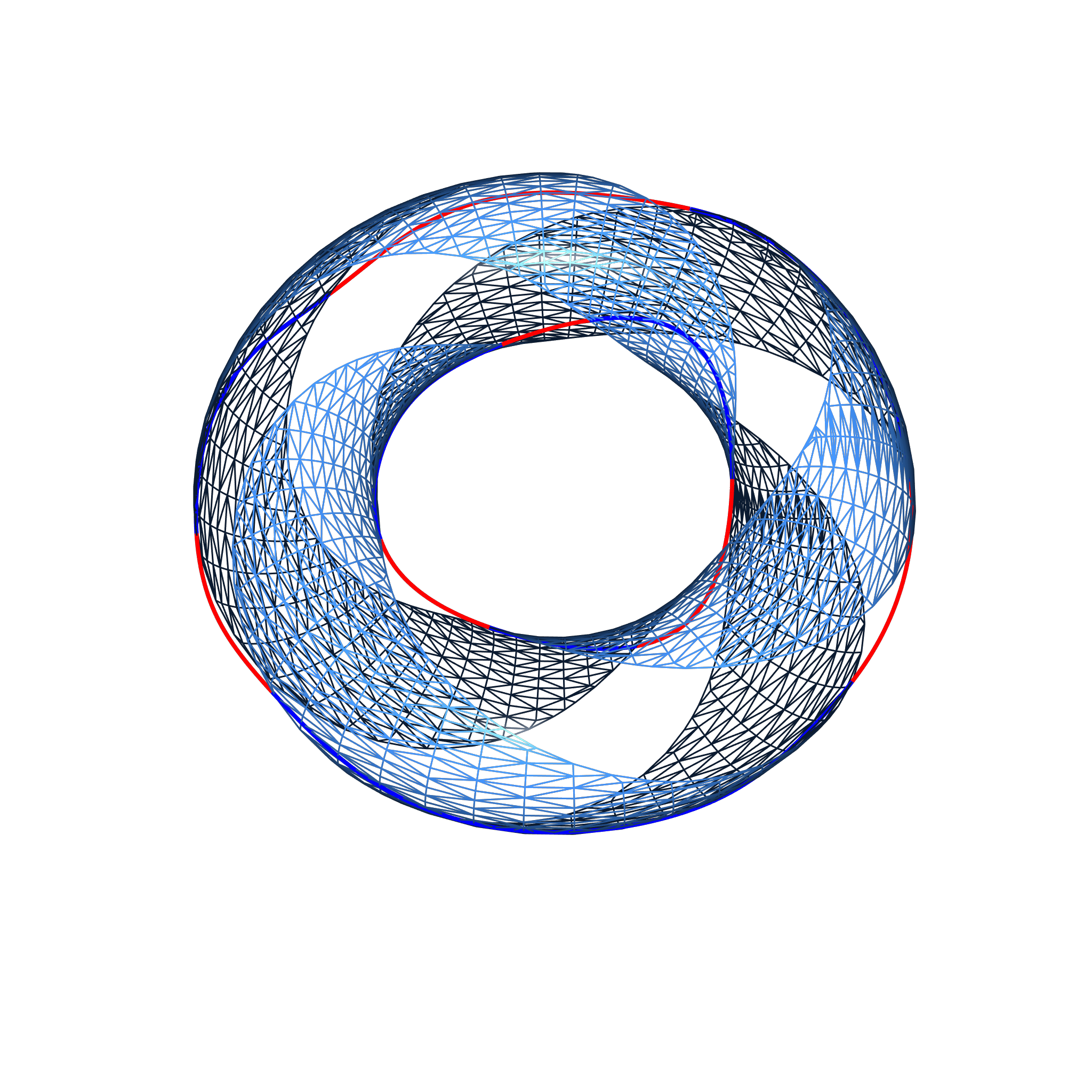}}
  \subfigure[Final intersection results]{
      \includegraphics[width=0.23\linewidth]{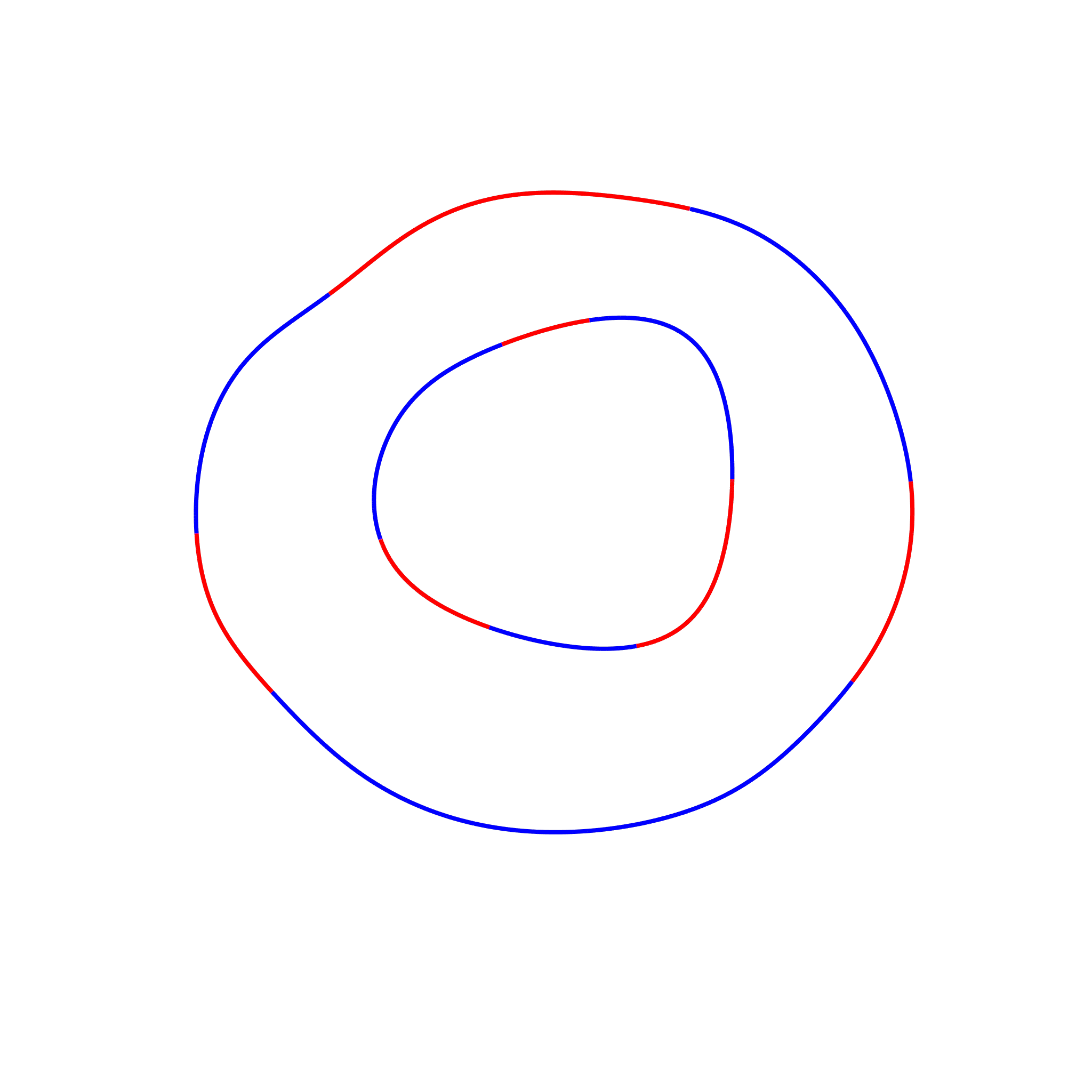}}
  \caption{Intersection between a bi-periodic surface with homology coefficients $(p, q) = (2, 3)$ and a waving surface.  
  (a) Input periodic surface.  
  (b) Relative position with respect to the waving surface.  
  (c,d) Containment queries on the periodic surface correctly remove intersections outside the trimmed region.}
  \label{fig:intersection2}
\end{figure*}

\subsubsection{Tessellation of Trimmed Surfaces}
\label{sec:application}

As a practical application, we use the proposed algorithm to perform tessellation of trimmed periodic surfaces. 
Efficient rendering of trimmed surfaces requires a tessellation stage. 
A common approach is to discretize $p$-curves, sample a uniform grid in the parametric domain, and then apply a triangulation strategy such as constrained Delaunay triangulation (CDT). 
However, triangles lying outside the trimmed region must subsequently be removed—a step that is often error-prone due to geometric inconsistencies such as open or non-manifold boundaries. 
Periodic domains present additional challenges, as naïve inside/outside checks fail across periodic boundaries.

With our algorithm, tessellation becomes both efficient and robust, even in the presence of disconnected or imperfect boundaries. 
Figure~\ref{fig:tessellation} shows an example of a surface trimmed by open and non-manifold boundary curves. 
Figure~\ref{fig:tessellation_demo} demonstrates robust tessellation on a torus-like periodic surface: despite disconnected $p$-curves, our method successfully produces a consistent triangulation of the trimmed region.

\section{Limitations and Future Work}
\label{sec:future}

\textit{Curve Representation}. In this paper, we demonstrated an efficient method to compute the winding number given a point and a parametric curve. The parametric curve is represented by the \bezier curve or rational \bezier curve. If the p-curves are given in other forms, a conversion is required, which needs some overhead. For general B-spline curves or even NURBS curves, the upper bound of derivatives can also be efficiently evaluated, but the linear evaluation method still needs to be developed in future work.

\textit{Differentiable Trimming}. Recently, inverse/differentiable rendering has evolved rapidly. The method for mesh, voxel, and point cloud has been established. Interestingly, \cite{worchel2023differentiable} developed a method to perform differentiable rendering for parametric surfaces via differentiable triangulation. An important topic is whether we can develop differentiable trimming techniques for parametric surfaces.

\section{Conclusion}
\label{sec:conclusion}

In this paper, we presented a novel method for containment queries on trimmed surfaces. By exploiting the bound of \bezier pieces, we designed an accelerating structure for 2D winding number computation. The simplicity of the proposed ellipse bound enables us to use more efficient methods to evaluate the boundary curves without loss of numerical robustness. For the ubiquitous periodic geometric surfaces, we leverage universal covering space to perform inside/outside classification, and simplification methods are used to avoid the infinite computation of winding numbers. Experiment shows our method is efficient, outperforming the existing ray casting method and the winding number method proposed in recent work \cite{spainhour_robust_2024}. From the aspect of robustness, our method can tolerate geometric errors, spur edges, and a variety of corner cases in the periodic domains.

\bibliographystyle{ACM-Reference-Format}
\bibliography{sample-base}

\FloatBarrier
\begin{figure}[h]
    \centering
    \subfigbottomskip=2pt
    \subfigcapskip=-5pt
    \subfigure[Original model]{
        \includegraphics[width=0.4\linewidth]{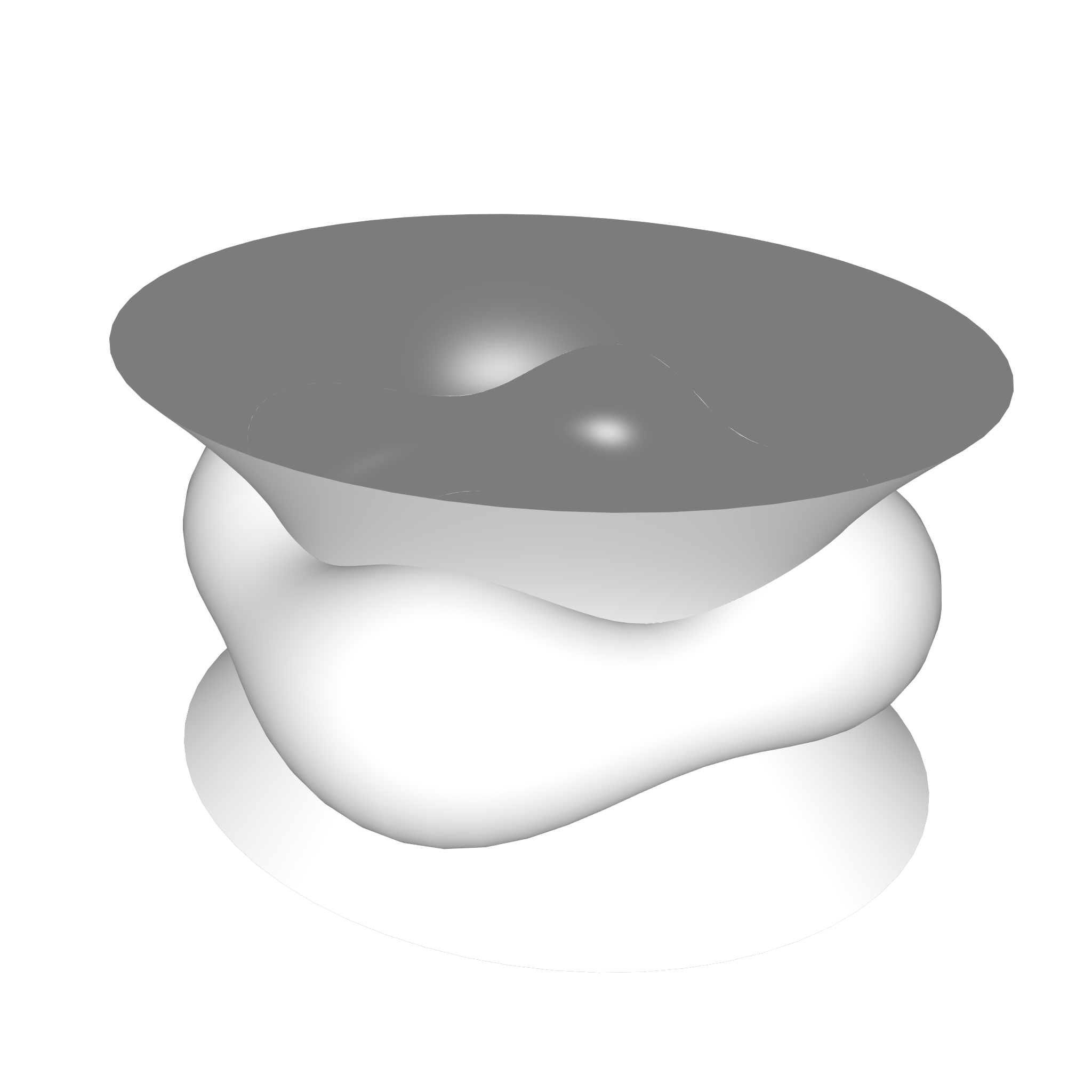}}
    \subfigure[Containtment Result]{
        \includegraphics[width=0.4\linewidth]{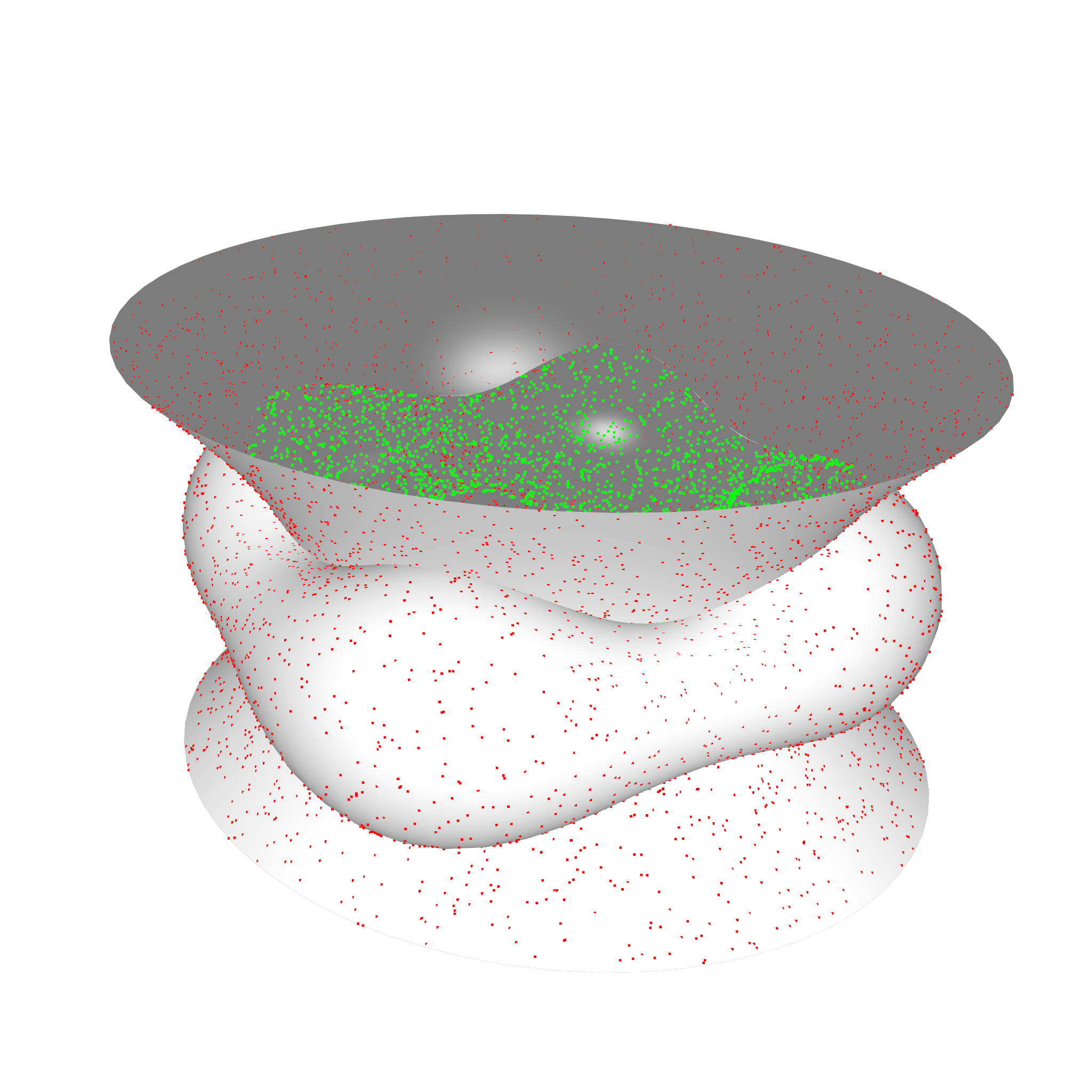}} \\
    \subfigure[Result on surface 1]{
        \includegraphics[width=0.4\linewidth]{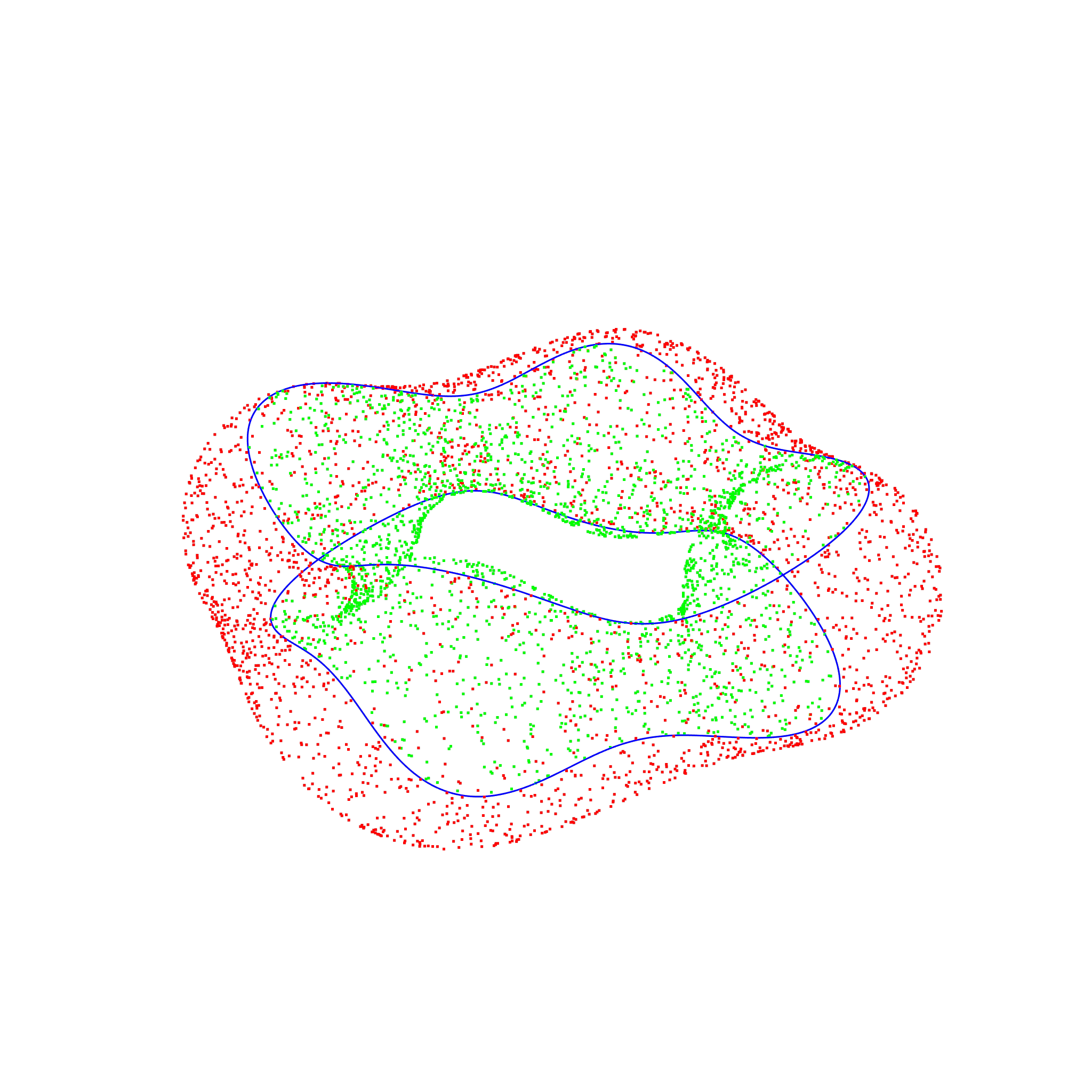}}
    \subfigure[Result on surface 2]{
        \includegraphics[width=0.4\linewidth]{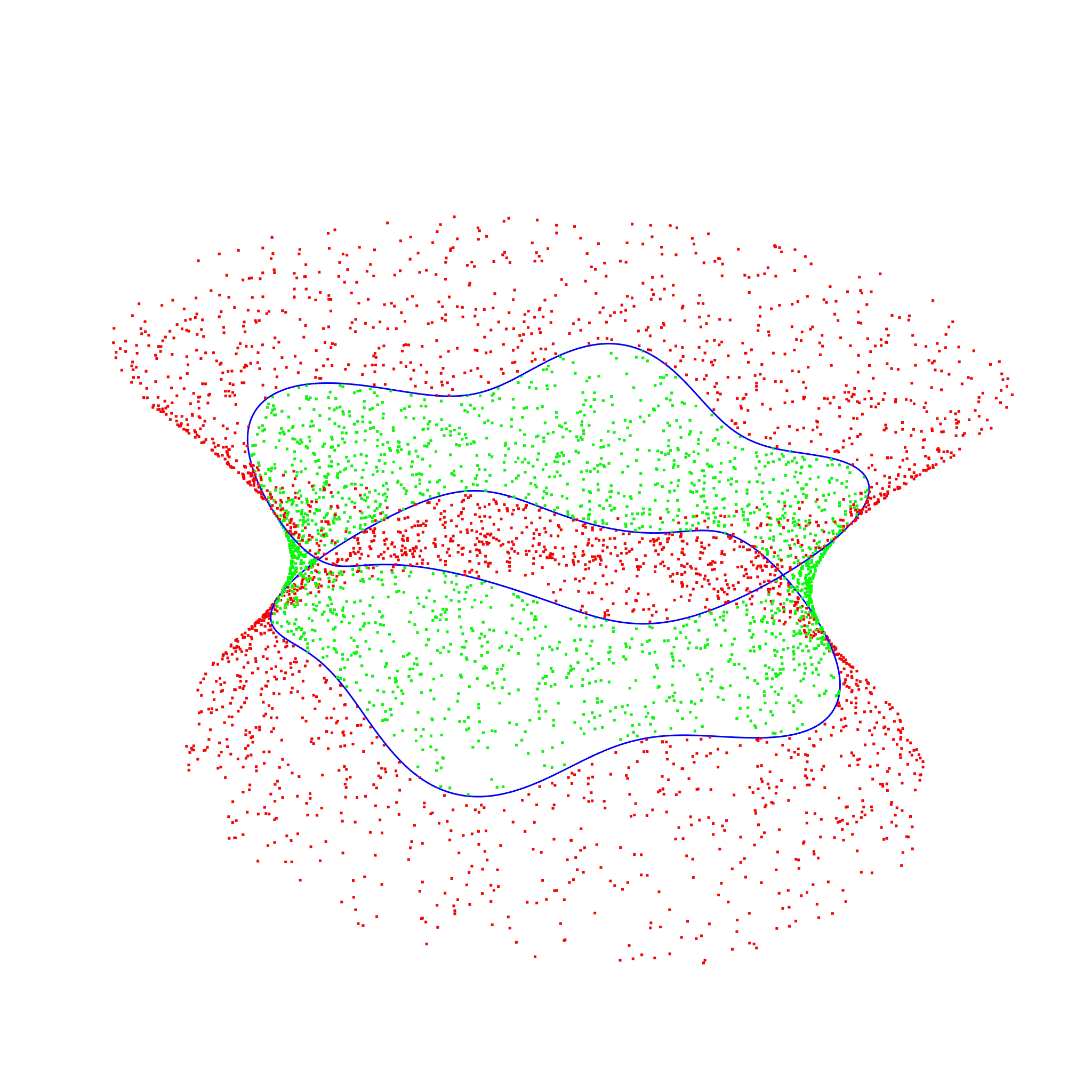}}
    \caption{Containtment queries on a B-Rep model with multiple patches. The surface is generated by union between a hyperbolid surface and defromed torus. Blue curves are intersection curves between two surface. Samples were generated on the geometric surfaces.}
    \label{fig:containment_demo}
\end{figure}

\begin{figure}[h]
    \centering
    \subfigbottomskip=2pt
    \subfigcapskip=-5pt
    \subfigure[Original model]{
        \includegraphics[width=0.4\linewidth]{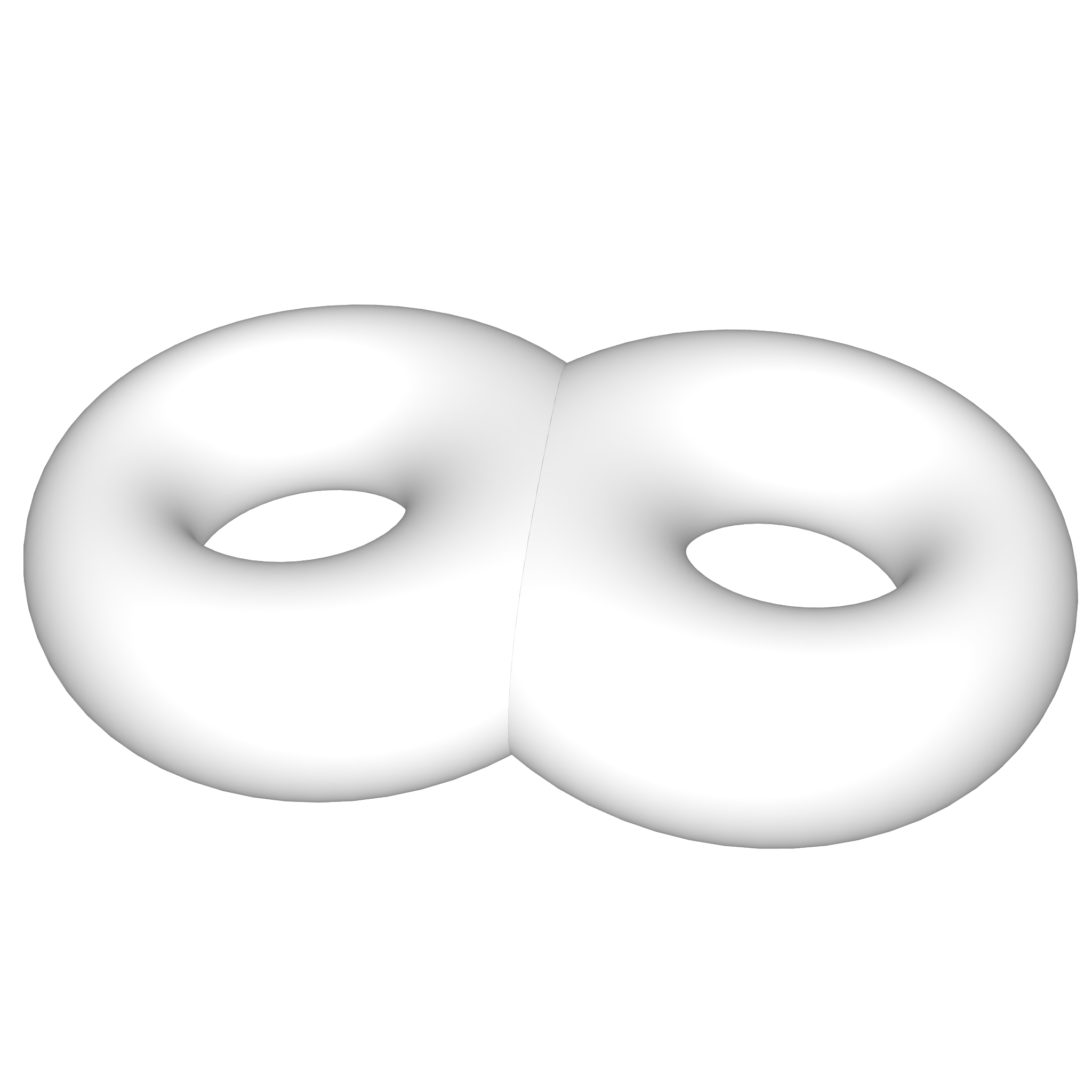}}
    \subfigure[Containtment Result]{
        \includegraphics[width=0.4\linewidth]{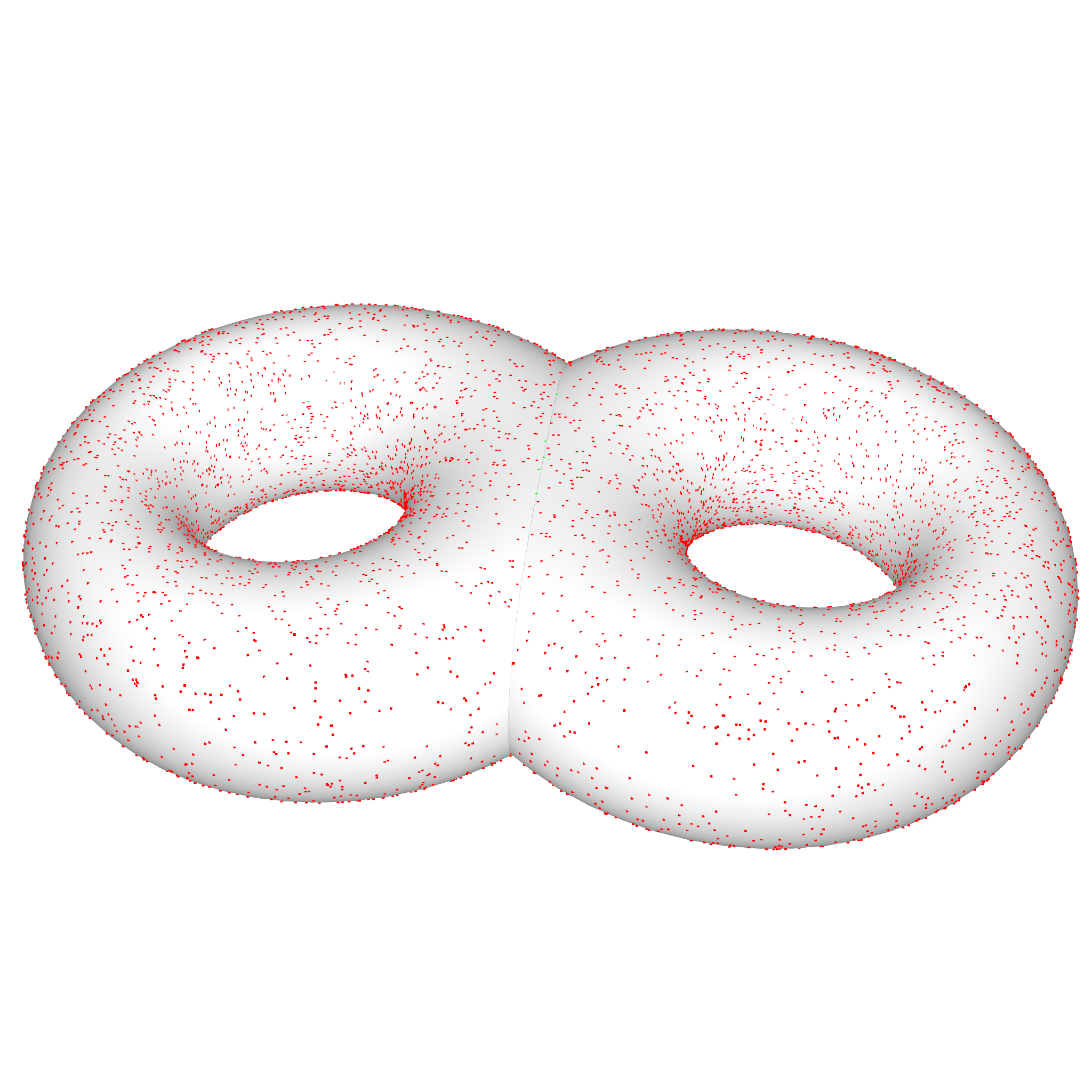}} \\
    \subfigure[Result on surface 1]{
        \includegraphics[width=0.4\linewidth]{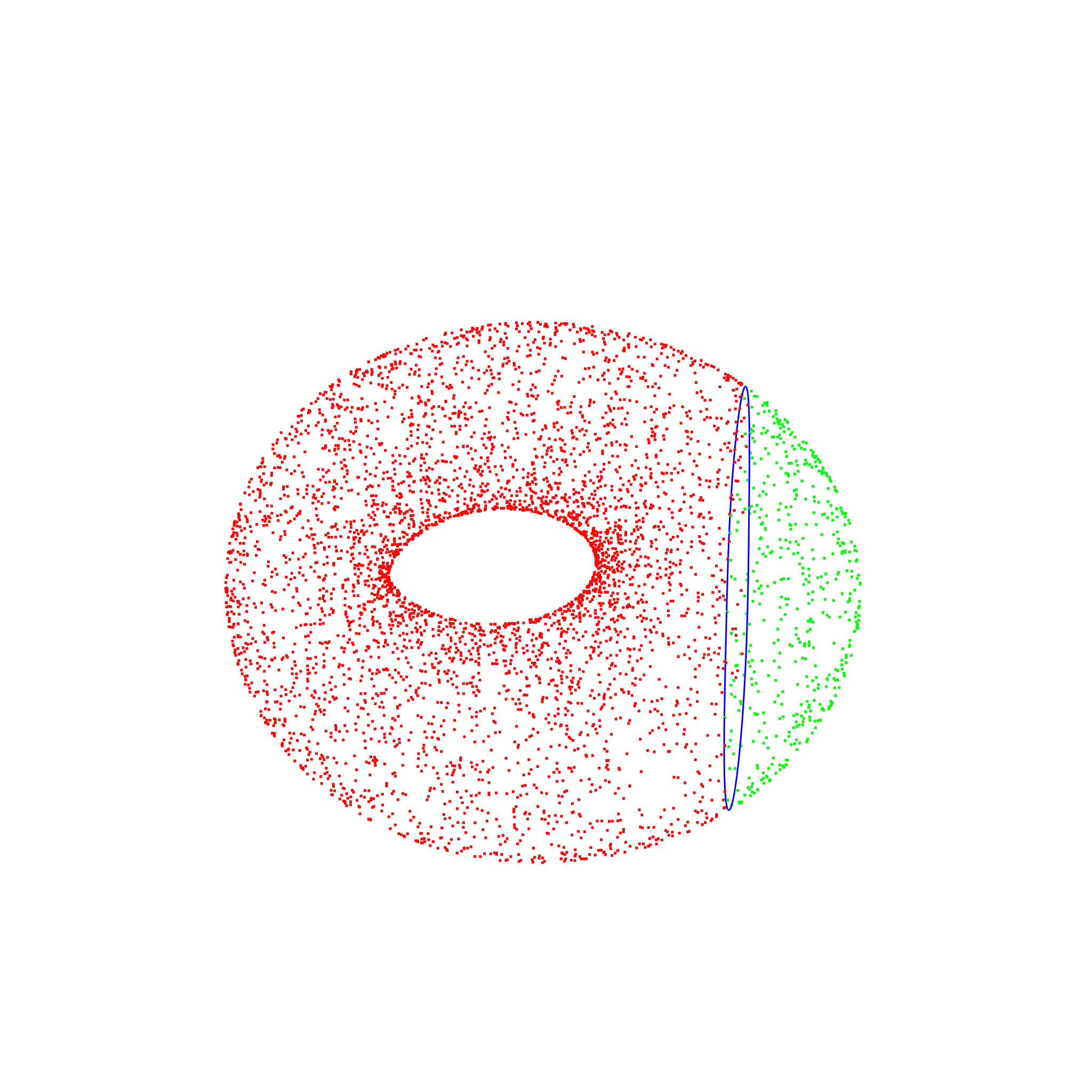}}
    \subfigure[Result on surface 2]{
        \includegraphics[width=0.4\linewidth]{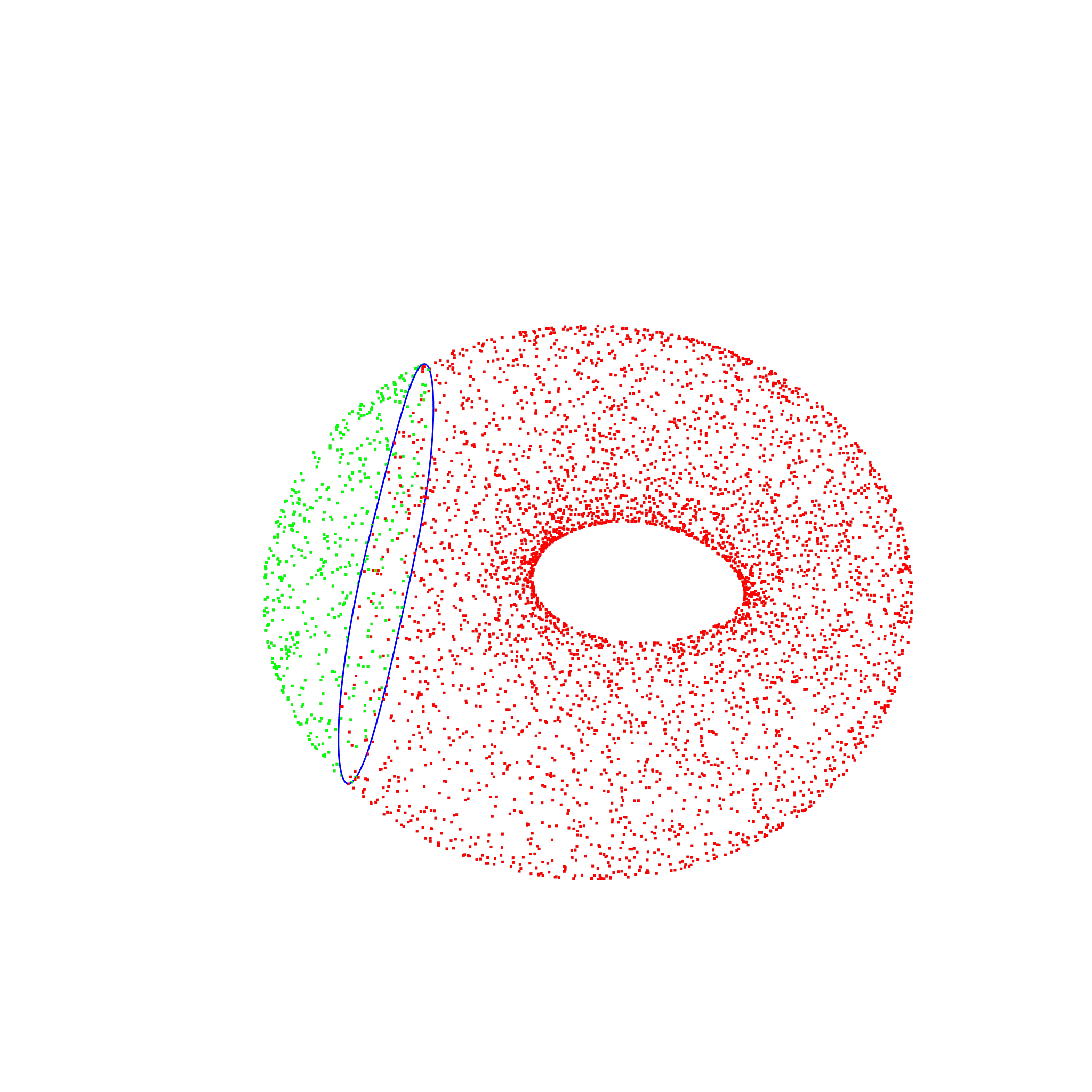}}
    \caption{Containtment queries on a genus 2 B-Rep model defined by the union of two torus like surfaces. Blue curves are intersection curves between two surface. Samples were generated on the geometric surfaces.}
    \label{fig:containment_demo2}
\end{figure}

\begin{figure}[h]
    \centering
    \subfigbottomskip=2pt
    \subfigcapskip=-5pt
    \subfigure[Model 1]{
        \includegraphics[width=0.48\linewidth]{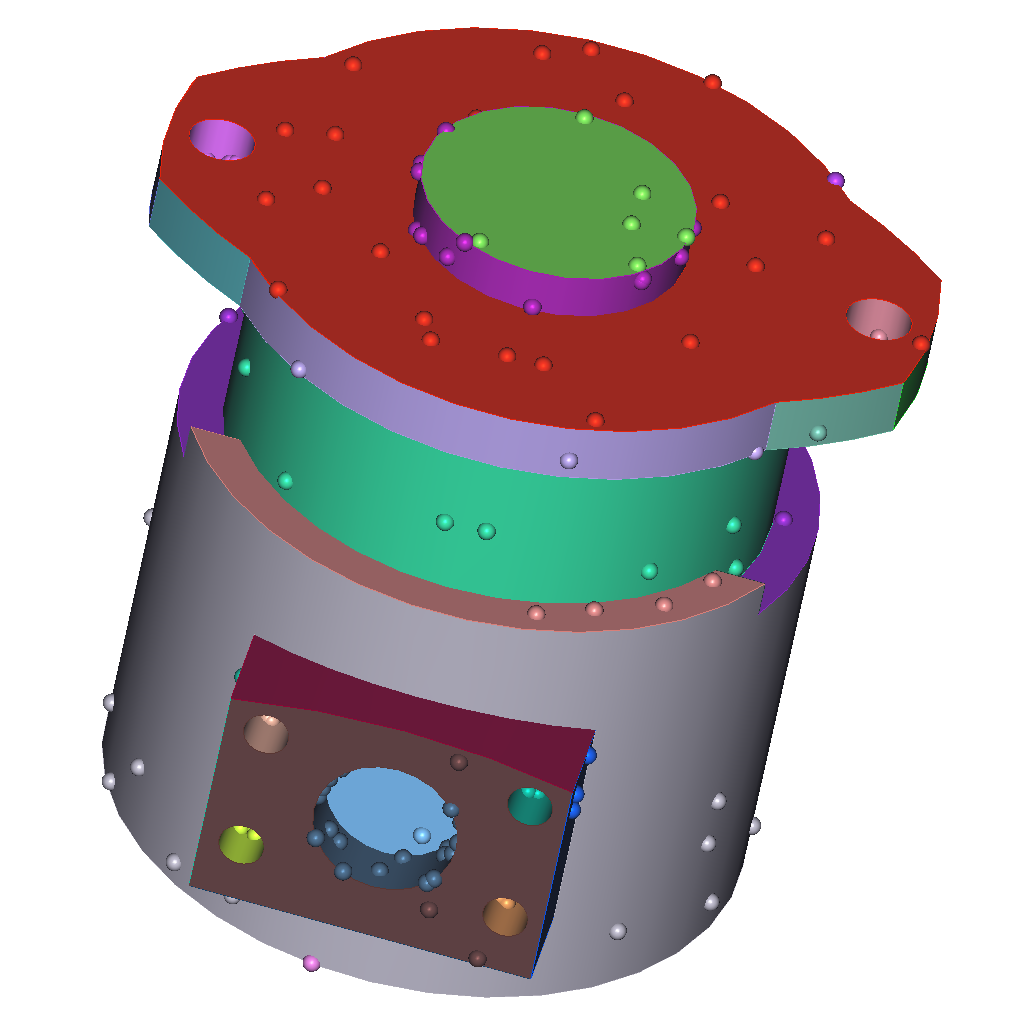}}
    \subfigure[Model 2]{
        \includegraphics[width=0.48\linewidth]{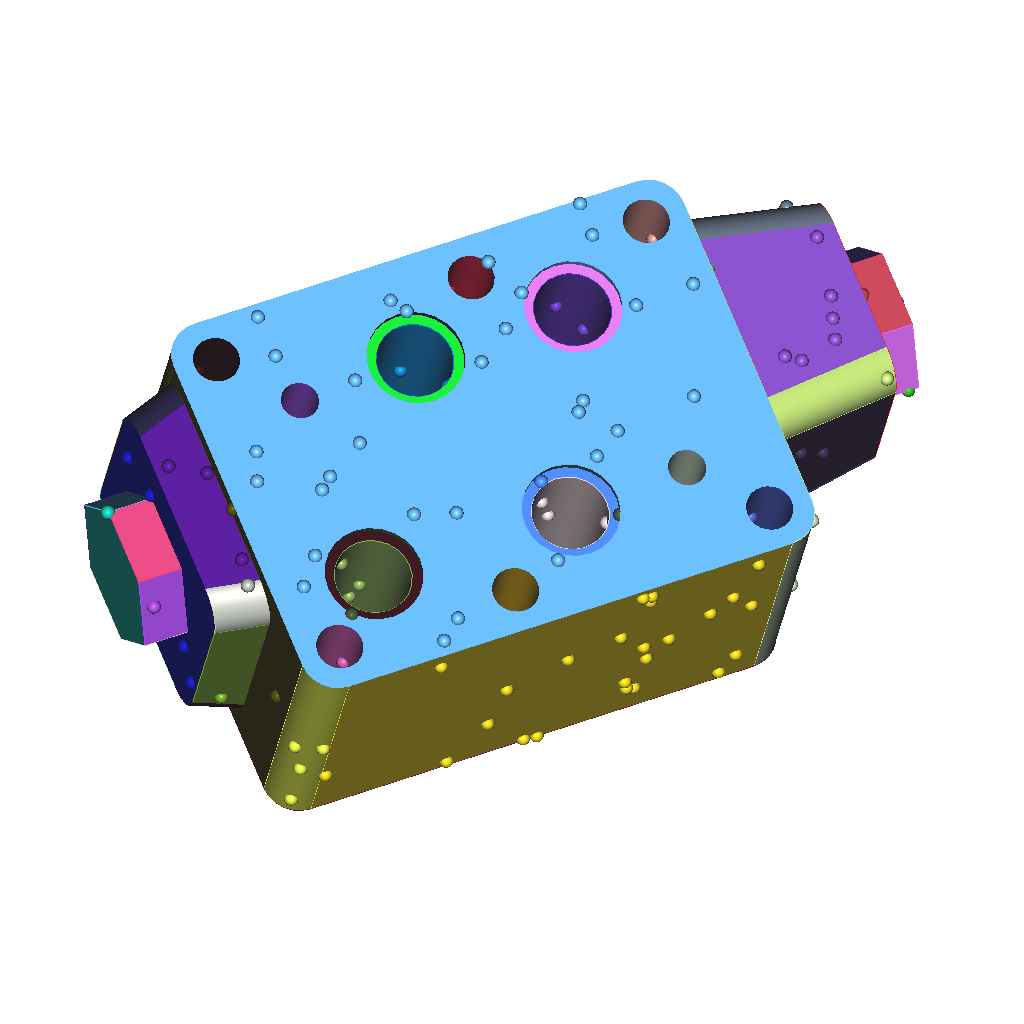}}
    \caption{Real world models. We randomly generate some points on the underlying geometric surfaces and use proposed algorithm to select the points on the boundary of B-Rep model.}
    \label{fig:complex_models}
\end{figure}

\begin{figure}[h]
    \centering
    \subfigbottomskip=2pt
    \subfigcapskip=-5pt
    \subfigure[Parameter Domain]{
        \includegraphics[width=0.3\linewidth]{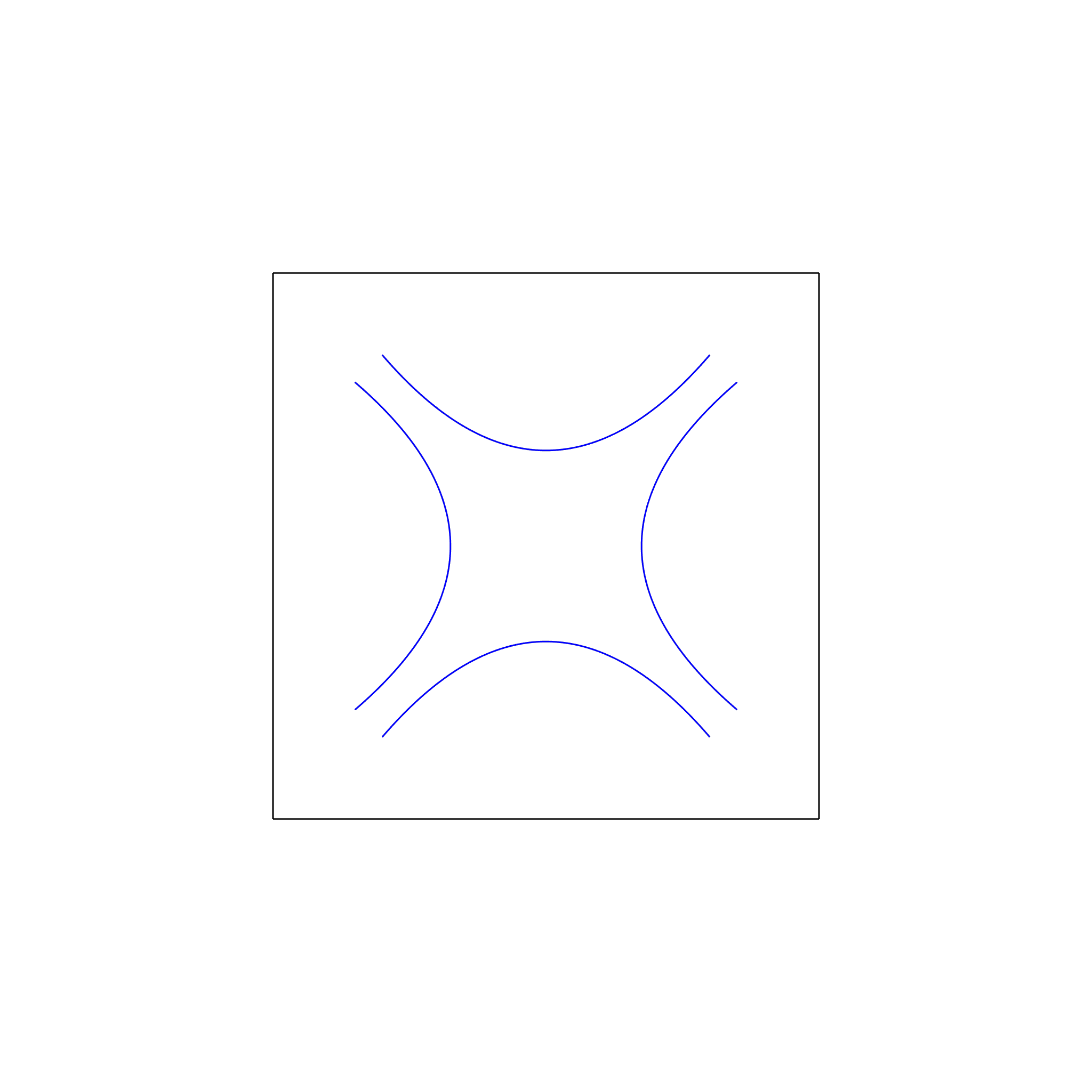}}
    \subfigure[Wireframe]{
        \includegraphics[width=0.3\linewidth]{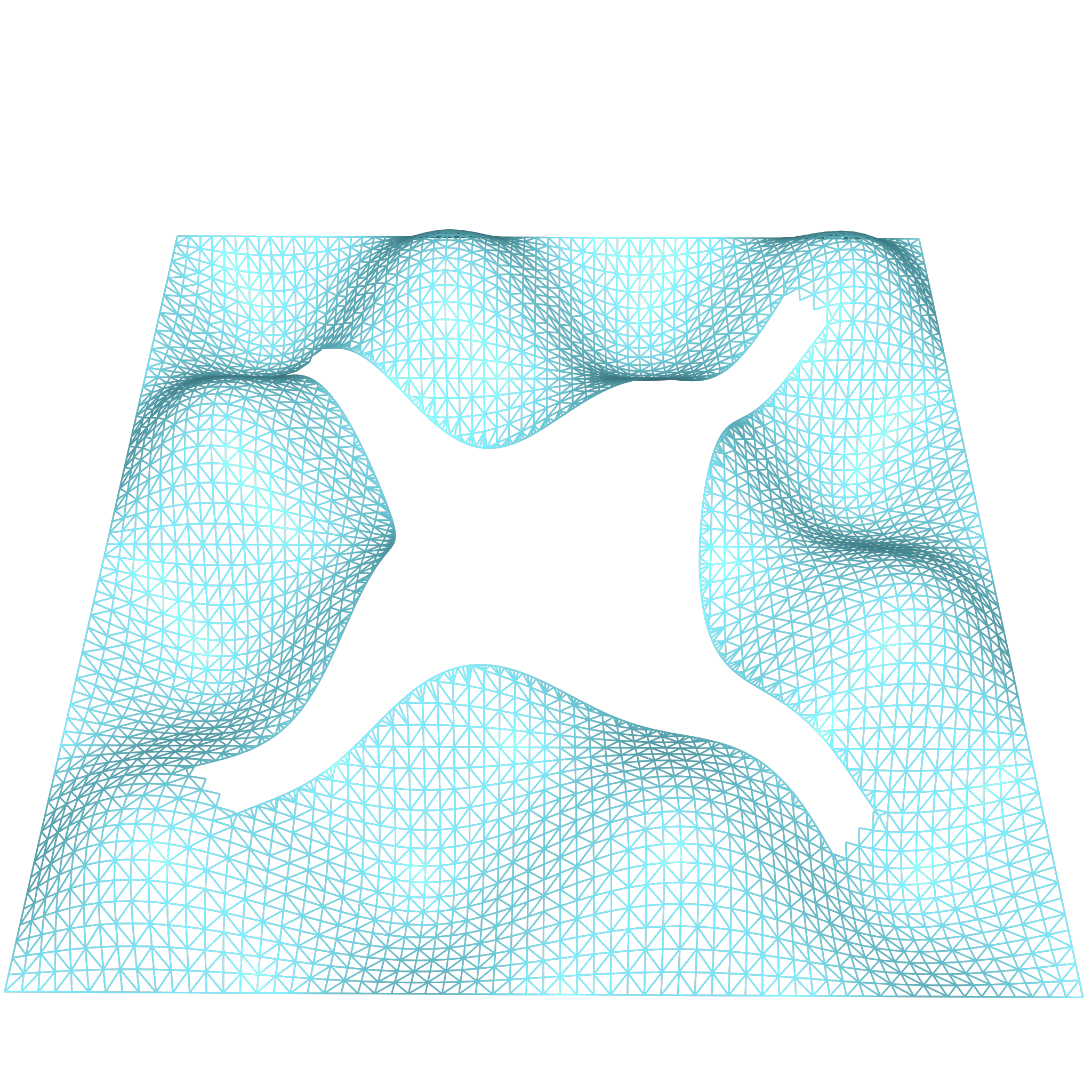}}
    \subfigure[Physical Space]{
        \includegraphics[width=0.3\linewidth]{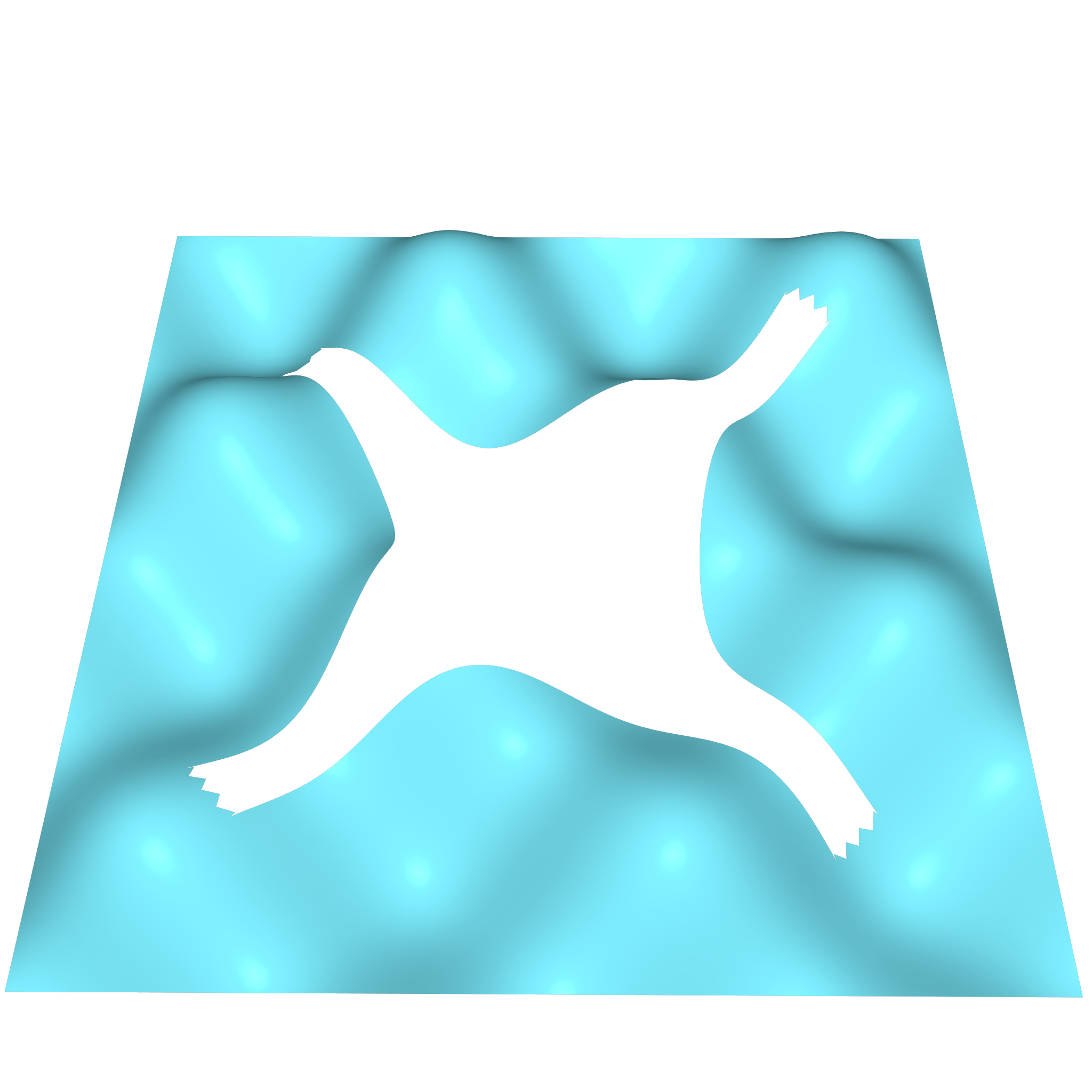}} \\
    \subfigure[Parameter Domain]{
        \includegraphics[width=0.3\linewidth]{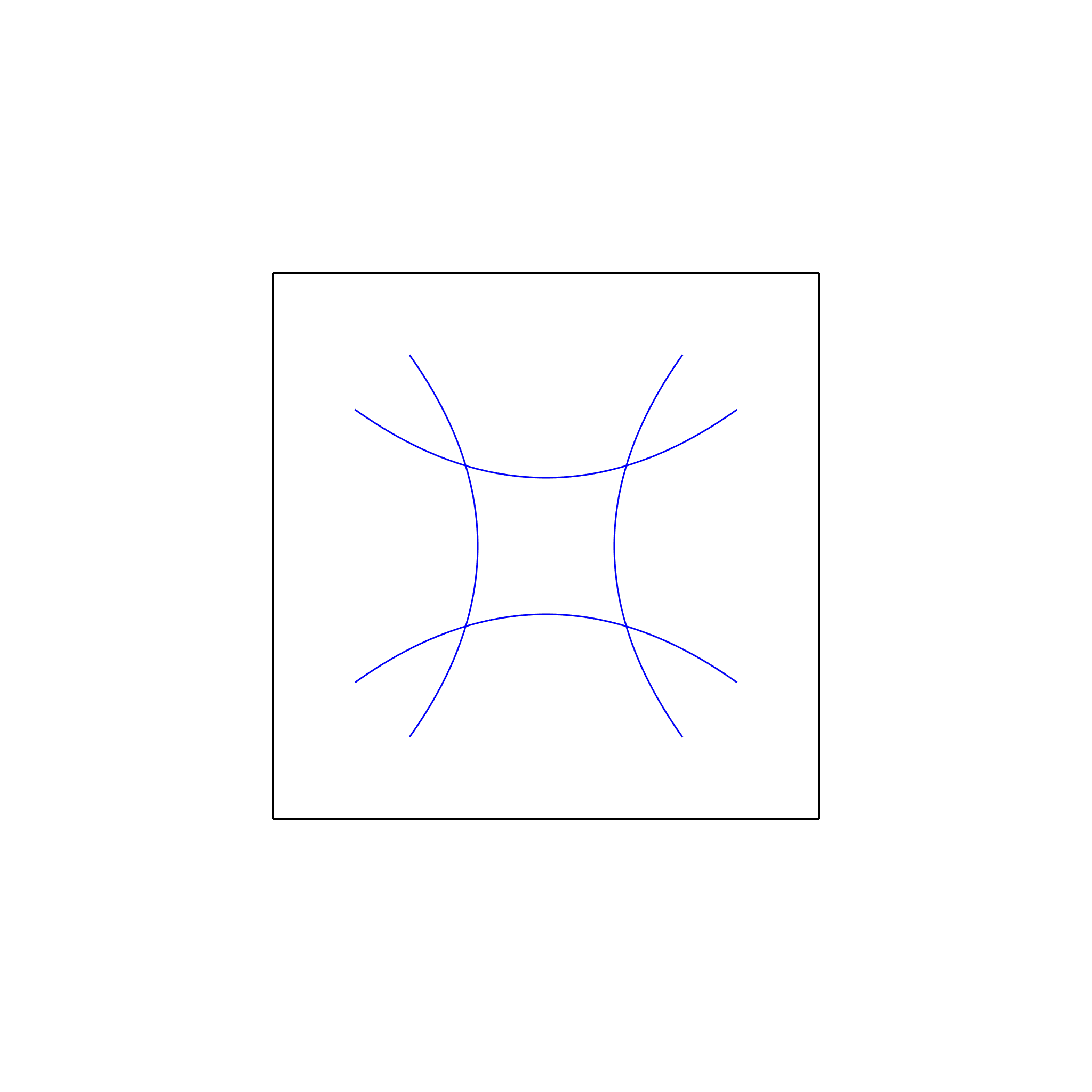}}
    \subfigure[Wireframe]{
        \includegraphics[width=0.3\linewidth]{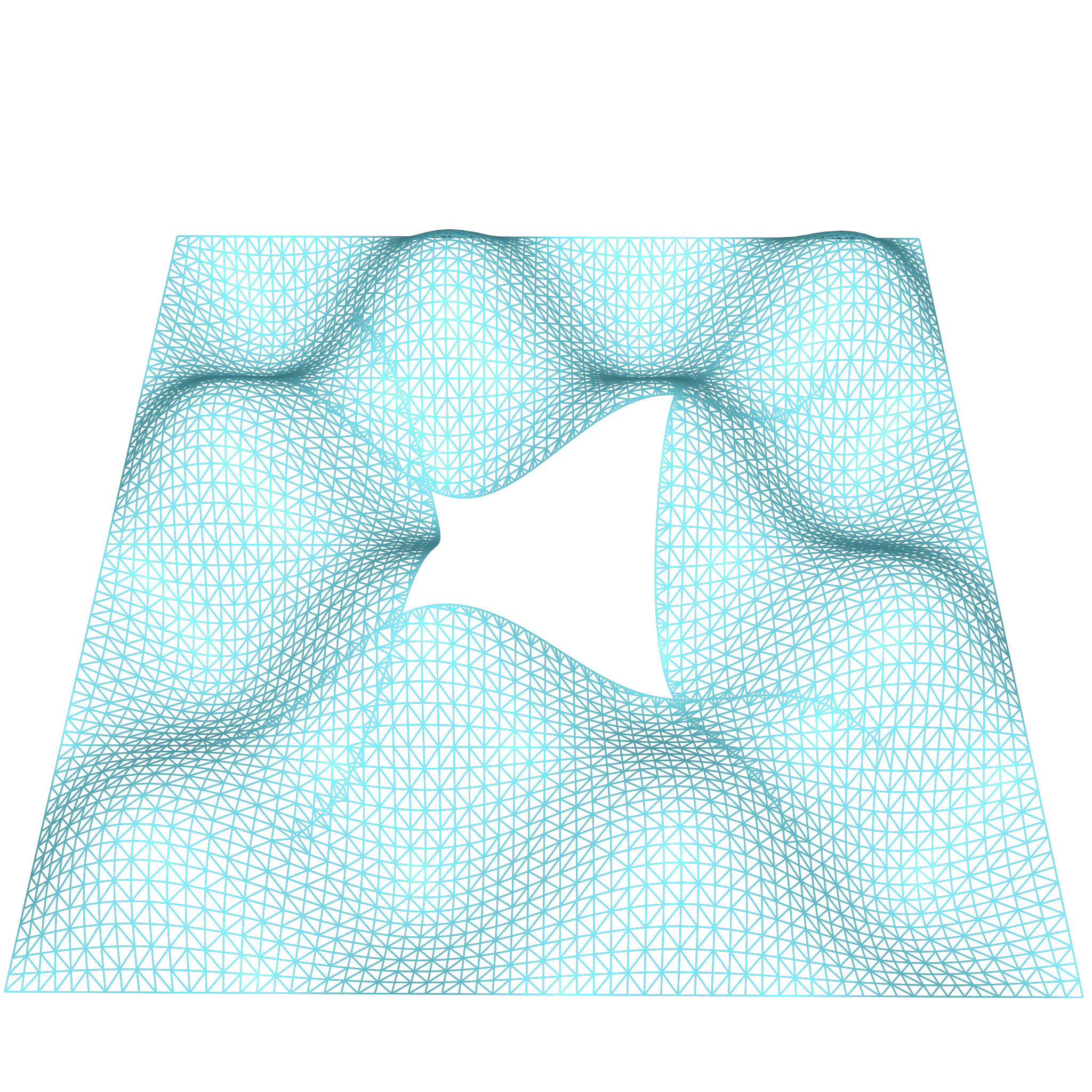}}
    \subfigure[Physical Space]{
        \includegraphics[width=0.3\linewidth]{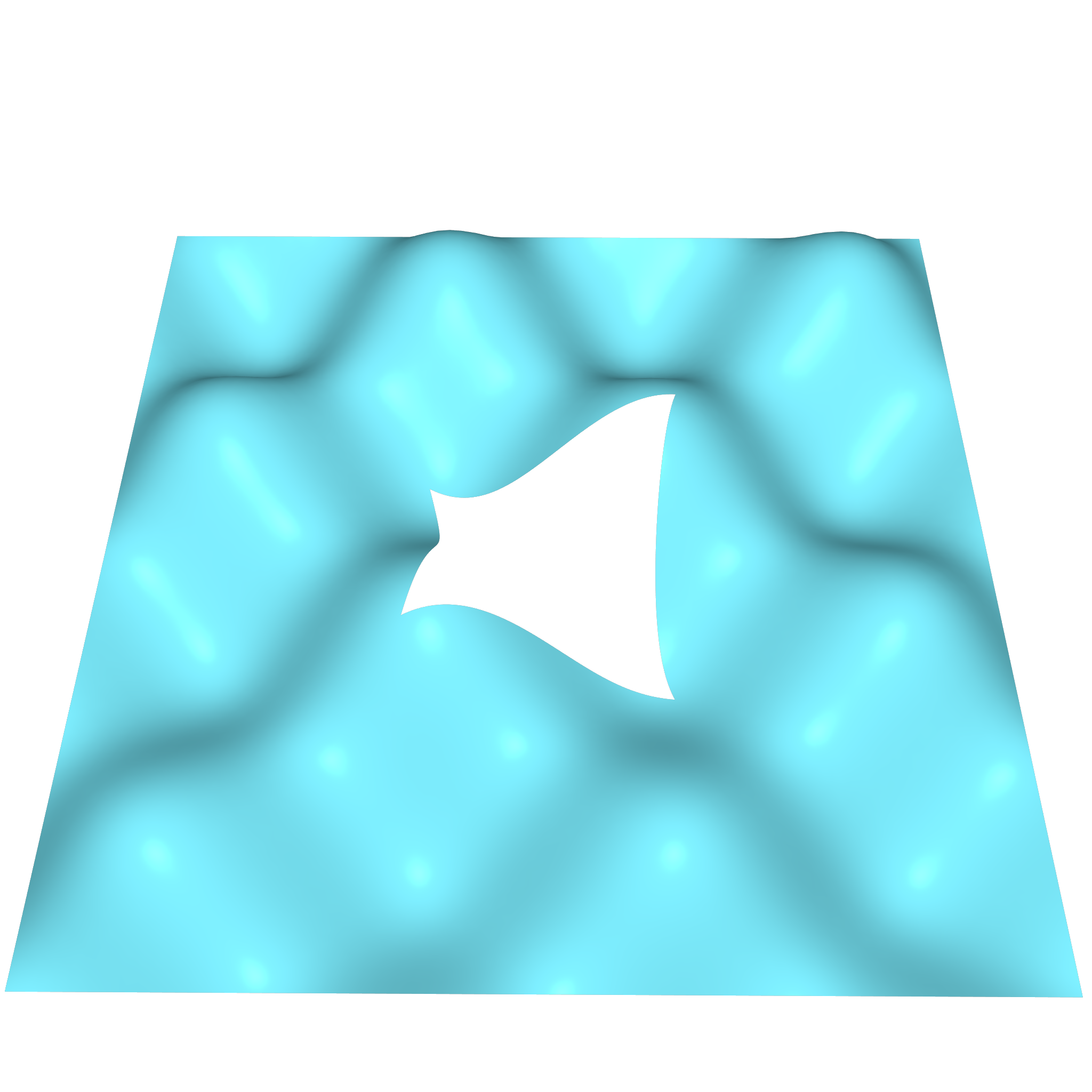}}
    \caption{Illustration of robust periodic trimmed surface tessellation. First row: tessellation of surface with open boundary curves. Second row: tessellation of surface with non-manifold boundary curves.}
    \label{fig:tessellation}
\end{figure}

\begin{figure}[h]
    \centering
    \subfigbottomskip=2pt
    \subfigcapskip=-5pt
    \subfigure[Covering Space]{
        \includegraphics[width=0.48\linewidth]{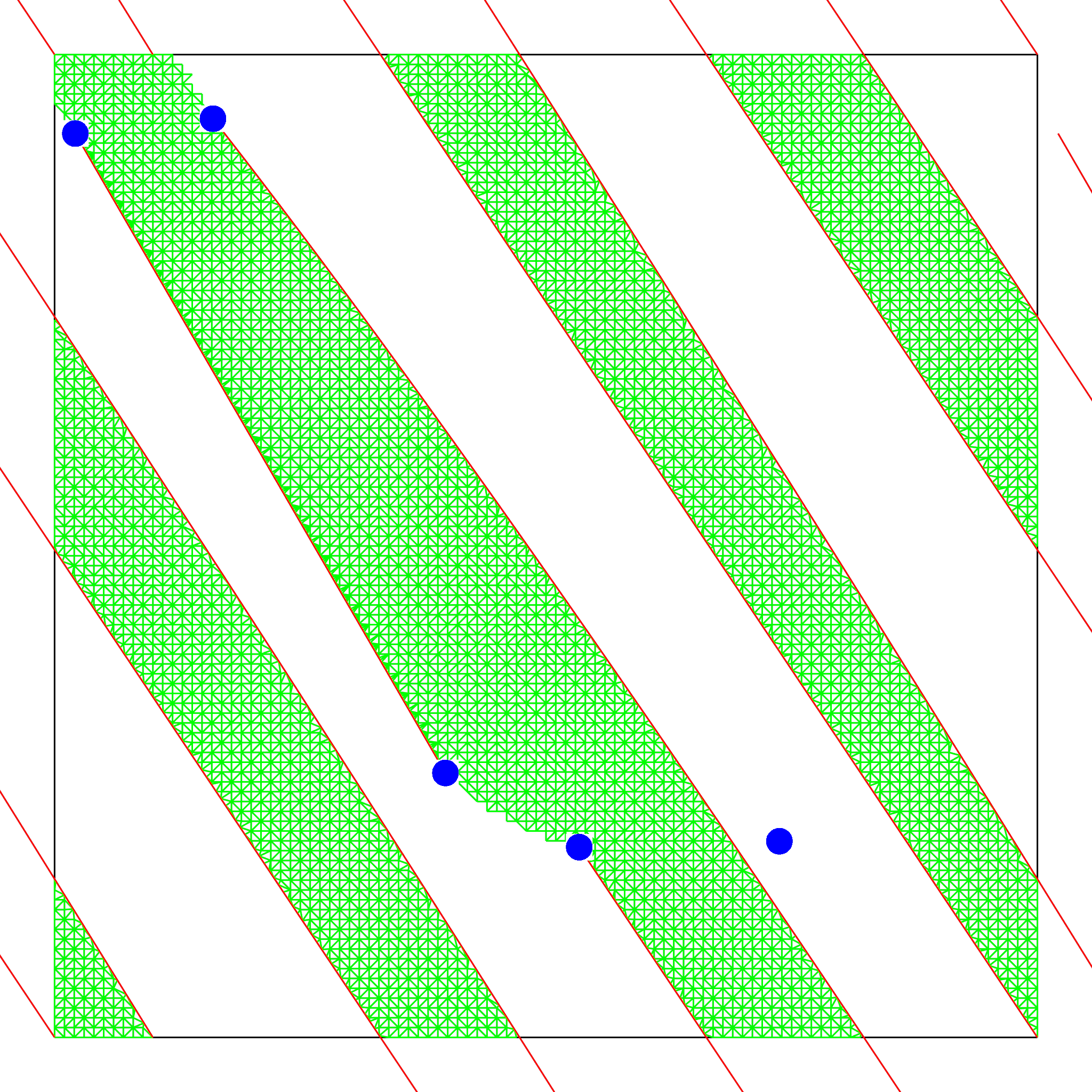}}
    \subfigure[Physical Space]{
        \includegraphics[width=0.48\linewidth]{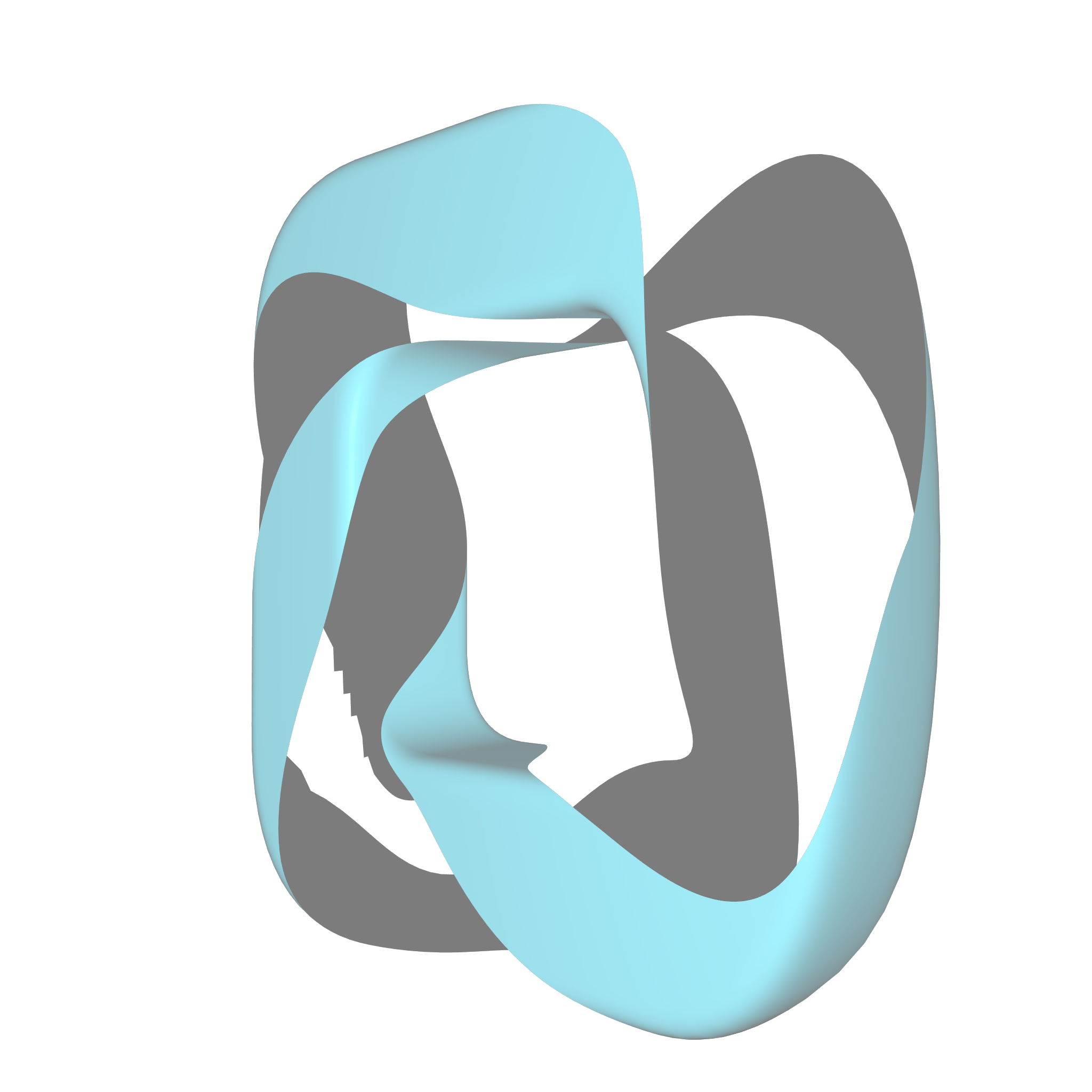}}
    \caption{Illustration of robust periodic trimmed surface tessellation. The black rectangle is boundary of parameter domain. Blue points are control points of the boundary curves. Even the boundary curves are not end-to-end continuous, our method can generate reasonable tessellation result.}
    \label{fig:tessellation_demo}
\end{figure}

\begin{figure*}[h]
    \centering
    \subfigbottomskip=2pt
    \subfigcapskip=-5pt
    \subfigure[4]{
    \includegraphics[width=0.24\linewidth]{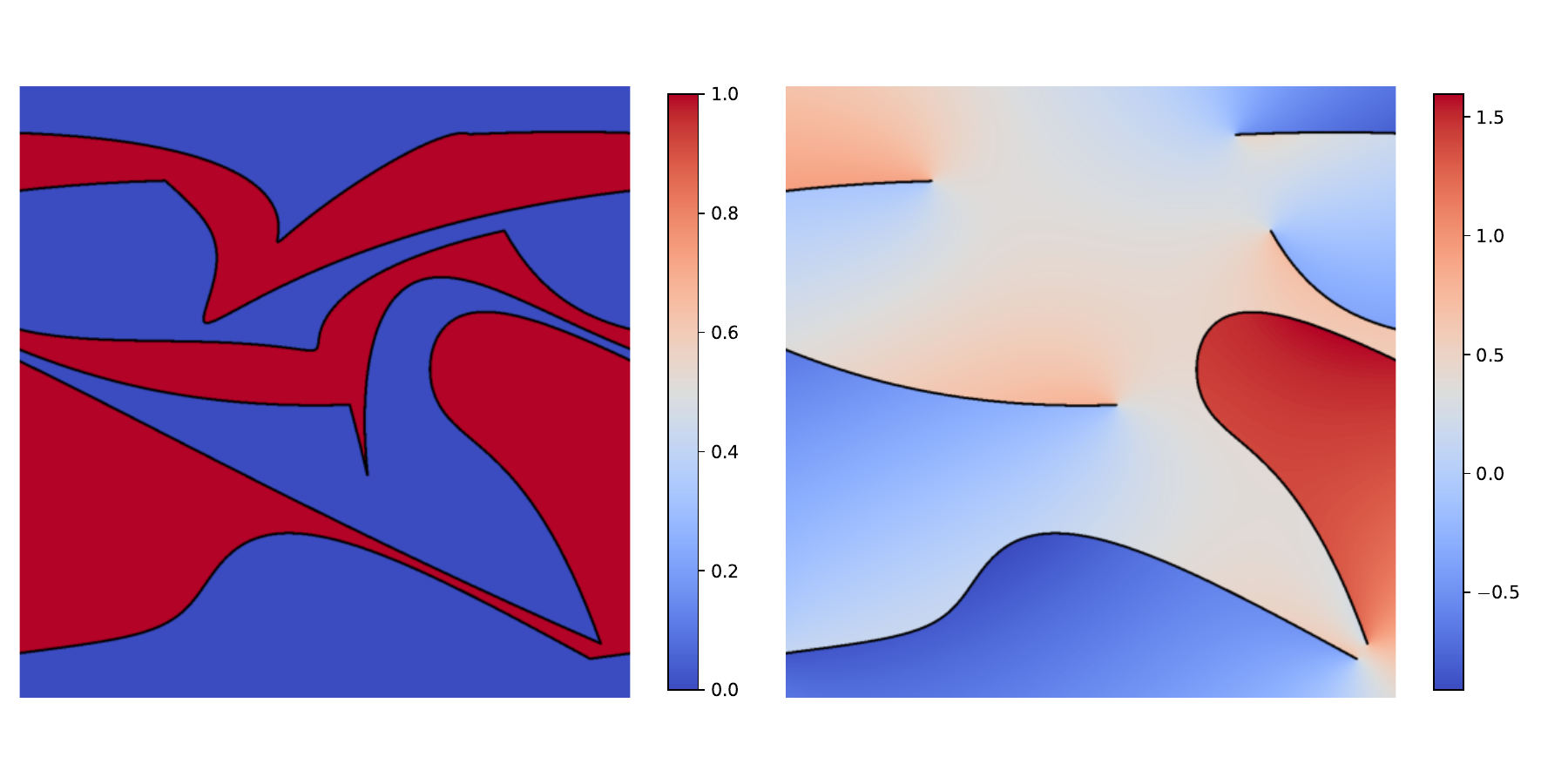}}
    \subfigure[24]{
    \includegraphics[width=0.24\linewidth]{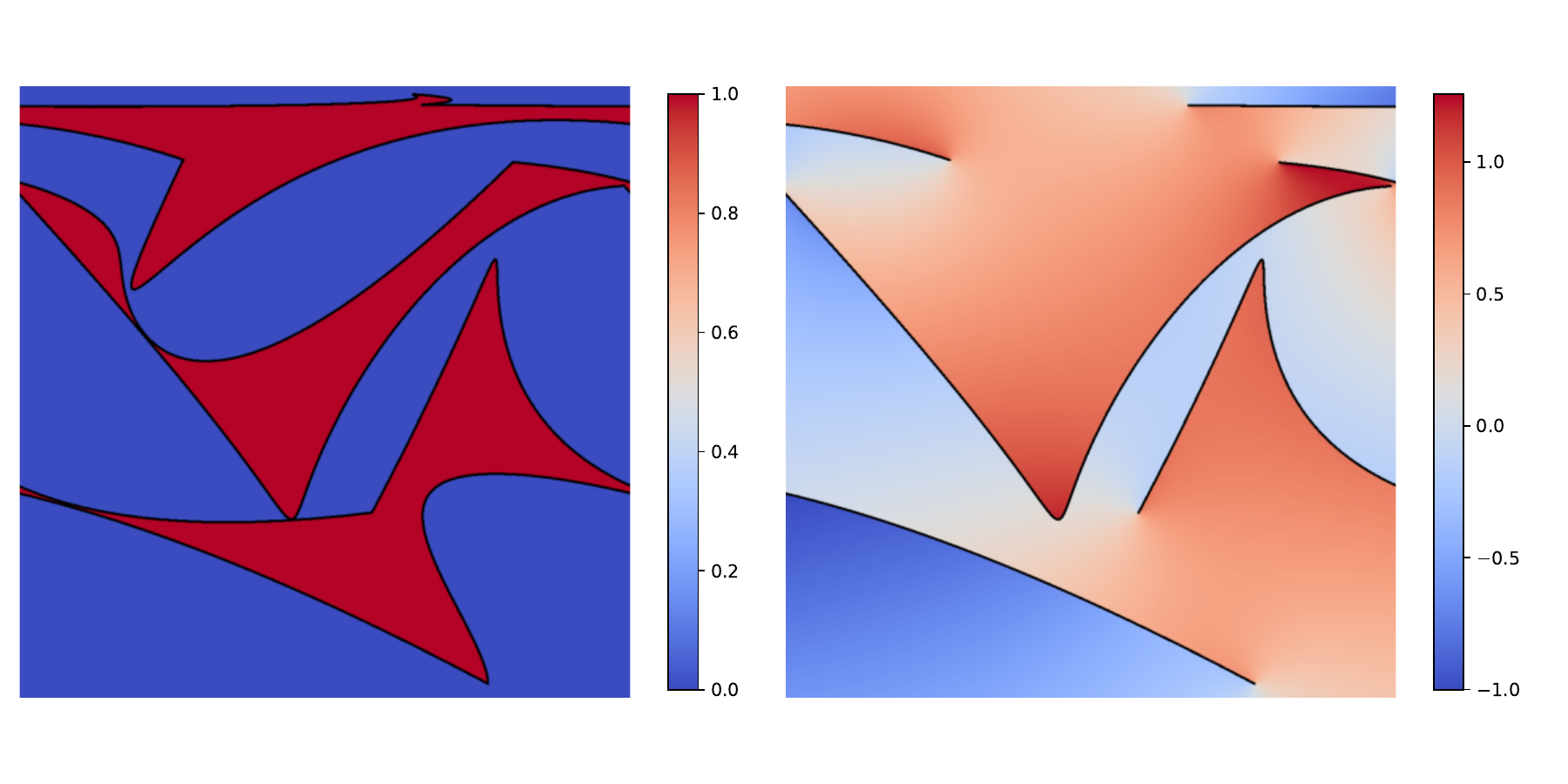}}
    \subfigure[37]{
    \includegraphics[width=0.24\linewidth]{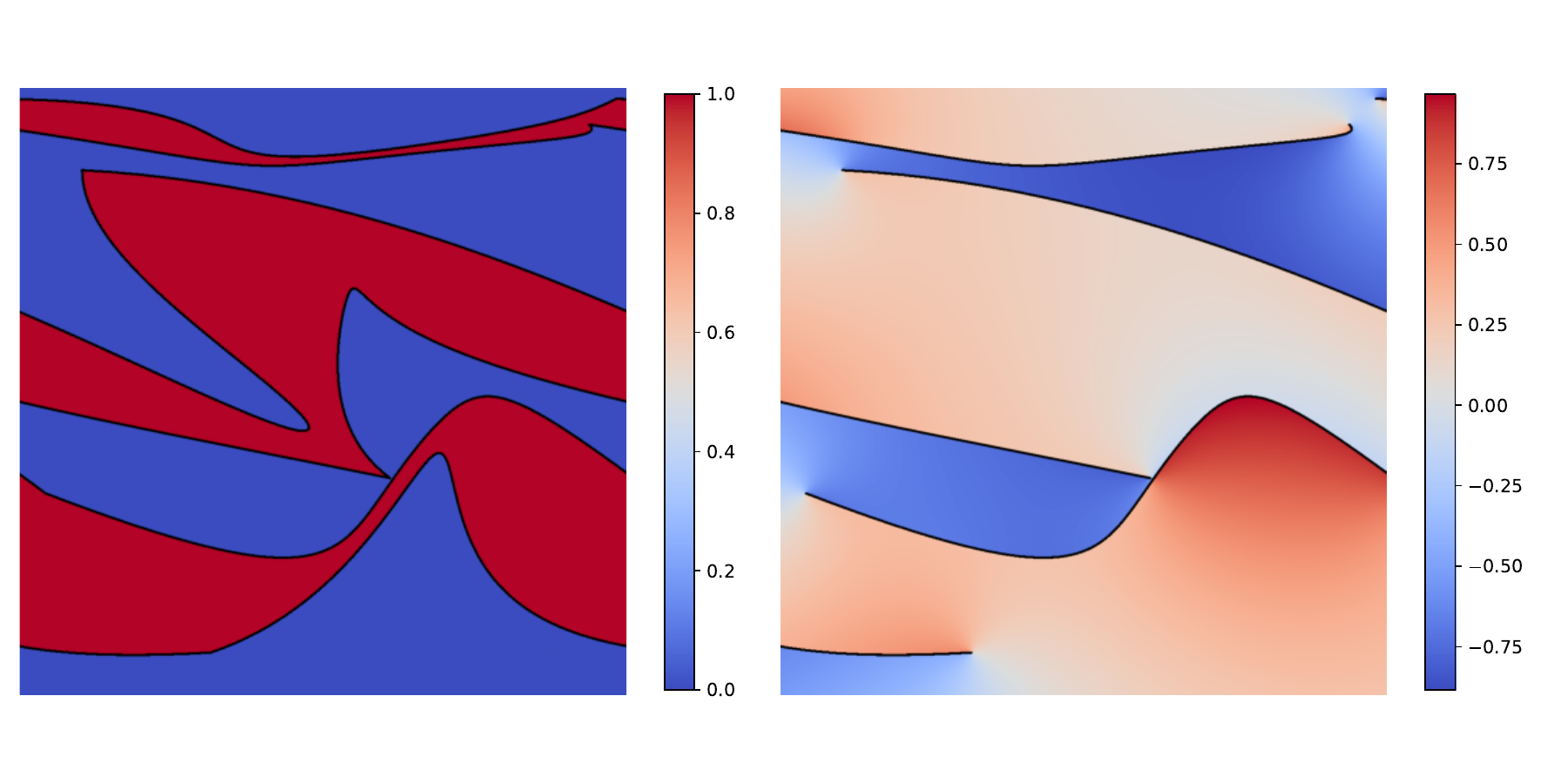}}
    \subfigure[57]{
    \includegraphics[width=0.24\linewidth]{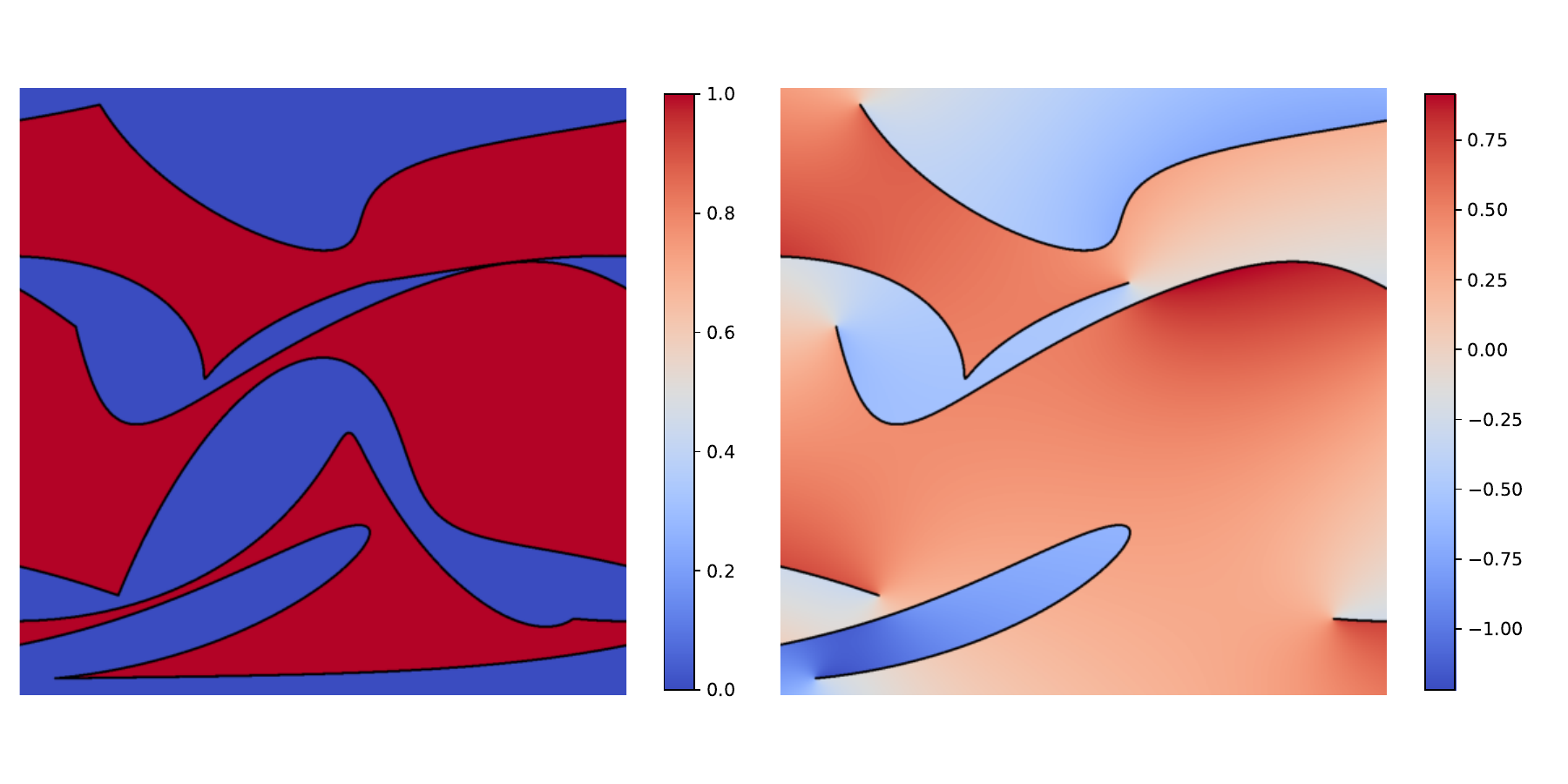}} \\
    \subfigure[64]{
    \includegraphics[width=0.24\linewidth]{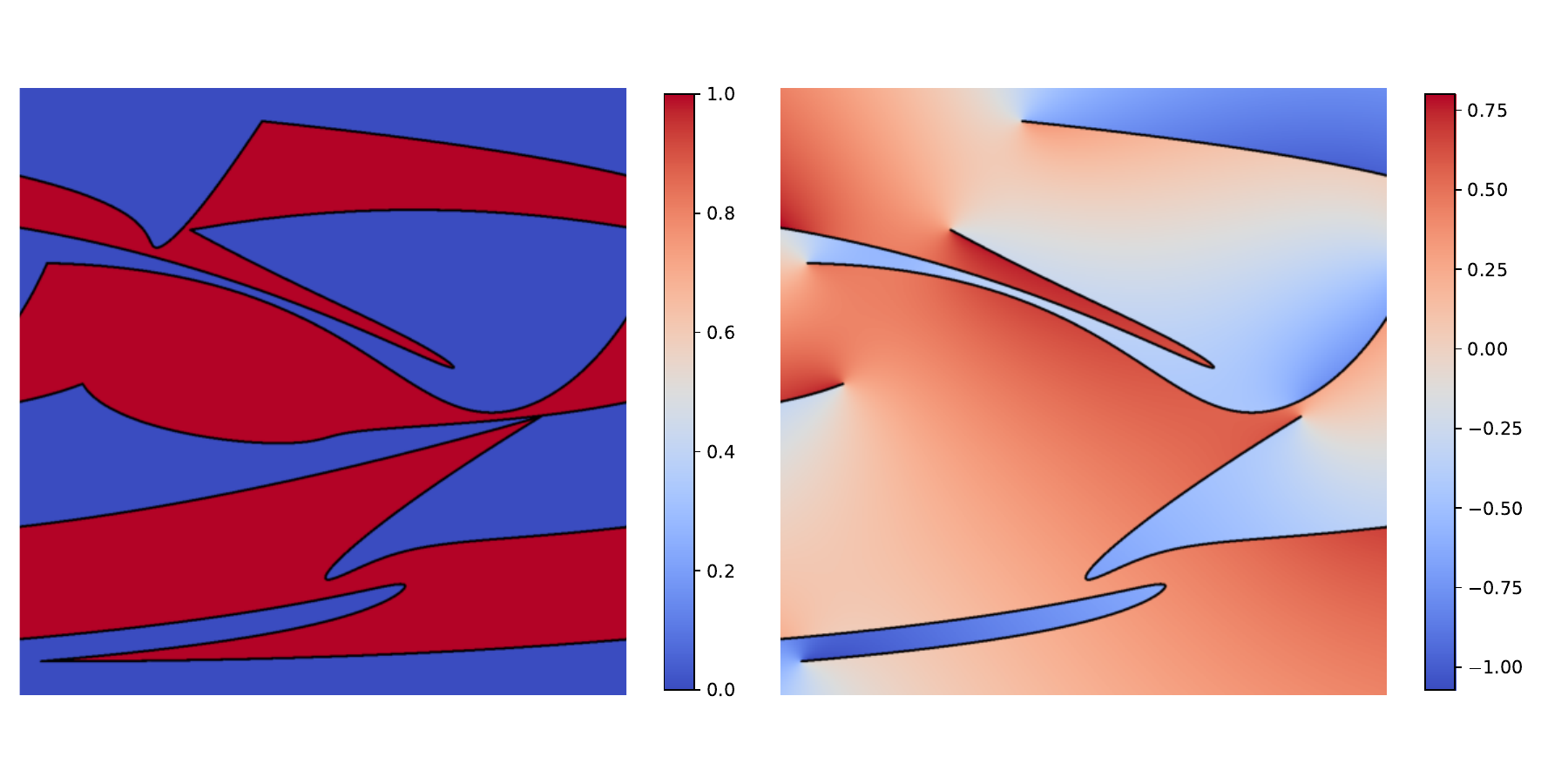}}
    \subfigure[48]{
    \includegraphics[width=0.24\linewidth]{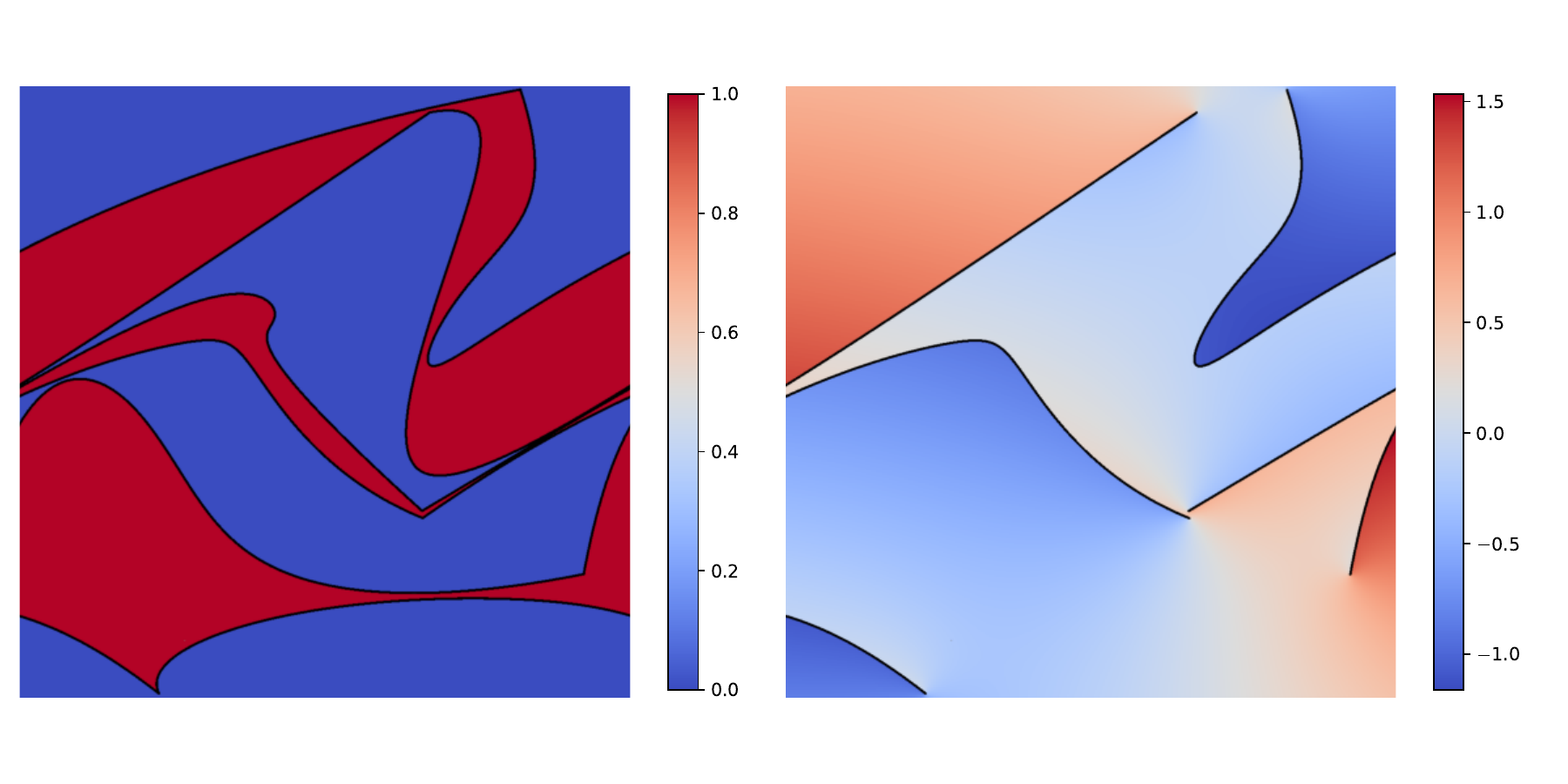}}
    \subfigure[81]{
    \includegraphics[width=0.24\linewidth]{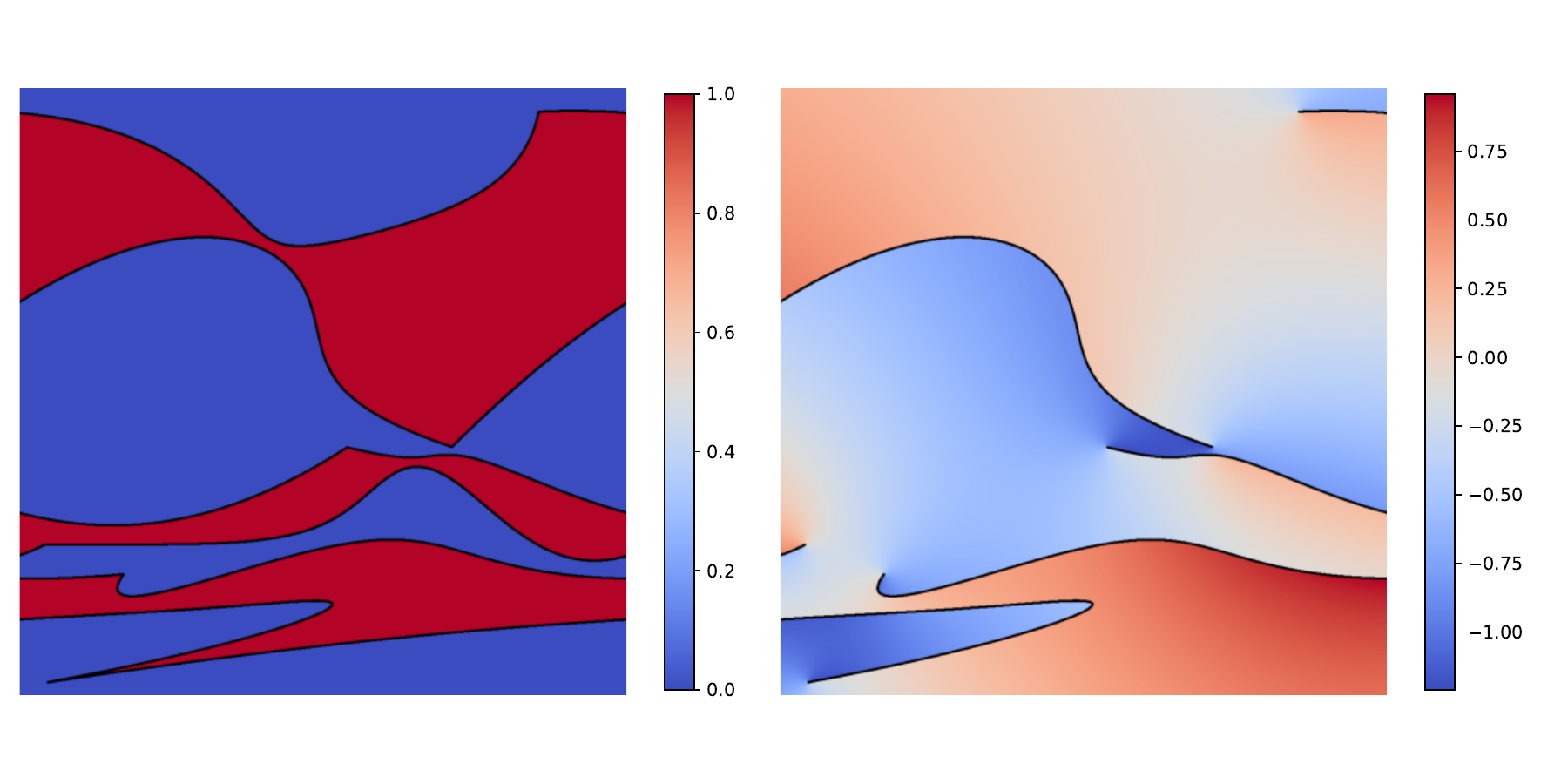}}
    \subfigure[105]{
    \includegraphics[width=0.24\linewidth]{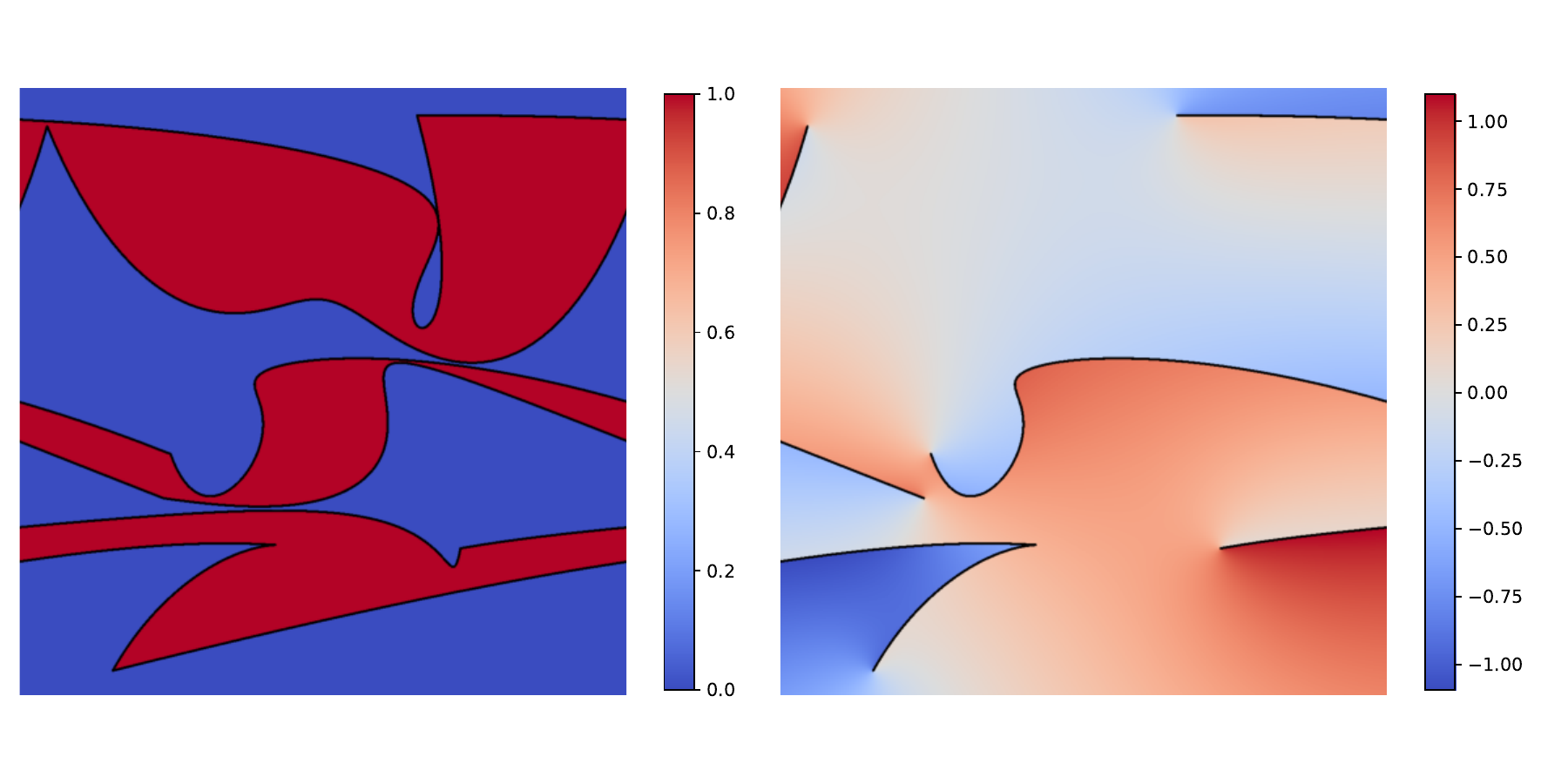}}
    \subfigure[117]{
    \includegraphics[width=0.24\linewidth]{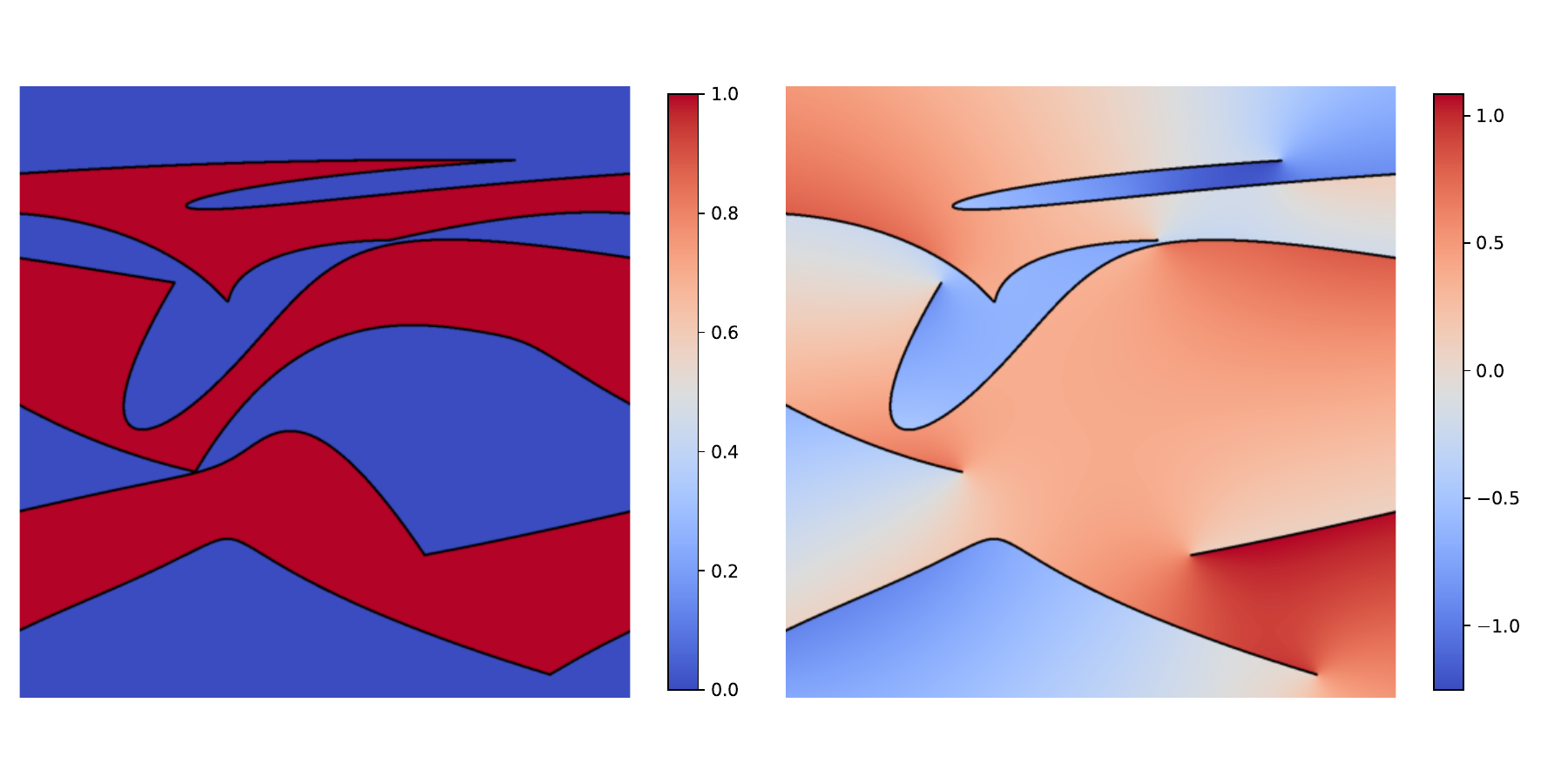}}
    \subfigure[136]{
    \includegraphics[width=0.24\linewidth]{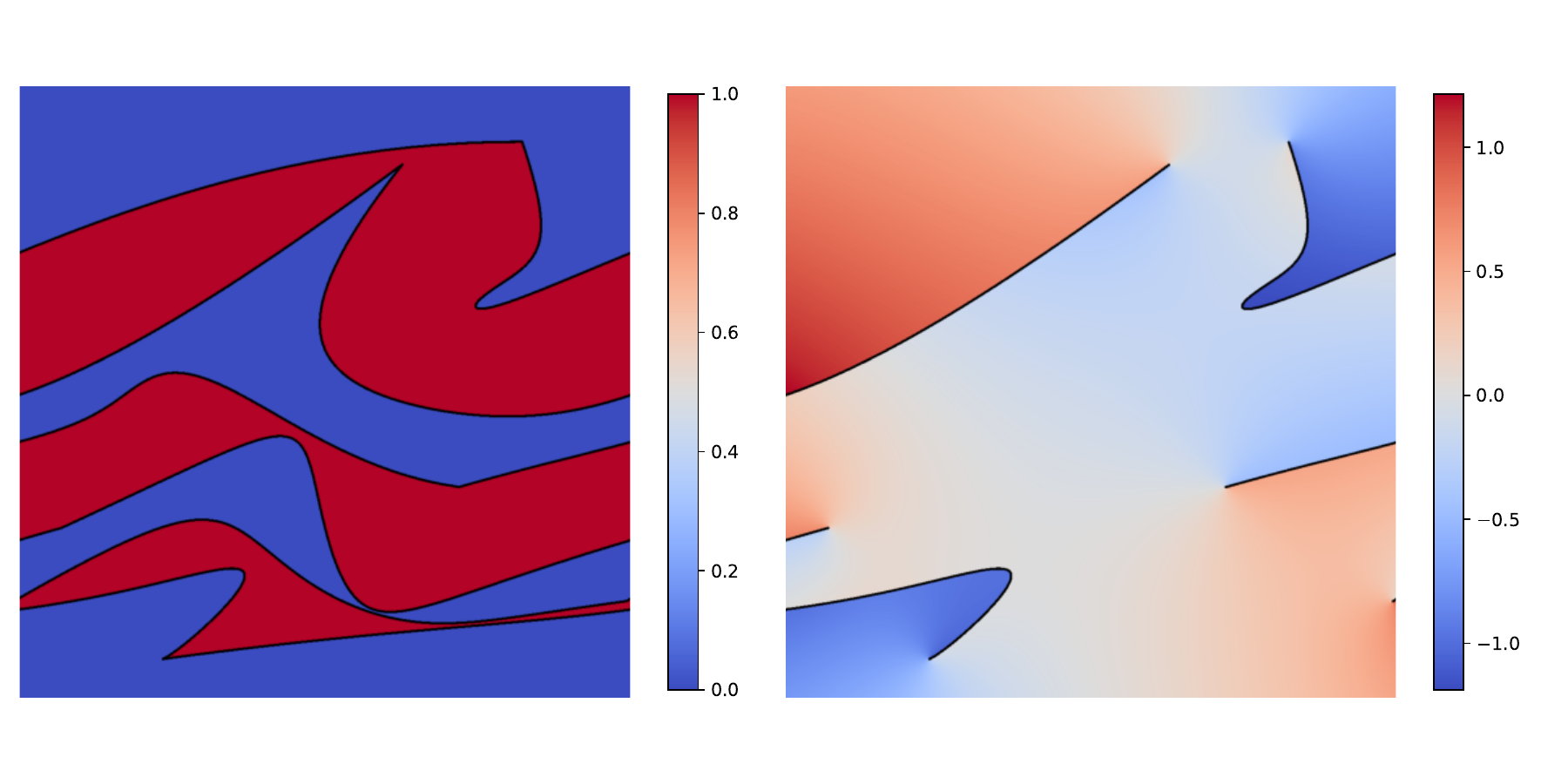}}
    \subfigure[156]{
    \includegraphics[width=0.24\linewidth]{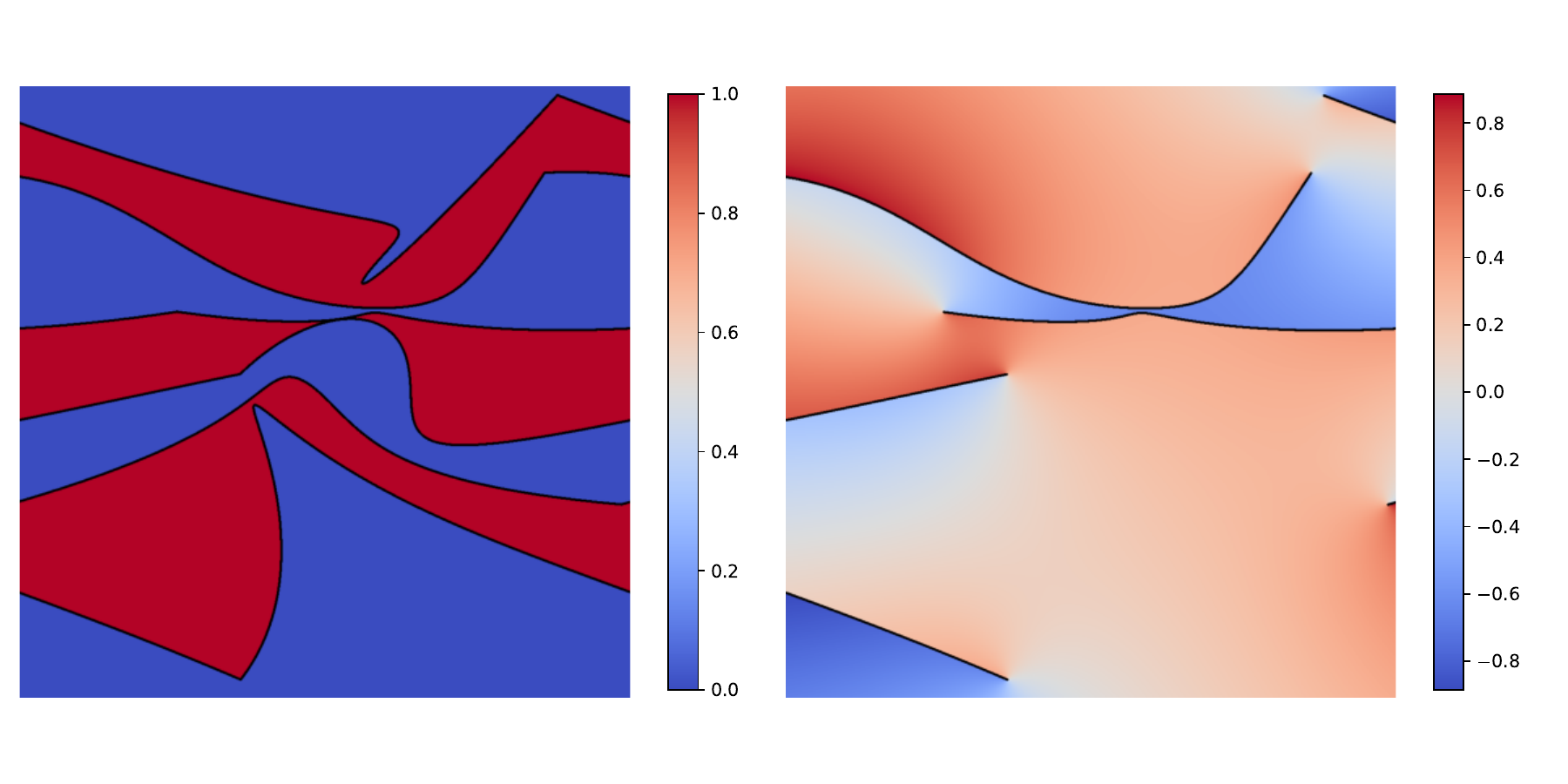}}
    \subfigure[173]{
    \includegraphics[width=0.24\linewidth]{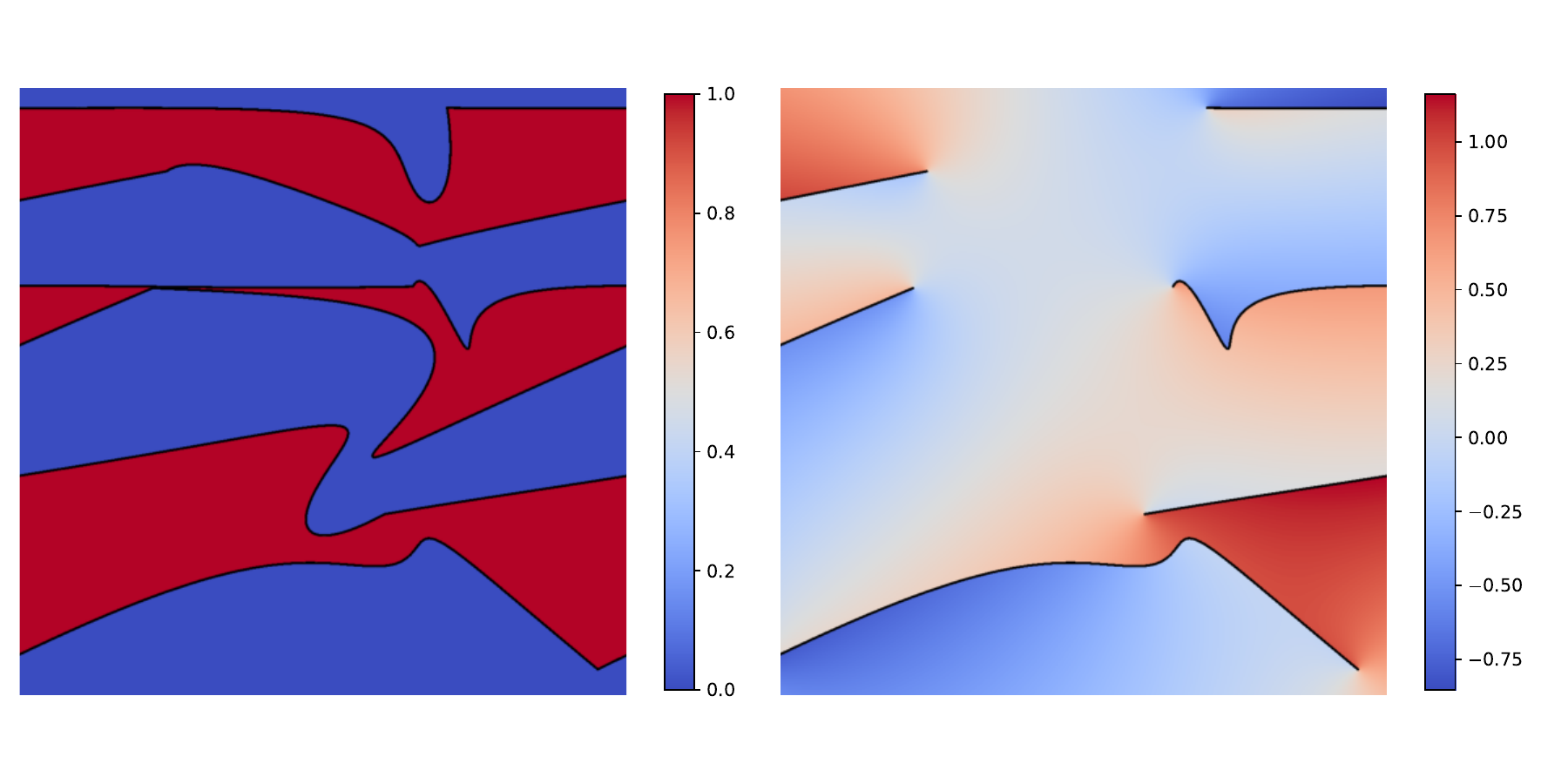}} \\
    \caption{Samples from the periodic dataset A. For each subfigure, the left part is the winding number computed by the periodic method in covering space and the right is the simply computed winding number in parametric domain.}
    \label{fig:periodic_result_a}
\end{figure*}

\begin{figure*}[h]
    \centering
    \subfigbottomskip=2pt
    \subfigcapskip=-5pt
    \subfigure[21]{
    \includegraphics[width=0.24\linewidth]{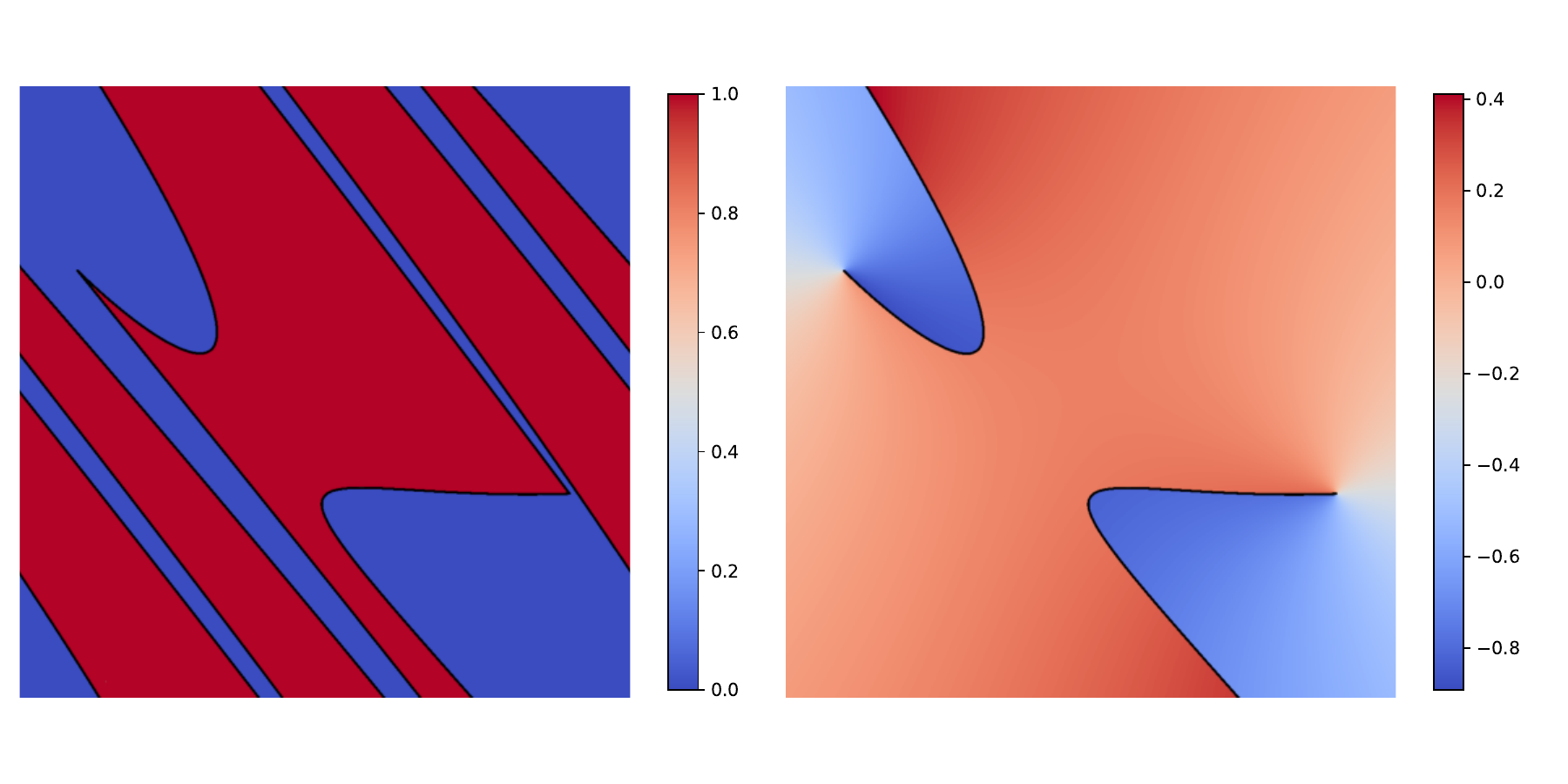}}
    \subfigure[35]{
    \includegraphics[width=0.24\linewidth]{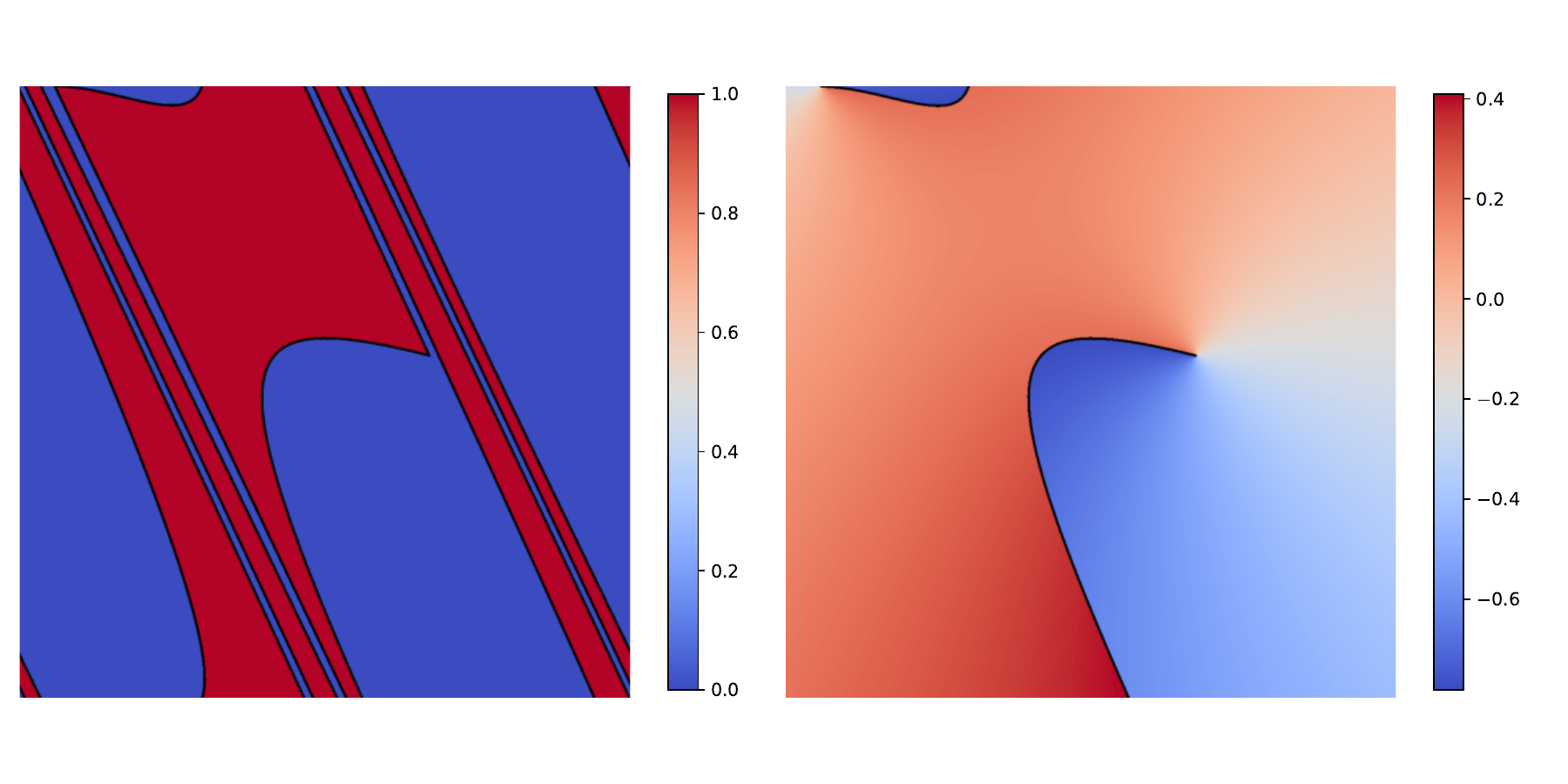}}
    \subfigure[63]{
    \includegraphics[width=0.24\linewidth]{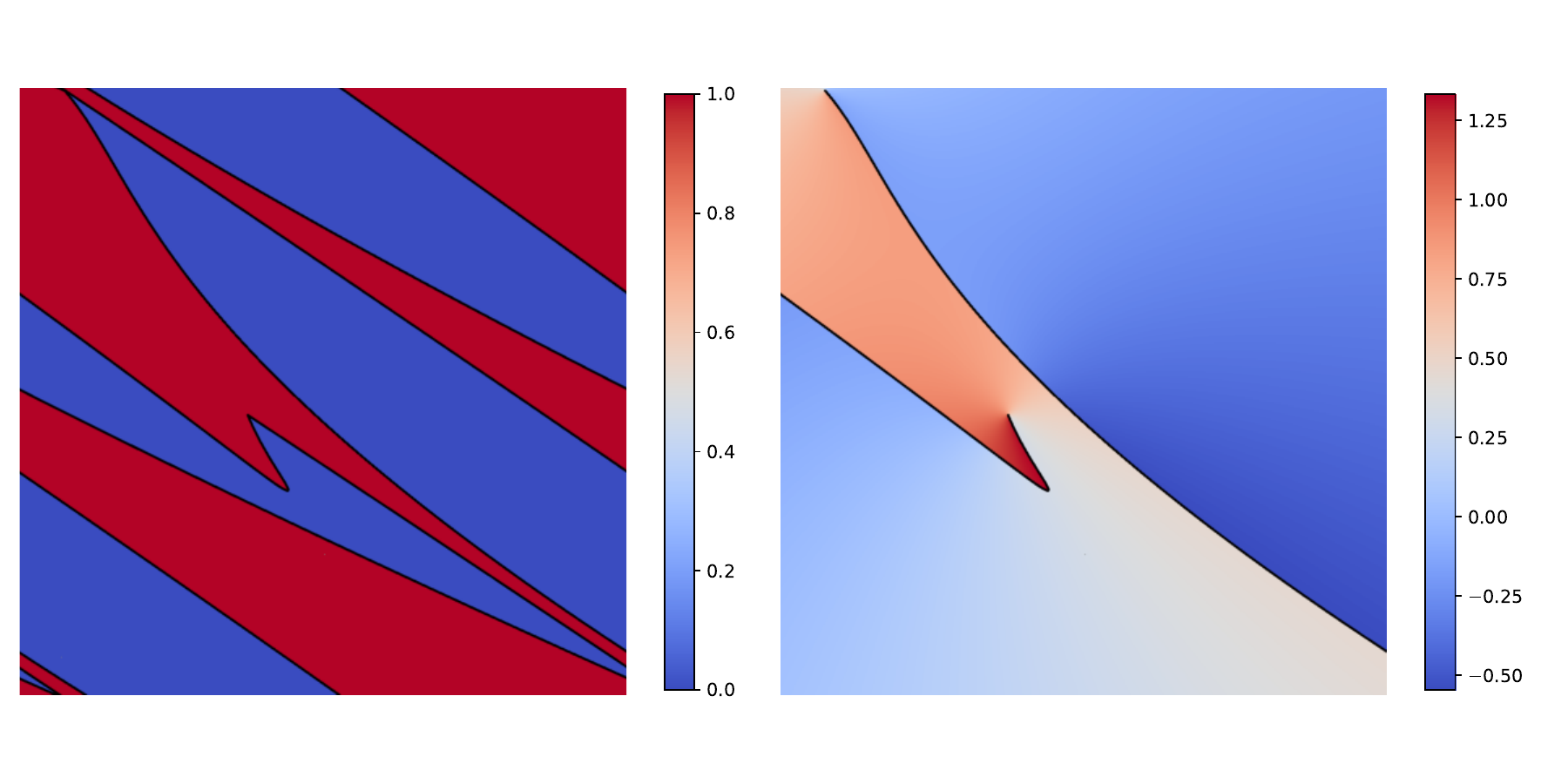}}
    \subfigure[86]{
    \includegraphics[width=0.24\linewidth]{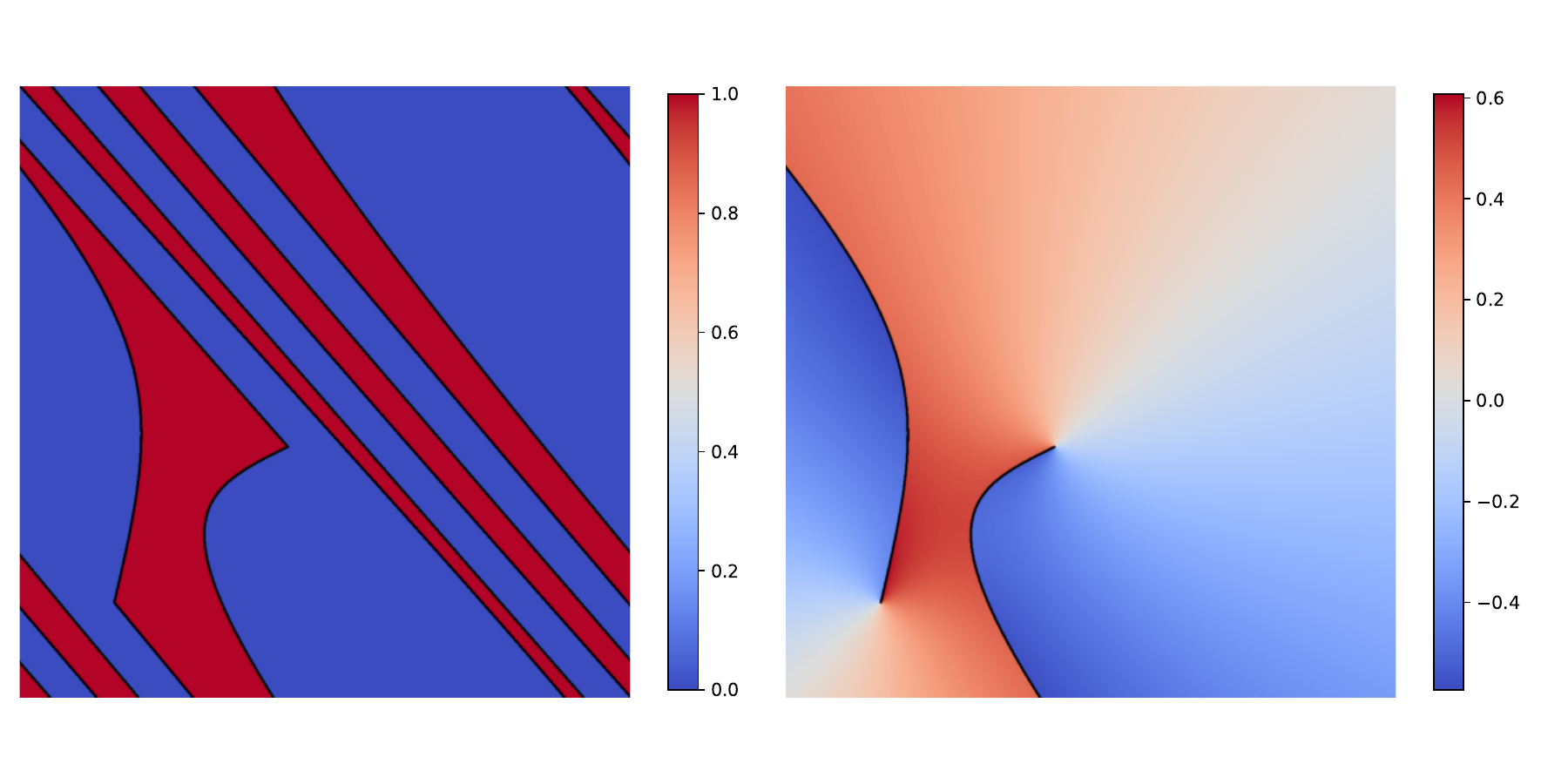}} \\
    \subfigure[93]{
    \includegraphics[width=0.24\linewidth]{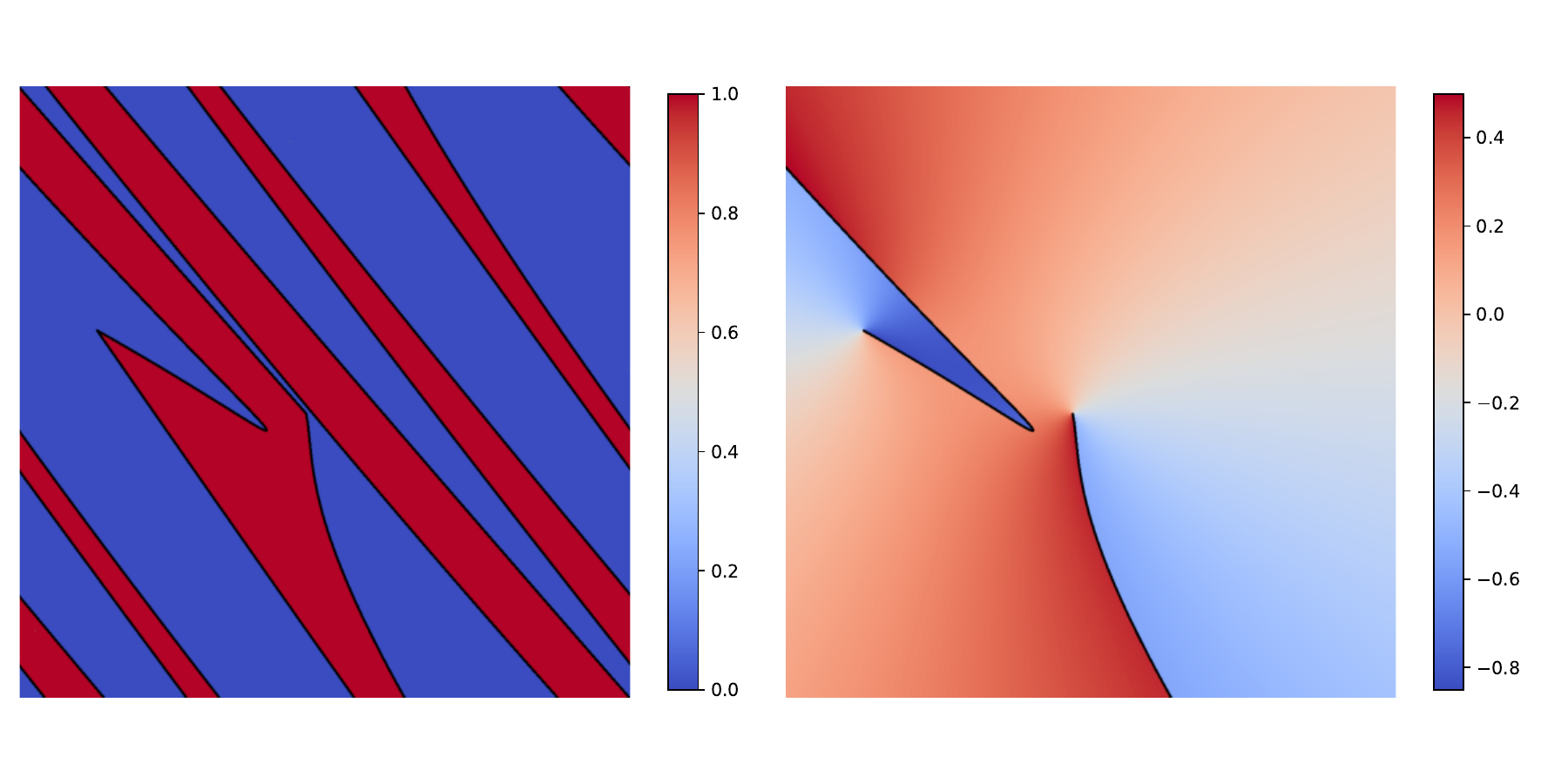}}
    \subfigure[107]{
    \includegraphics[width=0.24\linewidth]{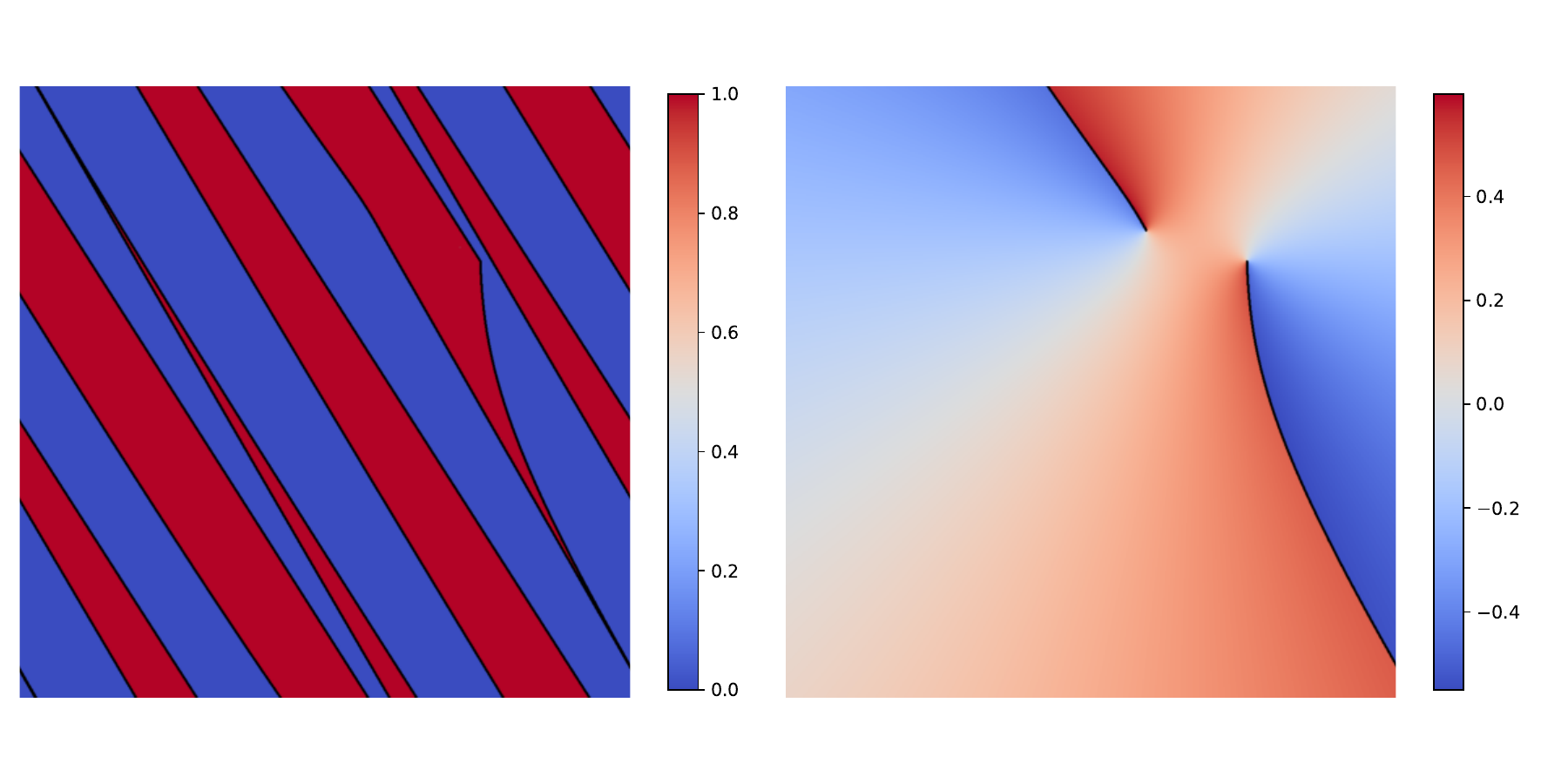}}
    \subfigure[109]{
    \includegraphics[width=0.24\linewidth]{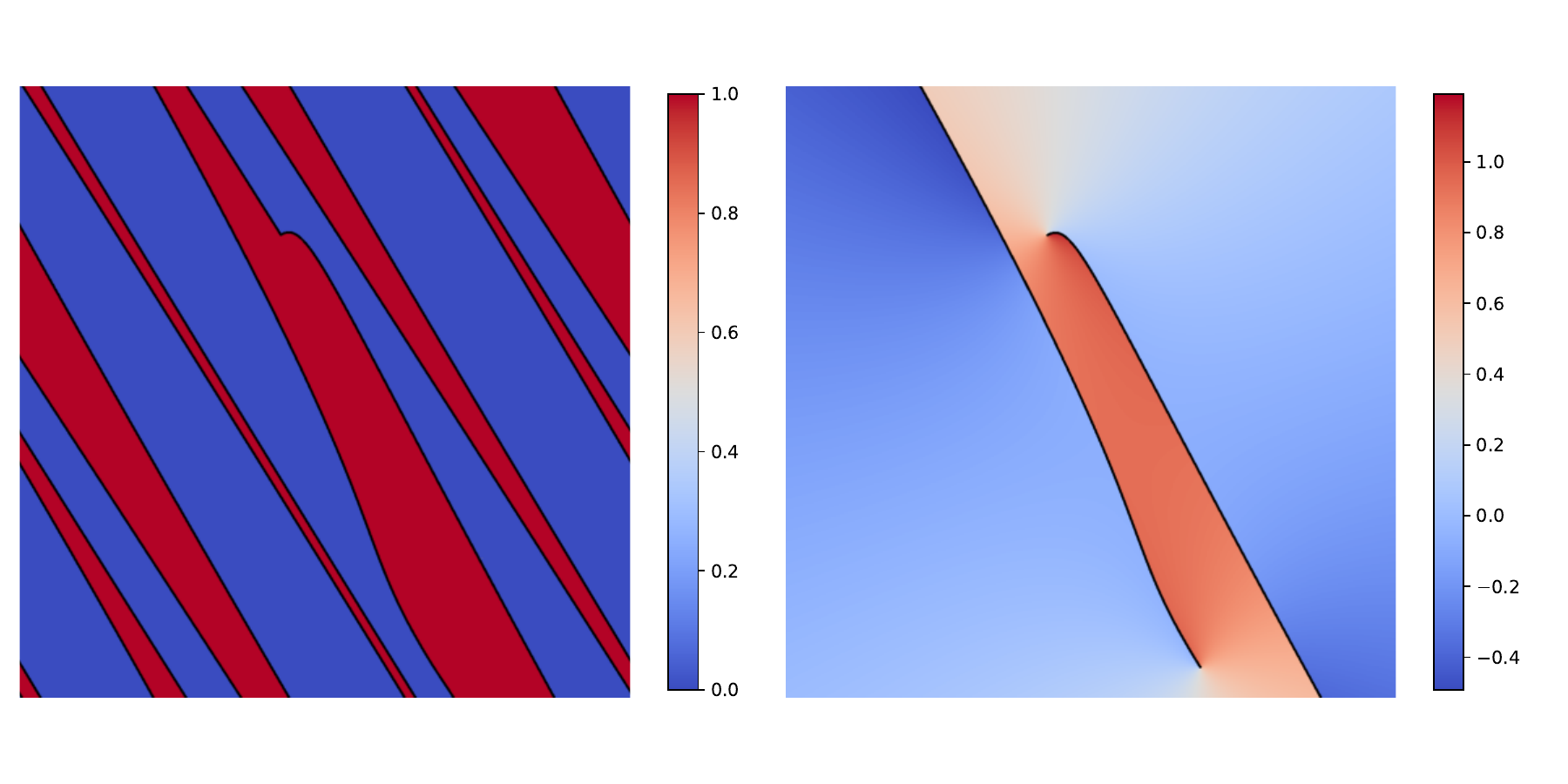}}
    \subfigure[115]{
    \includegraphics[width=0.24\linewidth]{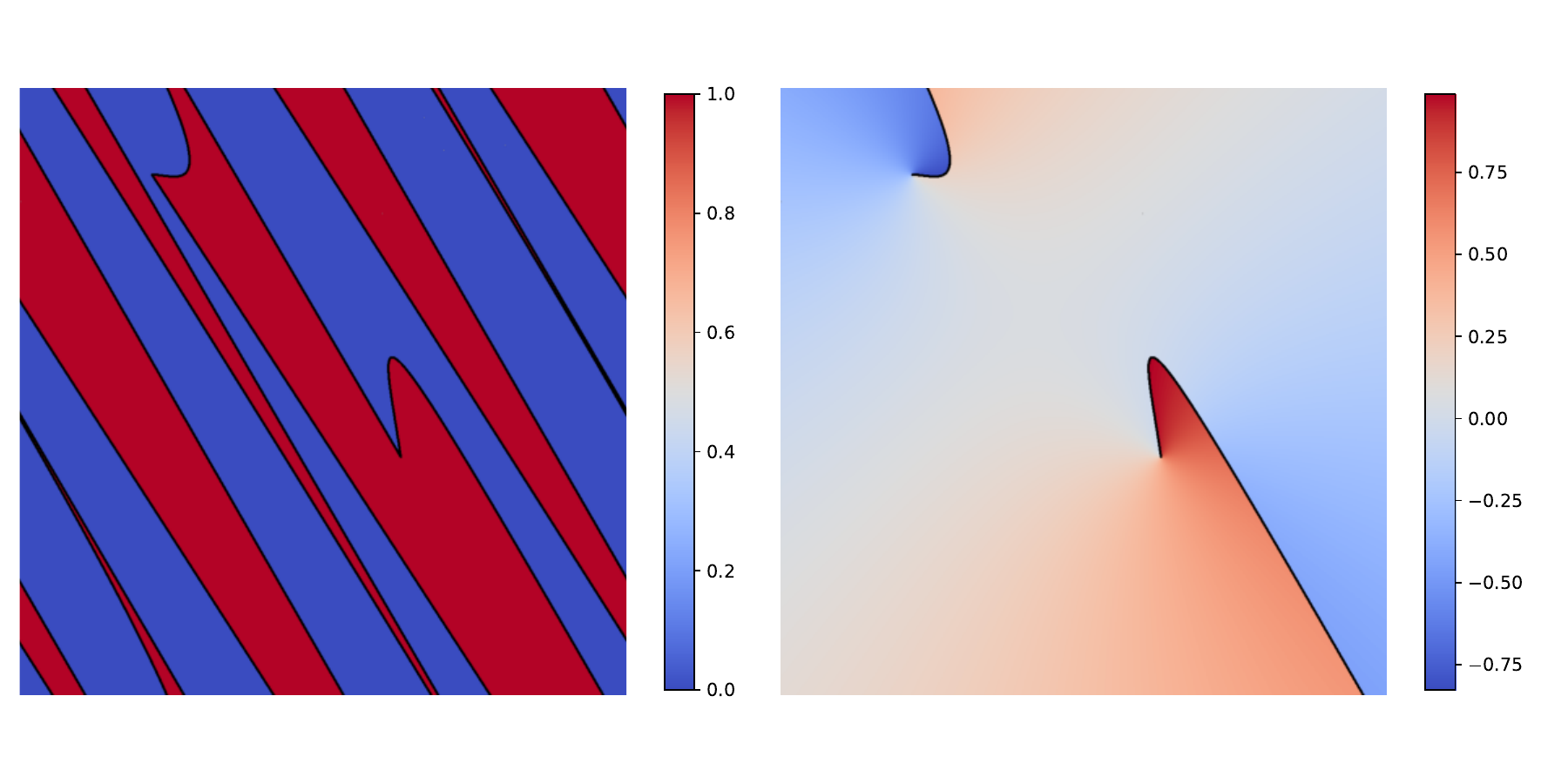}}
    \subfigure[132]{
    \includegraphics[width=0.24\linewidth]{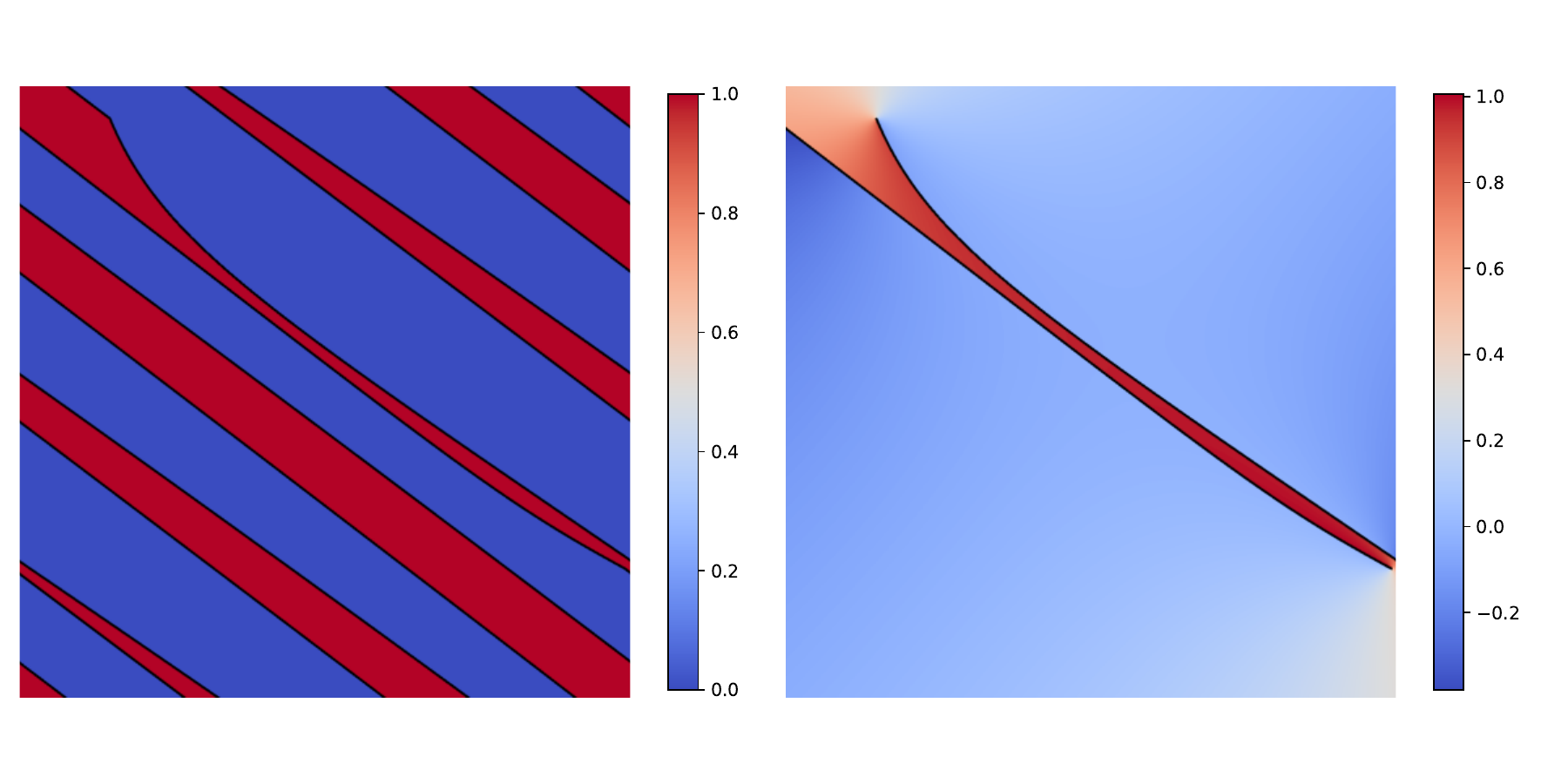}}
    \subfigure[146]{
    \includegraphics[width=0.24\linewidth]{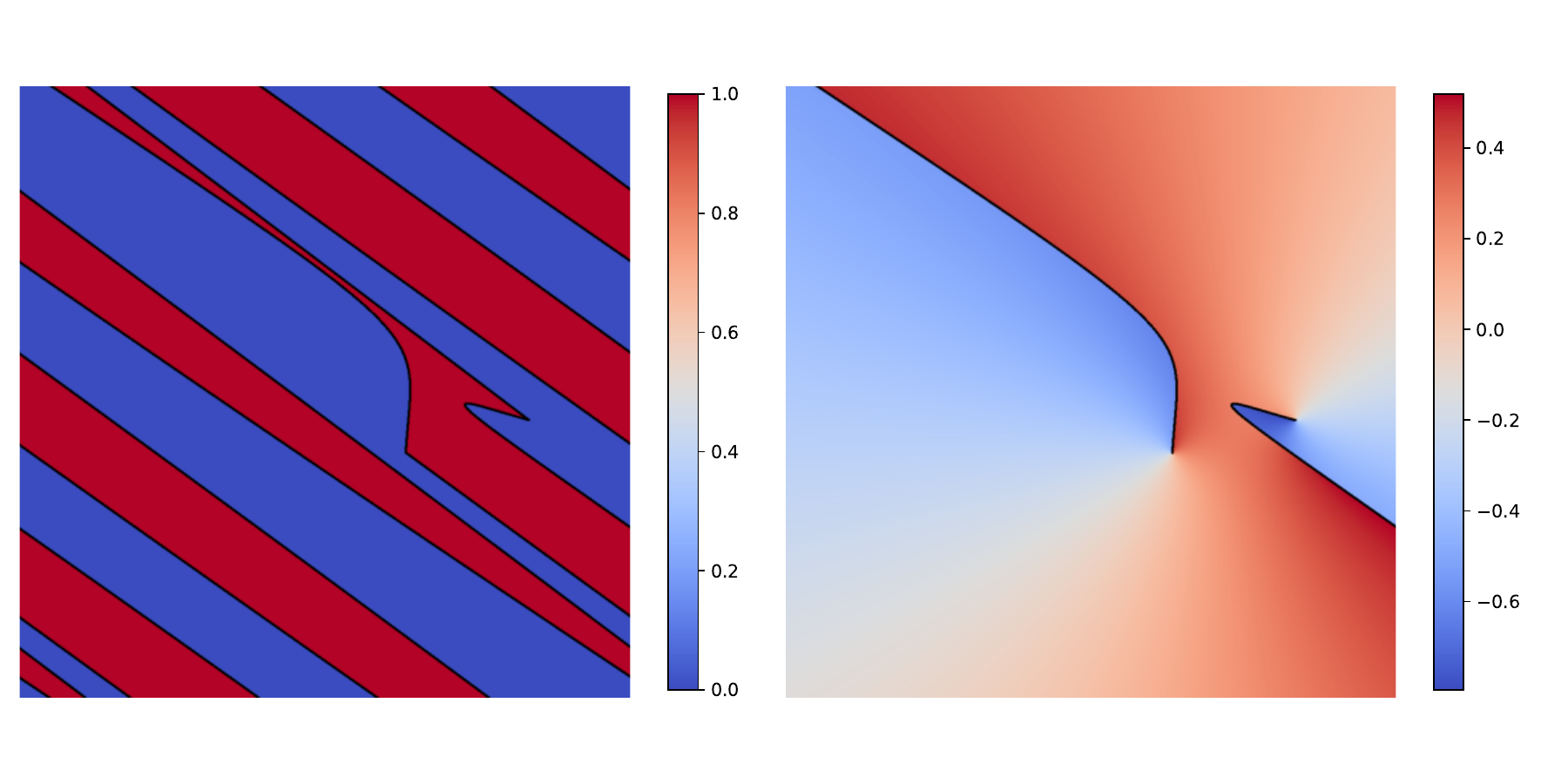}}
    \subfigure[151]{
    \includegraphics[width=0.24\linewidth]{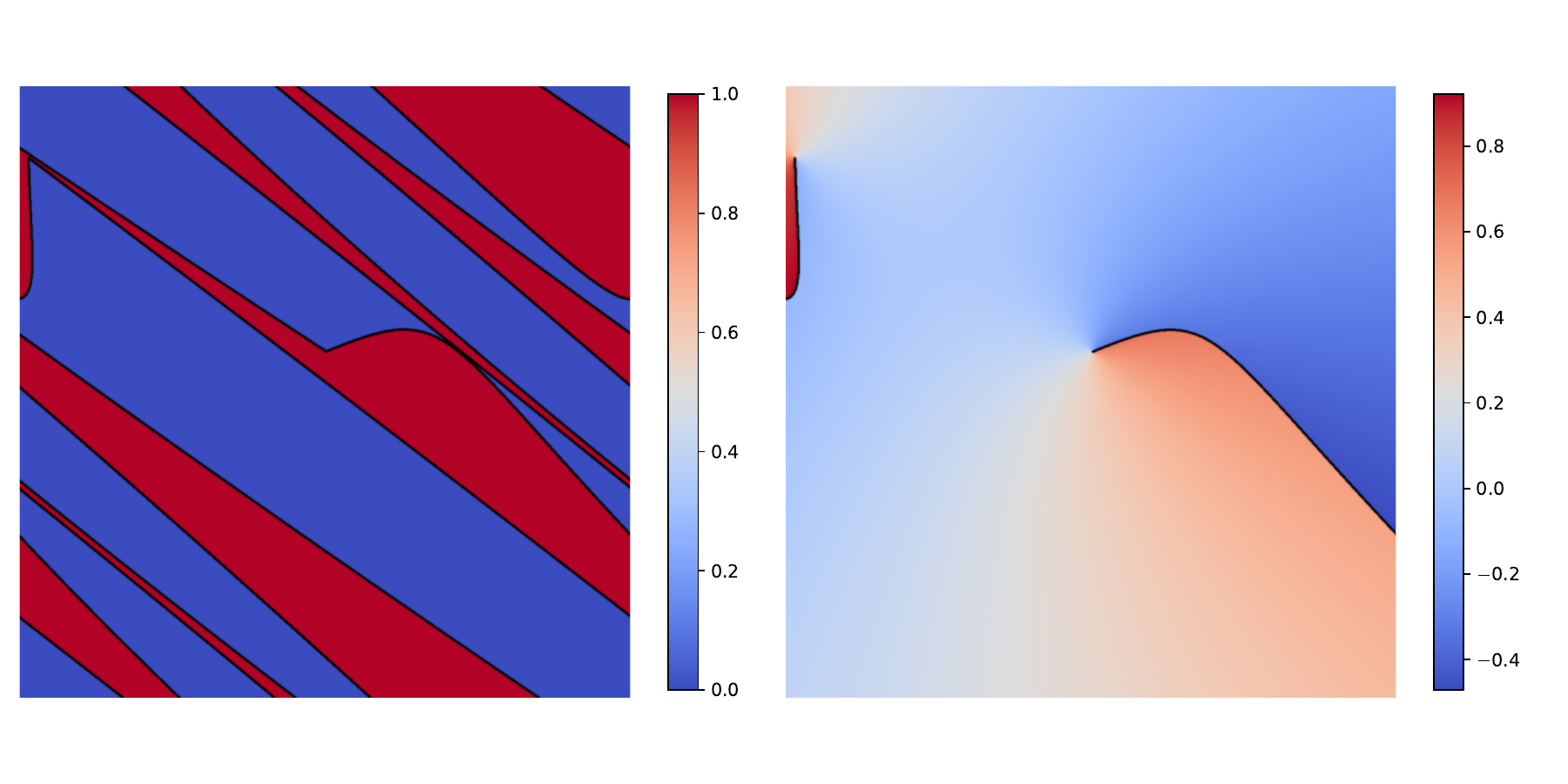}}
    \subfigure[167]{
    \includegraphics[width=0.24\linewidth]{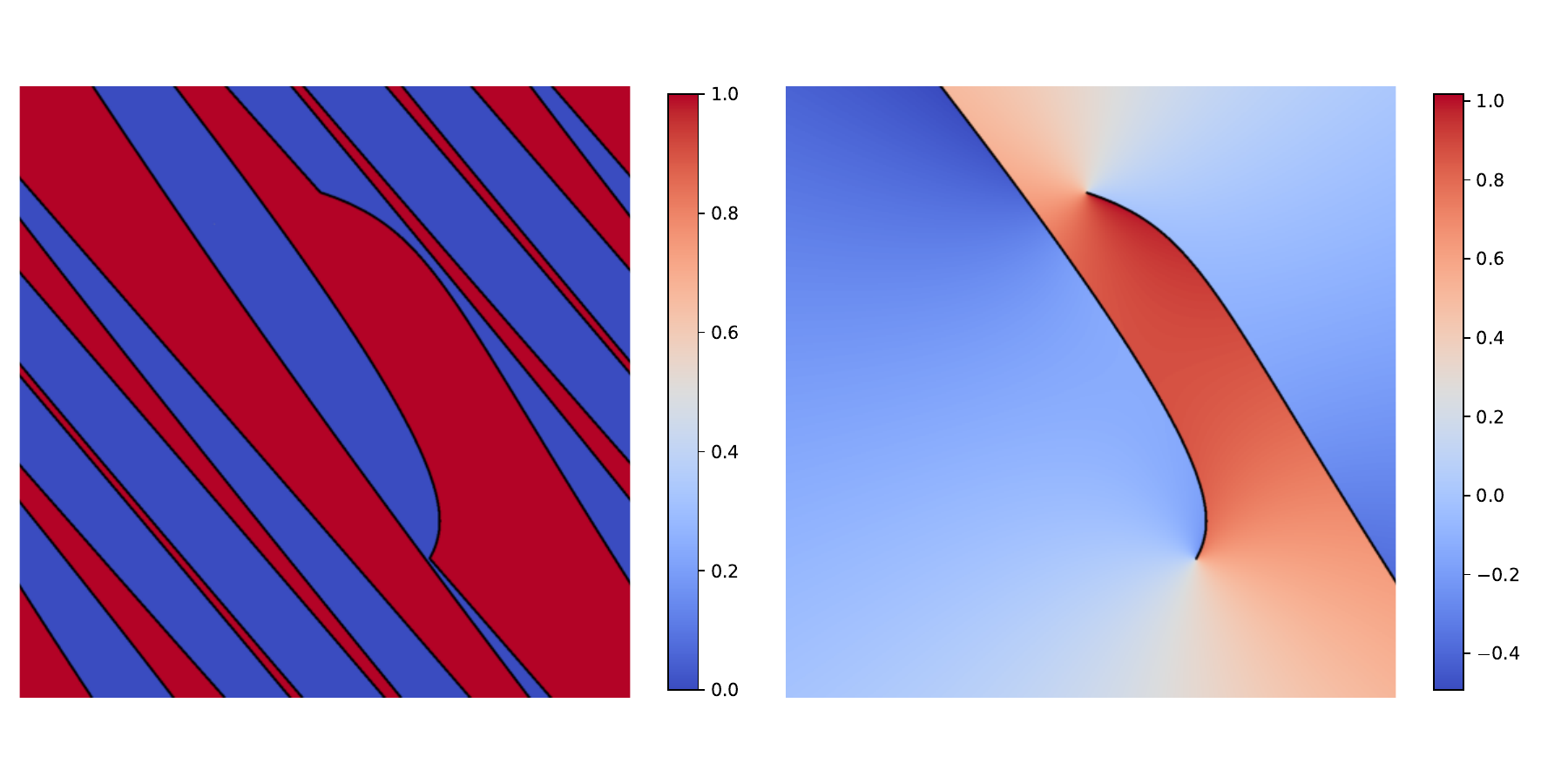}} \\
    \subfigure[171]{
    \includegraphics[width=0.24\linewidth]{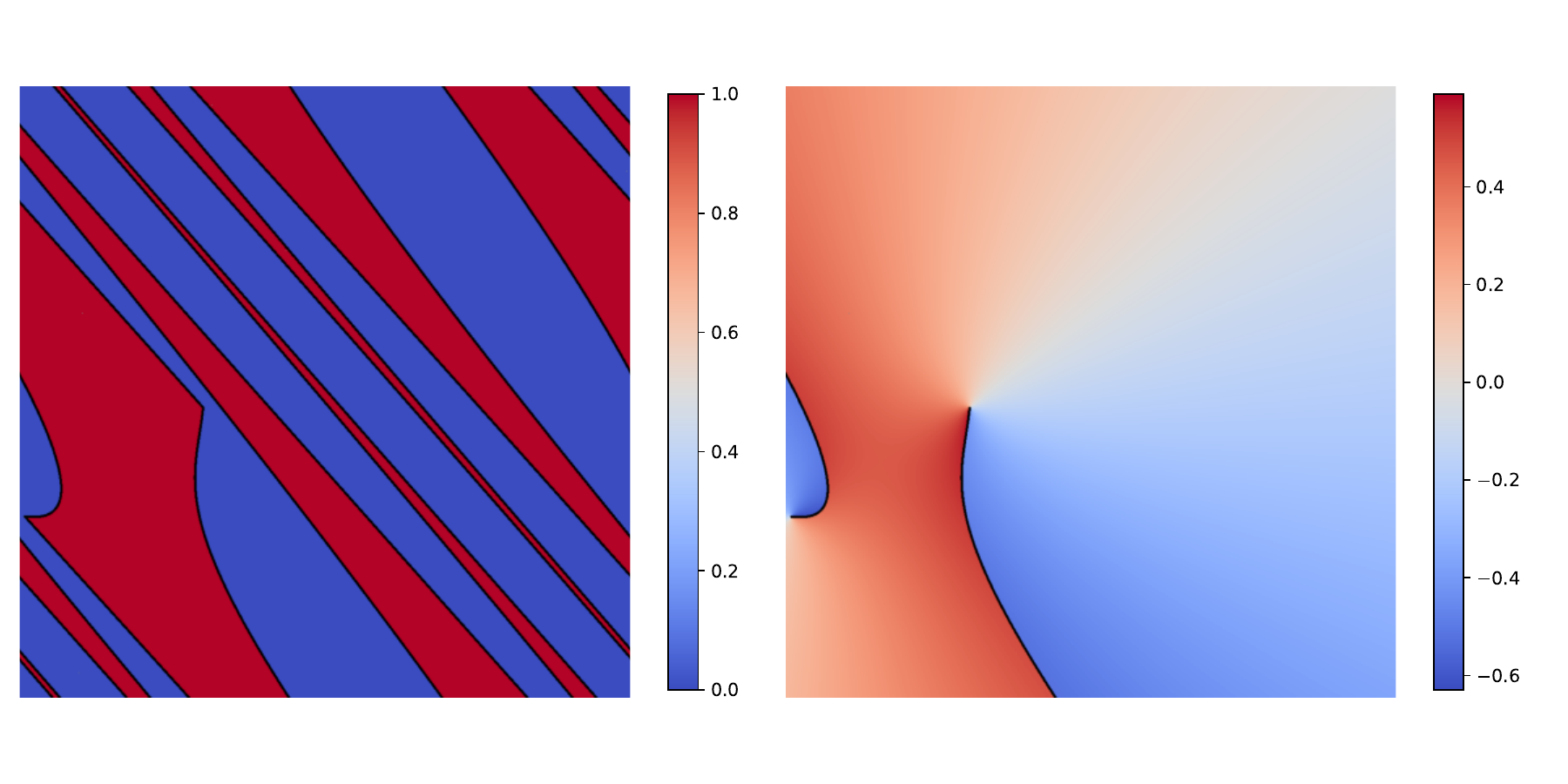}}
    \subfigure[185]{
    \includegraphics[width=0.24\linewidth]{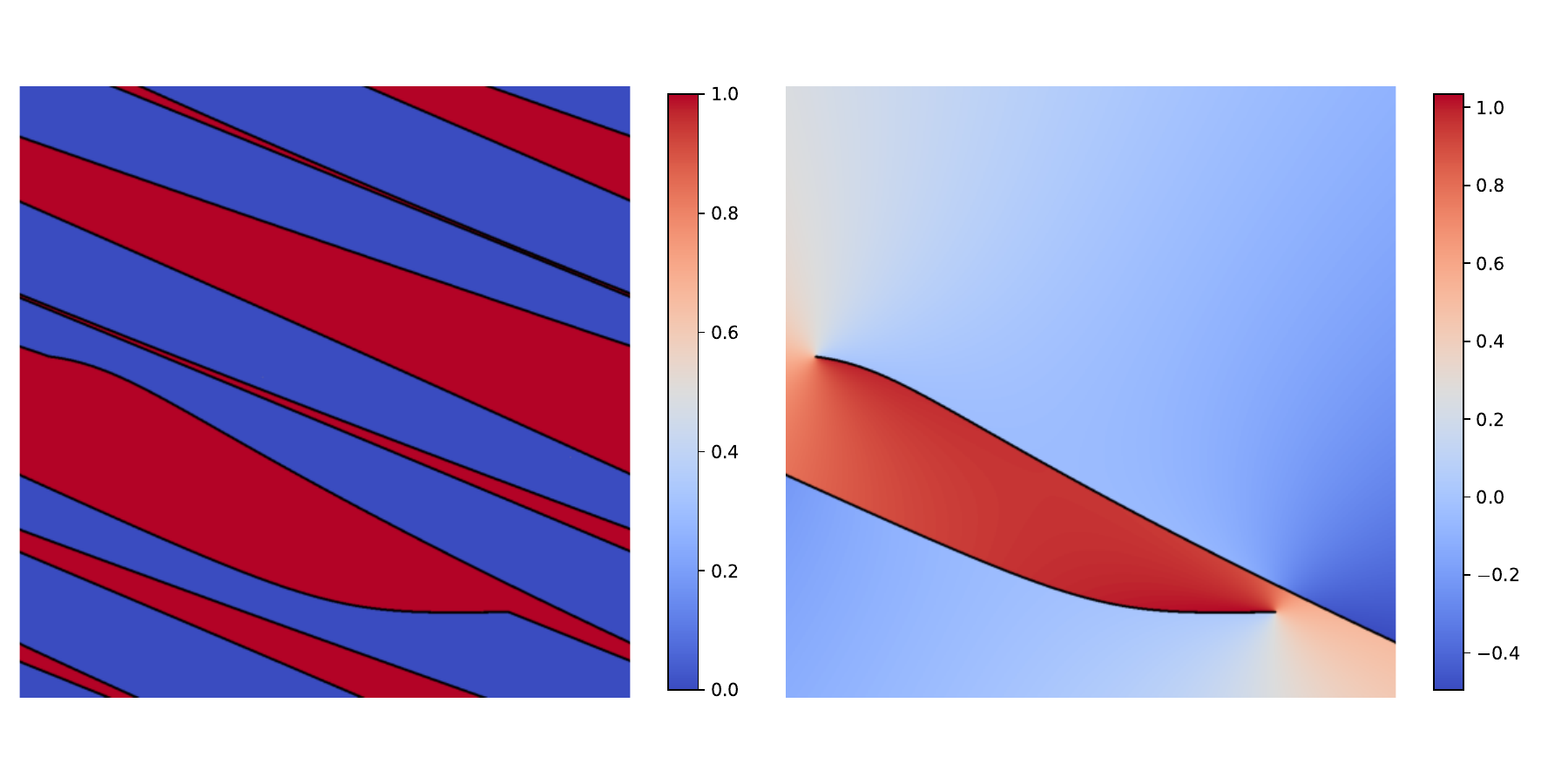}}
    \subfigure[211]{
    \includegraphics[width=0.24\linewidth]{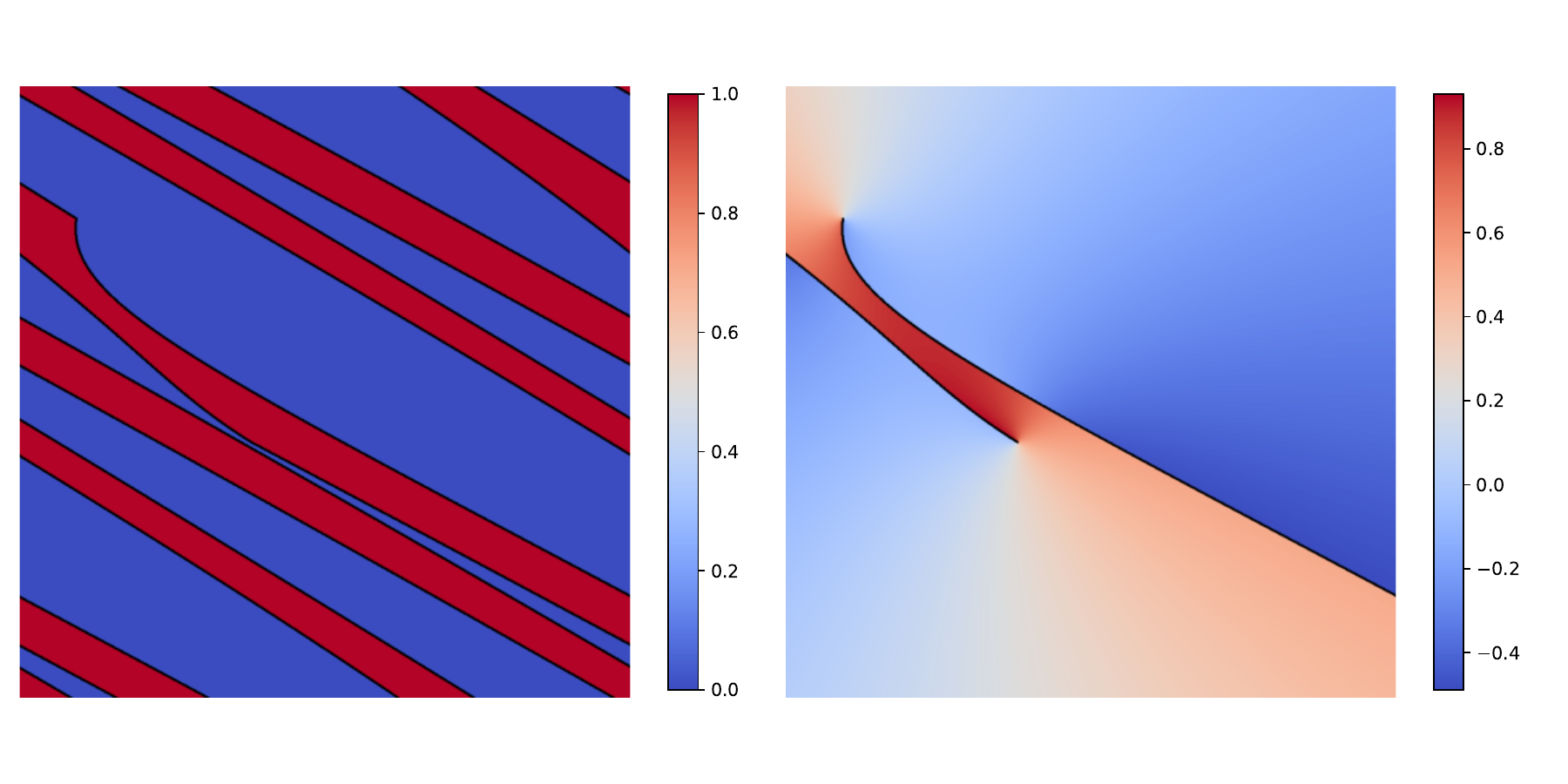}}
    \subfigure[234]{
    \includegraphics[width=0.24\linewidth]{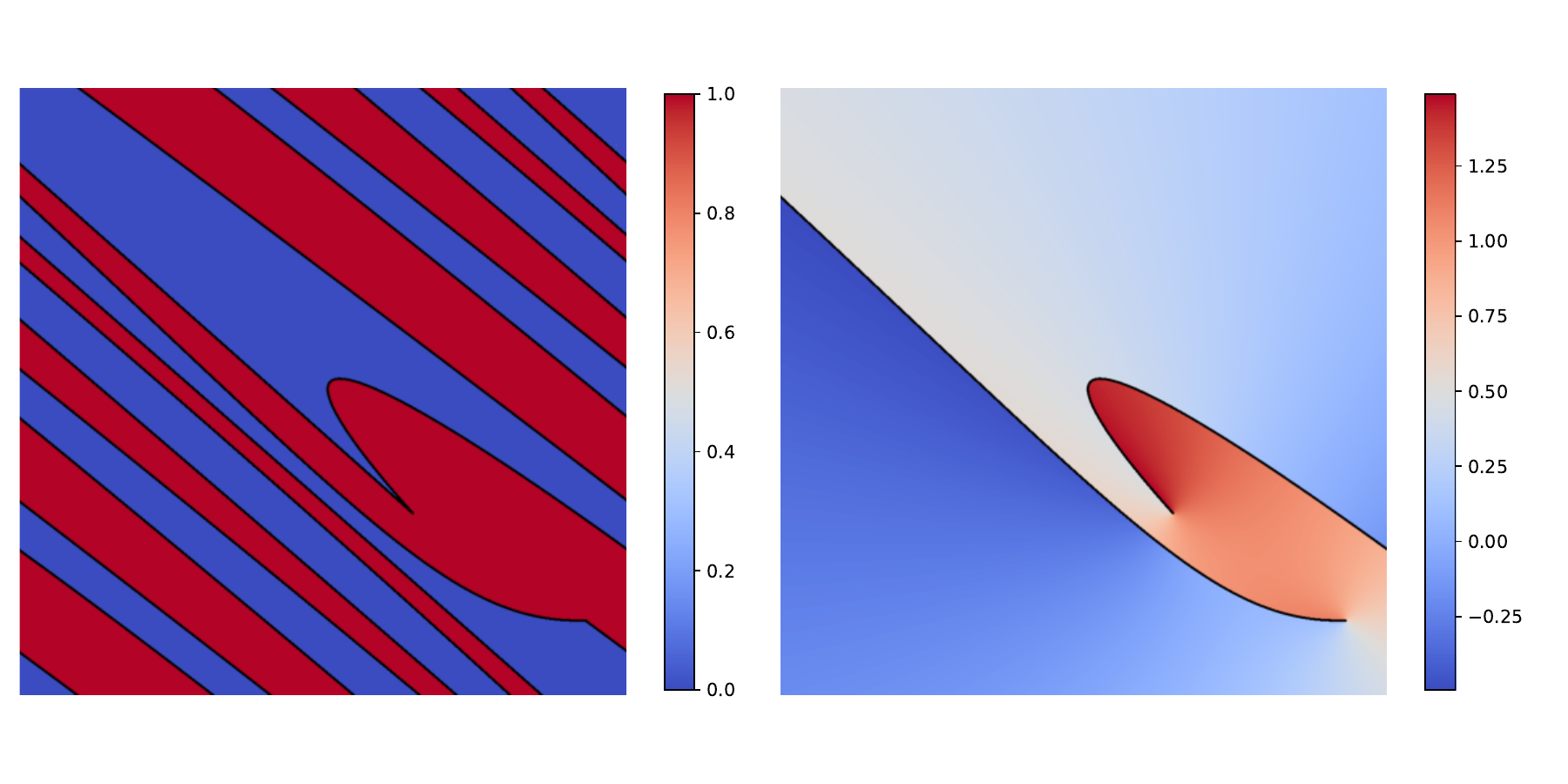}} \\
    \caption{Samples from the periodic dataset B. For each subfigure, the left part is the winding number computed by the periodic method in covering space and the right is the simply computed winding number in parametric domain.}
    \label{fig:periodic_result_b}
\end{figure*}

\appendix
\section{Proofs of the Ellipse Bound}
\subsection{Boundedness and the Rational Case}
\label{app:ellipse}

In this section, we present the detailed derivation of the ellipse bound.
Firstly, we prove the lemma \ref{thm:ellipse}

\begin{proof}
Let $p = \gamma(t)$ for some $t \in [0,1]$. Suppose, for contradiction, that $p \notin \Omega_\gamma$. Then
\begin{equation}
    d(p, \gamma(0)) + d(p, \gamma(1)) > B.
\end{equation}
However, since the curve length between $\gamma(0)$ and $\gamma(t)$ is $\int_0^t \|\gamma'(u)\|\,du$ and similarly for $\gamma(t)$ to $\gamma(1)$, we have
\begin{equation}
    d(p, \gamma(0)) + d(p, \gamma(1))
    \le \int_0^t \|\gamma'(u)\|\,du + \int_t^1 \|\gamma'(u)\|\,du
    \le B,
\end{equation}
which contradicts the assumption. Hence, $\gamma \subset \Omega_\gamma$.
\end{proof}

For a polynomial Bézier curve defined by control points $\{P_i\}_{i=0}^n$, an upper bound for its derivative norm can be obtained from its \emph{hodograph}.  
The hodograph is itself a Bézier curve with control points $\{n (P_i - P_{i-1})\}_{i=1}^n$, yielding the bound
\[
B = \max_{1 \le i \le n} n \|P_i - P_{i-1}\|.
\]

For a rational Bézier curve with control points $\{P_i\}_{i=0}^n$ and weights $\{w_i\}_{i=0}^n$, similar derivative bounds are available in the literature.  
Following \citet{floater1992derivatives}, two possible bounds for $\|\gamma'(t)\|$ are:
\begin{equation}
    \|\gamma'(t)\| \le n \frac{W}{w} \max_{i,j} \|P_i - P_j\|,
\end{equation}
\begin{equation}
    \|\gamma'(t)\| \le n \frac{W^2}{w^2} \max_i \|P_i - P_{i-1}\|,
\end{equation}
where $W = \max_i w_i$ and $w = \min_i w_i$.

\subsection{Termination}
\label{proof:termination}

In this subsection, we proof the Lemma~\ref{thm:termination}. The key insight is that each subdivision halves the diameter of the bounding ellipse.

\begin{proof}
Let the recursion depth be $m$. At this level, the diameter of the bounding ellipse is $B / 2^m$.

\begin{itemize}
    \item If $d > \epsilon$ and $m \ge \log(B / \epsilon)$, the test point $p$ cannot lie within any ellipse at this depth, and the recursion terminates.
    \item If $d \le \epsilon$, let $q$ be the closest point on $\gamma$ to $p$.  There exists an endpoint $v$ of the current subsegment such that
\[
\|v - q\| < \frac{B}{2^m}.
\]
Thus,
\[
\|p - v\| \le \|p - q\| + \|q - v\| < d + \frac{B}{2^m}.
\]
When $m \ge \log(B / \epsilon) + 1$, we have $\frac{B}{2^m} < \epsilon / 2$, so
\[
\|p - v\| < d + \frac{\epsilon}{2}.
\]
    \begin{itemize}
        \item If $d \le \epsilon / 2$, then $\|p - v\| < \epsilon$, and the algorithm reports \texttt{"boundary"}.
        \item If $d > \epsilon / 2$, the ellipse diameter is already smaller than $\epsilon / 2$, meaning the test point lies outside all bounding ellipses, and recursion terminates.  
    \end{itemize}
\end{itemize}
\end{proof}

\section{Proofs of the Periodic Case}

Throughout, a \emph{simple loop} refers to a loop without self-intersections.

\subsection{Proof of Proposition~\ref{thm:main}}
\label{proof:main}

\begin{proposition}
\label{thm:cylinder_zero_case}
Let $S$ be a surface homeomorphic to a cylinder, and let $\gamma$ be a \emph{simple} loop on $S$. Then the homology coefficient $n$ of $\gamma$ satisfies $n \in \{0, \pm 1\}$.
\end{proposition}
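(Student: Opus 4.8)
The plan is to argue that a simple closed curve on a cylinder $S \cong \mathbb{S}^1 \times \mathbb{R}$ cannot wrap around the generating circle more than once. I would first recall that $H_1(S) \cong \mathbb{Z}$, generated by the class $[\mu]$ of the meridian circle $\mathbb{S}^1 \times \{0\}$, so the homology coefficient $n$ of $\gamma$ is well-defined by $[\gamma] = n[\mu]$. The key topological input is the classification of simple closed curves on a cylinder (equivalently, on an annulus): up to isotopy, every simple closed curve is either null-homotopic (bounding a disk, hence $n=0$) or isotopic to the core circle $\mathbb{S}^1 \times \{0\}$ (hence $n = \pm 1$, the sign depending on orientation). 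This is a standard fact, provable by the change-of-coordinates principle or by a direct innermost-curve argument, and it immediately gives $n \in \{0, \pm 1\}$.

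If I instead want a self-contained argument avoiding the isotopy classification, I would use the covering space $\mathbb{R}^2 \to S = \mathbb{R}^2 / \mathbb{Z}\vec{e}_1$ (writing the cylinder as the quotient of the plane by a single translation, consistent with the uni-periodic picture in Section~\ref{sec:covering}). Lift $\gamma$ to a path $\tilde{\gamma}$ in $\mathbb{R}^2$; then $\tilde{\gamma}(1) = \tilde{\gamma}(0) + n\vec{e}_1$ by definition of the homology coefficient. If $n = 0$ the curve lifts to a loop and $n = 0$ trivially; suppose $|n| \ge 2$. The image $\gamma$ is embedded in $S$, so the full preimage $\pi^{-1}(\gamma) = \bigcup_{k \in \mathbb{Z}} (\tilde{\gamma} + k\vec{e}_1)$ is a disjoint union of copies of $\tilde{\gamma}$ translated by multiples of $\vec{e}_1$. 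The hard part will be ruling out $|n| \ge 2$ cleanly: one shows that $\tilde{\gamma}$, a simple arc from a point $x$ to $x + n\vec{e}_1$, together with the segment of the line $L = x + \mathbb{R}\vec{e}_1$ between these points, encloses a bounded region, and the translate $\tilde{\gamma} + \vec{e}_1$ must either be disjoint from $\tilde{\gamma}$ or meet it — and disjointness of all integer translates forces the "horizontal displacement" per copy to be exactly $\pm 1$ by a Jordan-curve separation argument. Concretely, if $|n| \ge 2$ then $\tilde{\gamma}$ and $\tilde{\gamma} + \vec{e}_1$ have endpoints that interleave along $L$, so by the Jordan curve theorem applied to the closed curve formed by $\tilde{\gamma}$ and a suitable reference arc, the two translated arcs must intersect; projecting down, this produces a self-intersection of $\gamma$ in $S$, contradicting simplicity.

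I would write the argument in the second form if the paper wants minimal prerequisites, but cite the annulus curve classification (e.g.\ via \citet{munkres2018elements} or a surface-topology reference) as the quick route. The main obstacle is making the "interleaving endpoints force an intersection" step rigorous without hand-waving: it requires care in choosing the auxiliary arc that closes $\tilde{\gamma}$ into a Jordan curve (one can push $\tilde{\gamma}$ slightly off $L$ near its endpoints and close up along $L$), and in checking that the resulting bounded complementary region genuinely traps one endpoint of the translate while excluding the other. Once that separation claim is in place, the conclusion $n \in \{0, \pm 1\}$ follows, and this proposition then feeds directly into the proof of Proposition~\ref{thm:main} by constraining which homology classes the individual trimming loops $\gamma_i$ can carry.
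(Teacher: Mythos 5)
Your proposal is correct in substance but takes a genuinely different route from the paper, and the comparison is instructive. The paper's proof projects $\gamma$ to the circle factor and argues: $|n|\ge 2$ forces $p\circ\gamma:S^1\to S^1$ to be non-injective, so two parameters $t_1\ne t_2$ land in the same vertical fiber, and it then concludes that $\gamma$ self-intersects. That last step is a leap: two distinct points of $\gamma$ lying on the same fiber $p^{-1}(x)\cong\mathbb{R}$ do not constitute a self-intersection, and indeed $p\circ\gamma$ is non-injective for many embedded core-isotopic curves with $n=\pm1$ (any simple loop that is not a graph over $S^1$). So the paper's argument, read literally, does not establish the claim, whereas your first route --- invoking the classification of simple closed curves on an annulus (null-homotopic or isotopic to the core), e.g.\ via the change-of-coordinates principle in \citep{farb2011primer}, which the paper already cites for Propositions~\ref{thm:coprime} and~\ref{thm:intersection_number} --- is a clean, correct, one-line proof. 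That is the version I would actually write.

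Your second, self-contained covering-space argument is sound in outline, but the gap you flag yourself is real and is not closed by the sketch as written: interleaving of the endpoints of $\tilde{\gamma}$ and $\tilde{\gamma}+\vec{e}_1$ along $L$ does not by itself force the two arcs to intersect, because the Jordan curve you build closes $\tilde{\gamma}$ with a sub-segment of $L$, and the translate may cross that auxiliary segment rather than $\tilde{\gamma}$. The standard completion is to work with the full bi-infinite lift $\tilde{\Gamma}=\bigcup_{m\in\mathbb{Z}}(\tilde{\gamma}+mn\vec{e}_1)$, a properly embedded line that separates the plane into two components; if the $|n|$ distinct translates $\tilde{\Gamma}+k\vec{e}_1$, $0\le k<|n|$, were pairwise disjoint they would be linearly ordered by this separation, yet translation by $\vec{e}_1$ permutes them cyclically while preserving the order, forcing the permutation to be trivial and hence $|n|=1$. (Alternatively, quotient by the sublattice as the paper does in Proposition~\ref{thm:zero_case} and count preimage components.) Either completion works; just do not leave the interleaving step at the level of the sketch, since that is precisely the point the paper's own proof also glosses over.
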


\begin{proof}
Let $p$ denote the projection map from the cylinder to $S^1$.  
If $|n| \ge 2$, then the composed mapping $p \circ \gamma : S^1 \to S^1$ is non-injective.  
Hence, there exist $t_1, t_2 \in S^1$ such that $p \circ \gamma(t_1) = p \circ \gamma(t_2)$, implying that $\gamma(t_1)$ and $\gamma(t_2)$ lie on the same vertical line.  
This results in a self-intersection of $\gamma$, contradicting its simplicity.
\end{proof}

\begin{proposition}
\label{thm:zero_case}
Let $S$ be a surface homeomorphic to a torus, and let $\gamma$ be a \emph{simple} loop on $S$ with homology coefficients $(n, m)$.  
If $n = 0$, then $m \in \{0, \pm 1\}$; similarly, if $m = 0$, then $n \in \{0, \pm 1\}$.
\end{proposition}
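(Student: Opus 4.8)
The plan is to derive Proposition~\ref{thm:zero_case} from the already-established cylinder case, Proposition~\ref{thm:cylinder_zero_case}, by lifting $\gamma$ to an intermediate covering space. First I would fix an identification of $S$ with the standard torus $T^2 = \mathbb{R}^2/\mathbb{Z}^2$ under which the two periodic directions of the parametrization become the standard generators $\vec{e}_1, \vec{e}_2$ of $H_1(S) = \mathbb{Z}^2 = \pi_1(S)$, so that $\gamma$ having homology coefficients $(n,m)$ means $[\gamma] = n\vec{e}_1 + m\vec{e}_2$. By the symmetry between the two factors it suffices to treat the case $m = 0$ and conclude $n \in \{0,\pm 1\}$.

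Assume $m = 0$. Let $q : C \to S$ be the covering space corresponding to the subgroup $\mathbb{Z}\langle \vec{e}_1 \rangle \le \pi_1(S)$; concretely $C = \mathbb{R}^2/\mathbb{Z}(1,0) \cong S^1 \times \mathbb{R}$ is a cylinder and $q$ is the quotient by the residual translation action of $\mathbb{Z}(0,1)$. Since $[\gamma] = n\vec{e}_1 \in \mathbb{Z}\langle\vec{e}_1\rangle = q_*\pi_1(C)$, the lifting criterion for covering spaces provides a lift $\hat{\gamma} : S^1 \to C$ that is again a loop. The next step is to check that $\hat{\gamma}$ remains simple: if $\hat{\gamma}(t_1) = \hat{\gamma}(t_2)$, then $\gamma(t_1) = q(\hat{\gamma}(t_1)) = q(\hat{\gamma}(t_2)) = \gamma(t_2)$, forcing $t_1 = t_2$ because $\gamma$ is simple, and a continuous injection from the compact space $S^1$ into the Hausdorff space $C$ is automatically an embedding. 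Then I would track the homology class: $q_*$ injects $H_1(C) = \mathbb{Z}$ into $H_1(S) = \mathbb{Z}^2$, sending the generator to $\vec{e}_1$, and $q_*[\hat{\gamma}] = [\gamma] = n\vec{e}_1$, so $\hat{\gamma}$ has homology coefficient $n$ on the cylinder $C$. Applying Proposition~\ref{thm:cylinder_zero_case} to $\hat{\gamma}$ then yields $n \in \{0,\pm 1\}$, which is exactly the claim.

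The main obstacle is not any single computation but keeping the covering-space bookkeeping airtight: confirming that the lifting criterion applies (it does, since $\pi_1(S)$ is abelian, so there are no basepoint or conjugacy subtleties), that lifting preserves simplicity, and that the homology class transports correctly through $q_*$. It is worth flagging why a more direct route fails: projecting $T^2 \to S^1$ onto the $v$-coordinate makes $v \circ \gamma$ a degree-$m$ map of $S^1$, which is non-injective when $|m| \ge 2$, but two parameters mapping to the same value of $v$ only force $\gamma(t_1)$ and $\gamma(t_2)$ onto a common \emph{circle} $\{v = v_0\}$, not onto a common point, so no self-intersection follows. Routing through the cylinder cover, whose fibres are \emph{lines} rather than circles, is precisely what repairs this gap and lets us invoke Proposition~\ref{thm:cylinder_zero_case} verbatim. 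Finally, I would note that the degenerate case where $\gamma$ is null-homotopic, i.e.\ $(n,m) = (0,0)$, needs no separate treatment: the lift still exists and is simple, and Proposition~\ref{thm:cylinder_zero_case} returns the trivially true conclusion.
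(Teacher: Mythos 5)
Your proof is correct and follows essentially the same route as the paper's: both reduce the statement to Proposition~\ref{thm:cylinder_zero_case} by passing to the intermediate cylinder cover of the torus (the paper lifts to the universal cover and then quotients by the deck subgroup generated by one translation, which produces exactly the cover you obtain directly via the lifting criterion). Your write-up is merely more explicit about why the lift exists as a closed loop, why it stays simple, and how the homology coefficient transports under $q_*$ — details the paper leaves implicit rather than a genuinely different argument.
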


\begin{proof}
We prove only the case $n = 0$ and $m \ge 2$; the other cases are symmetric.  
Lift $\gamma$ to the universal covering space, obtaining a simple curve $\tilde{\gamma}$.  
Let $H$ be the subgroup of deck transformations generated by $(0,1)$; then the quotient $C = \mathbb{R}^2 / H$ is homeomorphic to an infinite cylinder.  
Denote the quotient map by $q$.  
The image $q(\tilde{\gamma})$ is a loop with homology coefficient $m$.  
By Proposition~\ref{thm:cylinder_zero_case}, since $m \ge 2$, $q(\tilde{\gamma})$ must self-intersect, implying that $\gamma$ is not simple.
\end{proof}

We now recall two classical results from topology \citep{farb2011primer} concerning simple loops and their intersection properties.  

\begin{proposition}
\label{thm:coprime}
Let $S$ be a surface homeomorphic to a torus, and let $\alpha$ and $\beta$ be generators of $H_1(S, \mathbb{Z})$.  
If a simple loop $\gamma$ on $S$ satisfies $\gamma = n\alpha + m\beta$, then $m$ and $n$ are coprime.
\end{proposition}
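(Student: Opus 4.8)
The plan is to reduce the general statement to the already‑established case in which one homology coefficient vanishes, i.e.\ to Proposition~\ref{thm:zero_case}. Assume $\gamma$ is essential (otherwise $(n,m)=(0,0)$ and there is nothing meaningful to prove), and suppose toward a contradiction that $d:=\gcd(n,m)\ge 2$. Write $(n,m)=d\,(n_0,m_0)$, so $(n_0,m_0)$ is a primitive vector of $H_1(S;\mathbb{Z})\cong\mathbb{Z}^2$ in the basis $\{\alpha,\beta\}$. Being primitive, $(n_0,m_0)$ extends to a basis $\{(n_0,m_0),(a,b)\}$ of $\mathbb{Z}^2$; let $M\in\mathrm{GL}_2(\mathbb{Z})$ be the matrix sending $(n_0,m_0)\mapsto(1,0)$ and $(a,b)\mapsto(0,1)$.

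The key step is to realize $M$ geometrically. Identifying $S$ with $\mathbb{R}^2/\mathbb{Z}^2$, the linear map $M$ of $\mathbb{R}^2$ preserves the lattice $\mathbb{Z}^2$ (since $M\in\mathrm{GL}_2(\mathbb{Z})$) and therefore descends to a homeomorphism $\phi\colon S\to S$ with $\phi_*=M$ on $H_1$. Then $\phi(\gamma)$ is again a simple loop on $S$, with $[\phi(\gamma)]=M\cdot\bigl(d(n_0,m_0)\bigr)=(d,0)$. Applying Proposition~\ref{thm:zero_case} to $\phi(\gamma)$: its second homology coefficient is $0$, hence the first one must lie in $\{0,\pm1\}$, i.e.\ $d\in\{0,\pm1\}$ — contradicting $d\ge2$. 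Therefore $\gcd(n,m)=1$.

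The only non‑routine ingredient is the passage from the lattice automorphism $M$ to an actual self‑homeomorphism of $S$; this is the standard fact that integer‑linear automorphisms of $\mathbb{R}^2$ act on the torus, and I expect this to be the single point worth spelling out carefully (in particular that no orientation assumption is needed, since we only require \emph{some} homeomorphism realizing $M$). An alternative that avoids both this fact and Proposition~\ref{thm:zero_case}, and reduces instead directly to Proposition~\ref{thm:cylinder_zero_case}, is to pass to the intermediate cylinder cover $C=\mathbb{R}^2/\langle(n_0,m_0)\rangle$ of $S$: a lift $\tilde\gamma\subset\mathbb{R}^2$ of $\gamma$ is invariant under translation by $(n,m)=d(n_0,m_0)$ but, when $d\ge2$, not under $(n_0,m_0)$, so distinct $\langle(n_0,m_0)\rangle$‑translates of $\tilde\gamma$ are disjoint and $\tilde\gamma$ descends to an \emph{embedded} circle in $C$ whose homology class is $d$ times the core of the cylinder; Proposition~\ref{thm:cylinder_zero_case} then yields the same contradiction. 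Either route isolates the essential obstruction — that an embedded loop cannot wrap more than once around a primitive direction — the only subtlety in the second route being the verification that the descended curve is genuinely embedded.
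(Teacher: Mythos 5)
Your argument is correct, but note that the paper never proves Proposition~\ref{thm:coprime}: it is recalled as a classical fact and attributed to \citet{farb2011primer}, where it appears as the primitivity of the homology class of an essential simple closed curve on the torus. Your proof is therefore genuinely different in kind --- it replaces the citation with a derivation that is self-contained within the paper's own toolkit, reducing via a change-of-basis homeomorphism to the class $(d,0)$ and then invoking Proposition~\ref{thm:zero_case} (or, in your alternative route, reducing through the intermediate cylinder cover $\mathbb{R}^2/\langle(n_0,m_0)\rangle$ directly to Proposition~\ref{thm:cylinder_zero_case}). Both reductions are sound; what they buy is a uniform "unwrap and apply the embedded-loop-wraps-at-most-once lemma" argument that makes the appendix logically closed. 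Three small points deserve care in a full write-up. First, the identification of $S$ with $\mathbb{R}^2/\mathbb{Z}^2$ must send $\alpha,\beta$ to the standard generators (possible because any two-element generating set of $\mathbb{Z}^2$ is a basis), after which the linear action of $M\in\mathrm{GL}_2(\mathbb{Z})$ realizes the required self-homeomorphism, orientation-reversing or not. Second, in the cylinder route, disjointness of the translates $\tilde\gamma + k(n_0,m_0)$ for $d\nmid k$ should be argued via the standard covering-space fact that the stabilizer of a component of the preimage of the embedded circle $\gamma$ is exactly $\langle(n,m)\rangle$, so these translates are distinct components and hence disjoint; "not invariant" alone does not immediately give "disjoint." Third, your reduction inherits the paper's proof of Proposition~\ref{thm:cylinder_zero_case}, whose printed justification (two points on the same vertical line force a self-intersection) is itself loose and would need tightening for the combined argument to be fully rigorous. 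Finally, as you note, the statement is vacuous (indeed false under the convention $\gcd(0,0)=0$) for contractible loops; like the paper's own use of the proposition inside Proposition~\ref{thm:main_pre}, your proof correctly restricts attention to $(n,m)\neq(0,0)$.
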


\begin{proposition}
\label{thm:intersection_number}
Let $S$ be a surface homeomorphic to a torus.  
If $\gamma_1$ and $\gamma_2$ are two loops on $S$ with homology coefficients $(n_1, m_1)$ and $(n_2, m_2)$, respectively, then the minimal intersection number between them is
\begin{equation}
    i\big((n_1, m_1), (n_2, m_2)\big) = |n_1 m_2 - n_2 m_1|.
\end{equation}
\end{proposition}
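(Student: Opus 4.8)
The plan is to prove $i\big((n_1,m_1),(n_2,m_2)\big) = |n_1 m_2 - n_2 m_1|$ by sandwiching the geometric (minimal) intersection number between a lower bound from the algebraic intersection pairing and a matching upper bound from explicit linear representatives. I would work throughout in the flat model $T^2 = \mathbb{R}^2/\mathbb{Z}^2$ with the standard generators $\alpha=(1,0)$ and $\beta=(0,1)$ of $H_1(S,\mathbb{Z})$; since both the geometric intersection number and the quantity $|n_1 m_2 - n_2 m_1|$ are invariant under the homeomorphism identifying $S$ with $T^2$, it suffices to establish the statement on $T^2$.

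First I would introduce the algebraic (signed) intersection number $\hat{\imath}(\cdot,\cdot)$ on $H_1(T^2,\mathbb{Z})\cong\mathbb{Z}^2$, which is bilinear, antisymmetric, and normalized by $\hat{\imath}(\alpha,\beta)=1$. Writing $a = n_1\alpha+m_1\beta$ and $b = n_2\alpha+m_2\beta$ and expanding by bilinearity gives $\hat{\imath}(a,b) = n_1 m_2 - n_2 m_1$. The crucial property is that for any pair of transversal representatives $\gamma_1,\gamma_2$, the signed count of $\pm 1$ over the points of $\gamma_1\cap\gamma_2$ equals $\hat{\imath}(a,b)$ and depends only on the homology classes. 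The lower bound then follows at once: the number of geometric intersections dominates the absolute value of the signed sum, so $|\gamma_1\cap\gamma_2|\ge|\hat{\imath}(a,b)| = |n_1 m_2 - n_2 m_1|$, and taking the minimum over all representatives yields $i(a,b)\ge|n_1 m_2 - n_2 m_1|$.

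For the matching upper bound I would exhibit representatives realizing the bound. I represent $a$ by the projection to $T^2$ of the straight segment $t\mapsto t(n_1,m_1)$, $t\in[0,1]$, and likewise for $b$; these are geodesics of the flat metric. Counting their intersection points reduces to counting cosets of the sublattice $\Lambda\subset\mathbb{Z}^2$ generated by $(n_1,m_1)$ and $(n_2,m_2)$, so the intersection count equals the index $[\mathbb{Z}^2:\Lambda] = |n_1 m_2 - n_2 m_1|$ whenever the determinant is nonzero. Equivalently, one may apply a change of basis in $SL(2,\mathbb{Z})$—acting by homeomorphisms of $T^2$ and preserving the determinant—to reduce $a$ to the normal form $(d,0)$ with $d=\gcd(n_1,m_1)$, after which the count becomes the transparent one of $d$ parallel meridians meeting a transverse geodesic.

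The main obstacle I expect is the upper-bound count together with its degenerate cases. When the determinant vanishes, the two classes are linearly dependent, $\Lambda$ degenerates, and the sublattice argument does not apply; here one must instead perturb the linear representatives to be parallel and disjoint, giving $i=0$ in agreement with the formula. One must also check that the geodesic representatives meet transversally, that no cancellation occurs in the count (two families of parallel lines meet with a single consistent sign, so the index is attained exactly and not merely bounded), and that non-primitive classes with $\gcd(n_i,m_i)>1$ are handled correctly, since such a geodesic wraps multiple times. Confirming that these straight-line representatives are genuinely minimizing—rather than merely furnishing some value—is precisely where the algebraic lower bound of the previous step closes the argument.
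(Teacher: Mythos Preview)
Your argument is correct and follows the standard route to this classical fact (algebraic intersection form for the lower bound, straight-line geodesic representatives in $\mathbb{R}^2/\mathbb{Z}^2$ for the matching upper bound, with the $SL(2,\mathbb{Z})$ normalization and the parallel-case perturbation handling the edge cases). The paper, however, does not prove this proposition at all: it is stated as one of two ``classical results from topology'' and simply referred to \citet{farb2011primer}. So there is no paper proof to compare against; you have supplied the textbook argument where the authors chose to cite it. The only thing to be mindful of is that the paper's downstream use (Proposition~\ref{thm:main_pre}) only needs the statement for \emph{simple} disjoint loops, where the non-primitive subtleties you flag do not arise, so your extra care there is not strictly required for the paper's purposes.
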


From Proposition~\ref{thm:intersection_number}, we obtain the following corollary.

\begin{proposition}
\label{thm:main_pre}
Let $S$ be a surface homeomorphic to a torus, and let $\{\gamma_i\}$ be a collection of disjoint simple loops on $S$ with homology coefficients $(n_i, m_i)$.  
Then each loop satisfies either $(n_i, m_i) = (0, 0)$ or $(n_i, m_i) = \pm(a, b)$ for some fixed integers $a, b$.
\end{proposition}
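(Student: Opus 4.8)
The plan is to derive the statement entirely from the disjointness hypothesis, using the intersection-number formula of Proposition~\ref{thm:intersection_number} to force all the homology vectors to be mutually parallel, and then using the primitivity of a simple loop's homology class (Proposition~\ref{thm:coprime}) to pin each nonzero vector down to $\pm(a,b)$.

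First I would note that, because the $\gamma_i$ are pairwise disjoint, each pair $\gamma_i,\gamma_j$ with $i\neq j$ realizes zero intersection points; since the minimal intersection number over the relevant isotopy classes is a non-negative integer bounded above by this count, it must equal zero. Proposition~\ref{thm:intersection_number} then yields
\[
  |n_i m_j - n_j m_i| = 0 \qquad \text{for all } i \neq j .
\]
Writing $v_i := (n_i,m_i)\in\mathbb{Z}^2$, this says precisely that the vectors $v_i$ are pairwise linearly dependent in $\mathbb{R}^2$. Hence, unless every $v_i$ equals $(0,0)$ (in which case the claim is vacuous), there is an index $i_0$ with $v_{i_0}\neq 0$, and every nonzero $v_i$ lies on the single line $\ell := \mathbb{R}\,v_{i_0}$ through the origin, which is rational since it contains the lattice point $v_{i_0}$.

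Next I would invoke Proposition~\ref{thm:coprime}: every non-contractible $\gamma_i$ has coprime coordinates, so $v_i$ is a \emph{primitive} lattice vector lying on $\ell$. Since $\ell\cap\mathbb{Z}^2$ is infinite cyclic, generated by a primitive vector $(a,b)$ that is unique up to sign, the only primitive lattice vectors on $\ell$ are $(a,b)$ and $-(a,b)$. Thus each $\gamma_i$ has homology coefficients $(0,0)$ or $\pm(a,b)$ for the fixed primitive pair $(a,b)$, which is the assertion. (This is consistent with Proposition~\ref{thm:zero_case}: when $a=0$ or $b=0$, coprimality already forces the other entry to be $\pm 1$.)

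The argument is a short chain of classical facts, so the step that needs genuine care is the first reduction — confirming that a disjoint pair already achieves the minimal intersection number (it does, being the minimum of a set of non-negative integers that contains $0$), and that the formula of Proposition~\ref{thm:intersection_number} is applied to the correct unordered pairs irrespective of whether either loop is contractible. The remaining bookkeeping — that a primitive integer vector on a line through the origin and one further lattice point is determined up to sign — is routine, so I would not belabor it.
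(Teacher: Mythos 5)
Your proof is correct and follows essentially the same route as the paper's: pairwise disjointness plus the intersection-number formula of Proposition~\ref{thm:intersection_number} forces $n_i m_j - n_j m_i = 0$, and primitivity from Proposition~\ref{thm:coprime} then pins every nonzero class to $\pm(a,b)$. Your ``primitive lattice vector on a rational line is unique up to sign'' step is in fact slightly cleaner than the paper's argument, which separately dispatches the cases $n_i=0$ or $m_i=0$ via Propositions~\ref{thm:cylinder_zero_case} and~\ref{thm:zero_case} before dividing by $m_i$ in the all-nonzero case.
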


\begin{proof}
If either $n_i = 0$ or $m_i = 0$, Propositions~\ref{thm:cylinder_zero_case} and~\ref{thm:zero_case} yield the result.  
Now suppose that all coefficients are nonzero.  

By Proposition~\ref{thm:intersection_number}, if $\gamma_i$ and $\gamma_j$ are disjoint, then $n_i m_j - n_j m_i = 0$.  
Since all terms are nonzero, this implies $n_i / m_i = n_j / m_j$.  
Because $n_i$ and $m_i$ are coprime (Proposition~\ref{thm:coprime}), we conclude that $(n_i, m_i) = \pm(n_j, m_j)$.
\end{proof}

Finally, Proposition~\ref{thm:main_pre} allows us to classify all loops, completing the proof of Proposition~\ref{thm:main}.

\subsection{Proof of Proposition~\ref{thm:traverse_version2}}
\label{proof:traverse}

Proposition~\ref{thm:traverse_version2} can be understood as a partitioning of all lifted curves into disjoint periodic families.  
A concise proof is as follows.

\begin{proof}
Let $c_{i,j} = \gamma + i\vec{e}_1 + j\vec{e}_2$.  
For all integers $i,j$, each $c_{i,j}$ is contained in the right-hand side of the equation.  
Write $i = ap + b$ with $a \in \mathbb{Z}$ and $0 \le b < p$.  
Then
\[
c_{i,j} = c_{b,\, j - aq} + ap\vec{e}_1 + aq\vec{e}_2 \subset \gamma_{b,\, j - aq}.
\]
Hence, the left-hand side is contained in the right-hand side.  
Conversely, it is straightforward to verify that the right-hand side is contained in the left-hand side, completing the proof.
\end{proof}




\section{Implementation Details}
\label{sec:code}

In this section, we provide pesudo code of proposed algorithms and introduce some tedious details of our algorithm. We also refer readers to our C++ implementation (Anonymous).

\subsection{Periodic Winding Number Computation}

In this section, we list the pesudo code of winding number computation method.

In our implementation, to avoid translating parametric curves, we use the following property of winding number: Let $\vec{v}$ be a two dimensional vector, $p$ is the test point and $\gamma$ is a curve, then we have 
\begin{equation}
    \omega(p, \gamma + \vec{v}) = \omega(p - \vec{v}, \gamma)
\end{equation}

For uni-periodic case, algorithm \ref{alg:periodic} shows the pesudo code of the method computing winding number with respect to a periodically extended curve (section \ref{sec:periodic_curves}) and algorithm \ref{alg:uni_contriactible} illustrated the method handling contractible case.

For bi-periodic case, algorithm \ref{alg:bi_contriactible} demonstrates the contractible case and algorithm \ref{alg:biperiodic} presents the bi-periodic winding number computation implementation. 

\subsection{Pairing Curves in Bi-periodic Case}

In this section, we describe the algorithm detial of pairing the extended curves properly. Given a set of lift curves, we firstly split them to three set as defined in proposition \ref{thm:main}. Suppose $A = \{\alpha_i\}_{i = 1}^m$ and $B = \{\beta_i\}_{i = 1}^m$, the output of pairing algorithm is two curve sets $\{\tilde{\gamma}_i\}$ and $\{\bar{\gamma}_i\}$, such that $\tilde{\gamma}_i = \alpha_{i, (r, s)}$ and $\bar{\gamma}_i = \beta_{l_i, (r, s)} + \vec{v}_i$ is a proper pair. The term \textit{proper} means that $\bar{\gamma}_{i, (r, s)}$ is the first periodically extended curve on the left of $\tilde{\gamma}_i$.

If we directly set $\bar{\gamma}_i = \beta_i$, the band defined by periodically extended loop pairs may overlap, as illustrated in Figure \ref{fig:pair_case}. We can solve this issue by proper pairing strategy.

\begin{figure}[h]
    \includegraphics[width=0.9\linewidth]{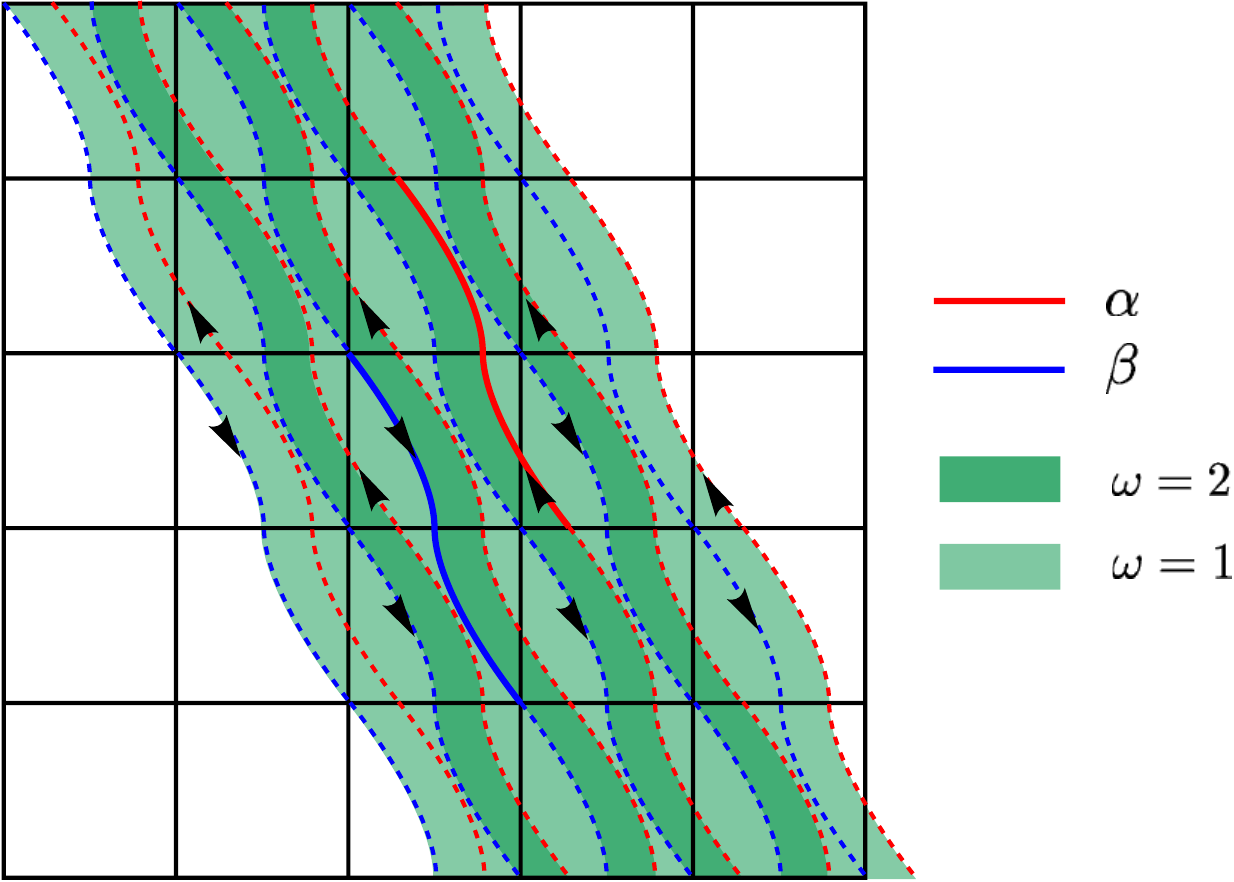}
    \caption{If the original lifted curve $\alpha$ and $\beta$ are in improper place, directly use pairing will lead to overlap between regions defined by $\alpha_{r, s}$ and $\beta_{r, s}$. The original lift of $\alpha$ and $\beta$ are represented by solid curves and other copies are illustrated by dashed curves. Winding number field $\omega$ defined by curves are plotted, overlapped regions have $\omega = 2$.}
    \label{fig:pair_case}
\end{figure}

Firstly, we define a \textit{ToLeft} predicate (see algorithm \ref{alg:to_left}) between curves $\gamma_{k, (i, j)}$. We say that $\gamma_{k^\prime, (i^\prime, j^\prime)}$ is \textit{on the left of} $\gamma_{k, (i, j)}$ if for some point $p \in \gamma_{k^\prime, (i^\prime, j^\prime)}$, the winding number $\omega(p, \gamma_{k, (i, j)})$ is positive. In our setup, the ToLeft predicate defines a partial order in set $\{\gamma_{k, (i, j)}\}_{i, j \in \mathbb{Z}}$ since these curves are insersection-free and periodically extended with same extenison per period.

With this predicate, finding $\bar{\gamma}_k$ likes the process of finding "smallest number". Algorithm \ref{alg:pair_loop} shows an algorithm find the smallest $\beta_{(i, j)}$ on the left of $\alpha_{(0, 0)}$ and Algorithm \ref{alg:pair_loops} generate the pairs $\tilde{\gamma}_i$ and $\bar{\gamma}_i$ properly.

\begin{algorithm}[H]
    \caption{PeriodicExtendedWindingNumber}\label{alg:periodic}
    \begin{algorithmic}
    \Require 2D path $\gamma$, test point $p$, tolerance value $\epsilon$.
    \Ensure Winding number of the pre-images of $\gamma$ w.r.t. $p$.
    \State $l \gets LeftMostContorlPoint(\gamma).x$
    \State $r \gets RightMostControlPoint(\gamma).x$ 
    \Comment {{\color{blue} Use $y$ coordinates if $|l - r| = 0$. Here we omit those $y$-code.}}

    \State $\vec{v} \gets \gamma(1) - \gamma(0)$ 

    \For{$k > 0$}
        \State $p^\prime = p - k \vec{v}$
        \If{$p.x - k < l$}
            \State \textbf{break}
        \EndIf
        \State $\omega \gets \omega + WindingNumber(p^\prime, \gamma, \epsilon)$
        \State $\omega \gets \omega - WindingNumberLineSegment(p^\prime, \gamma(0), \gamma(1))$
    \EndFor

    \For{$k > 0$}
        \State $p^\prime \gets p + k \vec{v}$
        \If{$p.x + k > r$}
            \State \textbf{break}
        \EndIf
        \State $\omega \gets \omega + WindingNumber(p^\prime, \gamma, \epsilon)$
        \State $\omega \gets \omega - WindingNumberLineSegment(p^\prime, \gamma(0), \gamma(1))$
    \EndFor

    \State $\omega \gets \omega + WindingNumberLine(p^\prime, \gamma(0), \gamma(1))$

    \State \textbf{Return} $\omega$

    \end{algorithmic}
\end{algorithm}

\begin{algorithm}[H]
    \caption{UniperiodicContractibleWindingNumber}\label{alg:uni_contriactible}
    \begin{algorithmic}
    \Require 2D closed loop $\gamma$, test point $p$, tolerance value $\epsilon$.
    \Ensure Winding number of the pre-images of $\gamma$ w.r.t. $p$.

    \State $l \gets LeftMostContorlPoint(\gamma).x$
    \State $r \gets RightMostControlPoint(\gamma).x$ 
    \Comment {{\color{blue} Use $y$ coordinates if $|l - r| = 0$. Here we omit those $y$-code.}}

    \For{$k > 0$}
        \State $p^\prime = p - k \vec{v}$
        \If{$p.x - k < l$}
            \State \textbf{break}
        \EndIf
        \State $\omega \gets \omega + WindingNumber(p^\prime, \gamma, \epsilon)$
    \EndFor

    \For{$k > 0$}
        \State $p^\prime \gets p + k \vec{v}$
        \If{$p.x + k > r$}
            \State \textbf{break}
        \EndIf
        \State $\omega \gets \omega + WindingNumber(p^\prime, \gamma, \epsilon)$
    \EndFor

    \State \textbf{Return} $\omega$

    \end{algorithmic}
\end{algorithm}

\begin{algorithm}[H]
    \caption{BiperiodicContractibleWindingNumber}\label{alg:bi_contriactible}
    \begin{algorithmic}
    \Require 2D closed loop $\gamma$, test point $p$, tolerance value $\epsilon$.
    \Ensure Winding number of the pre-images of $\gamma$ w.r.t. $p$.

    \State $l \gets LeftMostContorlPoint(\gamma).x$
    \State $r \gets RightMostControlPoint(\gamma).x$ 
    \State $t \gets TopMostControlPoint(\gamma).y$
    \State $b \gets BottomMostControlPoint(\gamma).y$

    \State $h \gets \lceil t - b \rceil, v \gets \lceil r - l \rceil$
    \State $\omega \gets 0$
    \For{$i$ \textbf{in} $[-h, h]$}
        \For{$j$ \textbf{in} $[-v, v]$}
            \State $p^\prime \gets p - (i , j)$
            \State $\omega \gets \omega + WindingNumber(p^\prime, \gamma)$
        \EndFor
    \EndFor

    \State \textbf{Return} $\omega$

    \end{algorithmic}
\end{algorithm}

\begin{algorithm}[H]
    \caption{PeriodicPairWN}\label{alg:periodic_pair}
    \begin{algorithmic}
    \Require 2D path $\gamma_1$, $\gamma_2$, test point $p$, tolerance value $\epsilon$.
    \Ensure Winding number of the pre-images of $\gamma$ w.r.t. $p$.

    \State $\omega_1 \gets PeriodicExtendedWindingNumber(p, \gamma_1, \epsilon)$

    \State $\omega_2 \gets PeriodicExtendedWindingNumber(p, \gamma_2, \epsilon)$

    \State \textbf{Return} $\omega_1 + \omega_2$
    \end{algorithmic}
\end{algorithm}

\begin{algorithm}[H]
    \caption{ComputeBiperiodicWindingNumber}\label{alg:biperiodic}
    \begin{algorithmic}
    \Require test point $p$, tolerance value $\epsilon$.
    \Ensure Winding number of the pre-images of $\gamma$ w.r.t. $p$.
    
    \State $\omega \gets 0$


    \For{All pairs $\tilde{\gamma}_k, \bar{\gamma}_k$}
        \For{$i$ \textbf{in} $[0, n)$}
            \For{$j$ \textbf{in} $[-|m|, |m|]$}
                \State $\omega \gets \omega + PeriodicPairWN(p - (i, j), \tilde{\gamma}_k, \bar{\gamma}_k)$
            \EndFor
        \EndFor
    \EndFor

    \State \textbf{Return} $\omega$

    \end{algorithmic}
\end{algorithm}

\begin{algorithm}[H]
    \caption{ToLeft}\label{alg:to_left}
    \begin{algorithmic}
    \Require 2D path $\gamma_1$, $\gamma_2$.
    \Ensure \textbf{true} if $\gamma_1$ is on the left of $\gamma_2$, otherwise \textbf{false}.

    \State $\omega = PeriodicExtendedWindingNumber(\gamma_1(0.5), \gamma_2)$

    \State \textbf{Return} $\omega > 0$
    \end{algorithmic}
\end{algorithm}

\begin{algorithm}[H]
    \caption{PairLoop} \label{alg:pair_loop}
    \begin{algorithmic}
        \Require 2D path $\alpha \in A$, $\beta \in B$.
        \Ensure $\vec{v} \in \mathbb{R}^2$, such that the region defined by the lift curves of $\alpha$ and $\beta + v$ containes no other preimages of $\alpha$ or $\beta$

        \State $p, q \gets$ homology coefficients of set A 

        \State $\vec{v} = q \vec{e}_2$

        \For{$i$ \textbf{in} $[0, p)$}
            \For {$j$ \textbf{in} $[-q, q]$}
                \If {$ToLeft(\beta + (i, j), \alpha)$ and $ToLeft(\beta + (i, j), \beta + \vec{v})$}
                    \State $\vec{v} \gets (i, j)$
                \EndIf
            \EndFor
        \EndFor
        \State \textbf{Return} $\vec{v}$
    \end{algorithmic}
\end{algorithm}

\begin{algorithm}[H]
    \caption{PairLoops} \label{alg:pair_loops}
    \begin{algorithmic}
        \Require 2D path $\gamma_i$.
        \Ensure Winding number of the pre-images of $\gamma$ w.r.t. $p$.
    
        \State $A, B, C \gets SplitLoops(\{\gamma_i\})$

        \State $n \gets SizeOf(A)$
        \For{$i$ \textbf{in} $[0, n)$}
            \State $\tilde{\gamma}_i = \alpha_i = A[i]$

            \State $\bar{\gamma}_i = NULL$
            \State $toRemove = NULL$

            \For{$\beta$ \textbf{in} $B$}
                \State $\bar{\gamma} \gets PairLoop(\alpha_i, \beta)$
                \If {$\bar{\gamma}_i$ is $NULL$ or $ToLeft(\bar{\gamma}, \bar{\gamma}_i)$}
                    \State $\bar{\gamma}_i \gets \bar{\gamma}$
                    \State $toRemove \gets \bar{\gamma}$
                \EndIf
            \EndFor

            \State Remove $\alpha_i$ from $A$
            \State Remove $toRemove$ from $B$
        \EndFor

        \State \textbf{Return} $\{\tilde{\gamma}_i\}$ and $\{\bar{\gamma}_i\}$
        \end{algorithmic}
\end{algorithm}

\end{document}